\titleformat*{\section}{\large\bfseries}
\titleformat*{\subsection}{\bfseries}
\theoremstyle{plain}
  \newtheorem{theorem}{Theorem}[section]
  \newtheorem{prop}[theorem]{Proposition}
  \newtheorem{lemma}[theorem]{Lemma}
\theoremstyle{definition}
\theoremstyle{remark}
\newcommand{\p}{\partial} 
\newcommand{\nn}{\nonumber} 
\def\bea{\begin{eqnarray}}
\def\eea{\end{eqnarray}}
\newcommand{\beq}{\begin{equation}}
\newcommand{\eeq}{\end{equation}}
\DeclareMathOperator{\Tr}{Tr}
\newcommand{\al}{\alpha}
\newcommand{\be}{\beta}
\renewcommand{\phi}{\varphi} 
\newcommand{\ep}{\epsilon}
\newcommand{\la}{\lambda}
\newcommand{\MeijerG}[8][\Big]{G^{{ #2 },{ #3 }}_{{ #4 },{ #5 }} #1
( \begin{matrix} #6 \\ #7 \end{matrix}\, #1\vert\, #8 #1)}
\newcommand{\hypergeometric}[6][\bigg]{\,{}_{#2} F_{#3} #1
( \begin{matrix} #4 \\ #5 \end{matrix}\, #1\vert\, #6 #1)}
\newcommand{\fm}[2]{\,{}_{#1} F_{#2}}
\newcommand{\ff}{\tensor*[_{1}]{{F}}{_1}}
\newcounter{app}
\newcounter{sapp}[app]
\newcommand{\ds}{\displaystyle}
\def\nsection#1{\setcounter{equation}{0}\section{#1}}
\title{\bfseries Integrable structure of products of finite complex Ginibre random matrices}
\author{\Large  Vladimir
  V.~Mangazeev$^{1}$ and Peter J. Forrester$^{2}$}
\begin{document}

\date{\today}


\maketitle

\begin{center}
$^1$\large
Department of Theoretical Physics,
         Research School of Physics and Engineering,\\
    Australian National University, Canberra, ACT 0200, Australia
\end{center}
\begin{center}{$^2$\large
School of Mathematics and Statistics, \\
ARC Centre of Excellence for Mathematical
 and Statistical Frontiers, \\
 The University of Melbourne,
Victoria 3010, Australia}
\end{center}

\begin{abstract}
We consider the squared singular values of the product of $M$ standard
complex Gaussian matrices. Since the
squared singular values form a determinantal point process with a particular Meijer
G-function kernel, the gap probabilities are given by a Fredholm determinant based on this
kernel.
 It was shown by Strahov \cite{St14}
that a hard edge
scaling limit of the gap probabilities is described by Hamiltonian differential equations which
can be formulated as an isomonodromic deformation system similar to
the theory of the Kyoto school. We generalize this result
to the case of finite matrices by first finding a representation of the finite
kernel in integrable form. As a result we obtain
the Hamiltonian structure for a finite size matrices and formulate it
in terms of a $(M+1) \times (M+1)$ matrix Schlesinger system. The case $M=1$ reproduces
the Tracy and Widom theory which results in the Painlev\'e V equation for
the $(0,s)$ gap probability.  Some integrals of motion for $M = 2$ are identified, and a
coupled system of differential equations in two unknowns is presented which uniquely determines the
corresponding $(0,s)$ gap probability.

\end{abstract}

\newpage

\nsection{Introduction}
Consider a point process on the line. The process is said to be determinantal
if the $k$-point correlation functions $\rho_{(k)}$ have the form
\begin{equation}\label{a0}
\rho_{(k)}(x_1,\dots, x_k) = \det [ K(x_i, x_j) ]_{i,j=1}^k,
\end{equation}
for $K(x,y)$ --- the so-called correlation kernel ---independent of $k$.
The eigenvalues of many ensembles of complex Hermitian matrices,
and their various scaling limits are well known examples of determinantal point processes,
as are the positions of nonintersecting random walkers on the line; see e.g.~the monographs
\cite[Ch.~5]{Forrester2010} and \cite{Ka15}.

For a one-dimensional point process, let $E(k;J)$ denote the probability that
there are exactly $k$ eigenvalues in the interval $J$.
With a slight abuse of notation, introduce the generating function
\begin{equation}\label{a1}
E(\lambda;J) = \sum_{k=0}^\infty (1 - \lambda)^k E(k;J).
\end{equation}
A characterising feature of the determinant case is that (\ref{a1}) can be expressed
as a Fredholm determinant
\begin{equation}\label{a2}
E(\lambda;J) = \det ( \mathbb I - \lambda \mathbb K_J).
\end{equation}
Here $\mathbb K_J$ denotes the integral operator on $J$ with kernel $K(x,y)$,
as appears in (\ref{a0}).

Suppose furthermore that the correlation kernel has the additional structure
\begin{equation}\label{a3}
K(x,y) = {\sum_{i=1}^r \frac{f_i(x) g_i(y)}{x - y}},
\end{equation}
where $\sum_{i=1}^r f_i(x) g_i(x) = 0$. Kernels of the form (\ref{a3}) are
termed integrable in \cite{IIKS90}. They have the general property that the
corresponding resolvent kernel
is also an integrable kernel.
The simplest case of (\ref{a3}) occurs when $r=2$ and $f_2 = f_1$, $g_2 = - g_1$, giving
\begin{equation}\label{a4}
K(x,y) = \frac{f(x) g(y) - g(x) f(y)}{x - y}.
\end{equation}
This is well known in random matrix theory. It results from  unitary invariant
ensembles, as a consequence of
the Christoffel-Darboux summation formula (see e.g.~\cite[Prop.~5.1.3]{Forrester2010}).
For example, with
\begin{equation}\label{a5}
f(x) = \frac{1}{\pi} \sin \pi x, \quad g(x) = \frac{1}{\pi} \cos \pi x,
\end{equation}
(\ref{a4}) gives the  sine kernel
\begin{equation}\label{a6}
K(x,y) = \frac{\sin \pi (x - y)}{\pi (x - y) },
\end{equation}
which is the correlation kernel for complex Hermitian random matrices with
bulk scaling (see e.g.~\cite[Ch.~5]{Forrester2010}).

Note that (\ref{a5}) satisfies the first order matrix linear differential equation
\begin{equation}\label{a7}
{1 \over \pi} {d \over dx} \begin{bmatrix} f(x) \\ g(x) \end{bmatrix} =
 \begin{bmatrix} 0 & 1 \\ -1 & 0 \end{bmatrix}  \begin{bmatrix} f(x) \\ g(x)
 \end{bmatrix} .
\end{equation}
Tracy and Widom \cite{TW94c} shows that for kernels (\ref{a4}) with $(f,g)$
satisfying the first order matrix linear differential
equation (\ref{a7}) corresponding to classical orthogonal polynomials or their
scaling limits, quantities associated with Fredholm
determinant (\ref{a2}) satisfy an integrable (Hamiltonian) system of non-linear
differential equations. For certain intervals $J$
depending on a single parameter, this system could be integrated to yield
a characterisation of the logarithmic derivative
of (\ref{a2}) as the solution of a Painlev\'e equation in sigma form
(see e.g.~\cite[\S 8.1]{Forrester2010}). This work generalised, and in
fact was inspired by, the work of the Kyoto school \cite{JMMS80} in the case
of the sine kernel (\ref{a6}), in which results of
this type were first derived. See \cite[Ch.~9]{Forrester2010} for a text book treatment.
At this point it is worth mentioning other remarkable apperances of Painlev\'e
equations in the theory of integrable systems \cite{McCoy76,Zam94,BM06}.

The first study in random matrix theory to give rise to a kernel of the form
(\ref{a3}) with $r=3$ was that of the so-called
Pearcey kernel \cite{BH98,TW06,BK06}. It comes about as the critical scaling
of the matrix sum $t H + H_0$, where $H$ is a member
of the GUE (complex Hermitian random matrices) and $H_0$ is a fixed matrix with
half its eigenvalues at $+1$ and the
other half at $-1$; $t$ is a parameter.
The $f_i,g_i$ in (\ref{a3}) satisfy third order linear differential equations, and
Br\'ezin and Hikami \cite{BH98}
showed that the method  of \cite{TW94c} could be adapted to this setting,
obtaining a characterisation of the gap probability for $J$ a symmetrical interval
about the origin in terms of a pair of coupled
nonlinear equations. For the parameter dependent extension (the Pearcey process),
the kernel is again of the form (\ref{a3}) with $r=3$ and the
 $f_i,g_i$ satisfying third order linear differential equations. PDEs for the corresponding
 gap probabilities have been derived in \cite{AvM07}, and their numerical evaluation using
 the method of Bornemann \cite{Bo09} has been studied in \cite{BC13}.

More recently, the hard edge scaling (see Section \ref{S2.3}) of the squared
singular values of $M$ standard complex rectangular Gaussian random matrices
has been shown to be of the form (\ref{a3}) with $r=M+1$ \cite{KZ14}.
From this, Strahov \cite{St14}
generalised the approach of Tracy and Widom to derive a system of nonlinear partial
differential equations associated
with (\ref{a2}) in the case that $J$ is given by a disjoint union of positive
intervals $(a_{2j-1}, a_{2j})$, $j=1,\dots,m$,
\beq\label{J}
J =  \cup_{j=1}^m(a_{2j-1}, a_{2j})
\eeq
 He also
found the Hamiltonian of the associated dynamical system and derived its isomonodromic
representation. For a single
interval $J=(0,s)$, Witte and Forrester \cite{WF17} showed how these coupled equations
could be integrated in the case $M=1$
to reclaim results obtained originally in \cite{TW94b} for the Bessel kernel
(this reduction was also investigated in
\cite{St14}). The same task was carried out in the case $M=2$, leading to
the characterisation of the Fredholm determinant
in terms of the solution of a certain fourth order nonlinear ordinary differential
equation. The latter is lengthy; for a special
choice of parameters a much simpler third order equation was found upon the basis
of series expansions, but a proof has yet to be found.
Notwithstanding its complex nature, as an application the 4th order
equation was used to deduce the leading large $s$ form of the gap
probability. In a recent development, Claeys, Givotti and Stivigny \cite{CGS16}
used a Riemann Hilbert analysis to extend this result to the first three orders,
and also to general $M$.

The Bessel kernel as is relevant to the case $M=1$ results as a hard edge scaling limit
of the Laguerre kernel \cite{Fo93c}.
Tracy and Widom \cite{TW94c} applied their theory directly to the Laguerre kernel,
and integrated the resulting system of
coupled nonlinear equations in the case $J=(0,s)$ to obtain a characterisation of
(\ref{a2}) in terms of the solution of a
$\sigma$-Painlev\'e V equation; see also \cite{AvM95,FW01a,FO10,BC09}.
Taking the hard edge scaling limit of the latter
directly gives the $\sigma$-Painlev\'e III equation
characterising the Bessel kernel.

This motivates us to embark on an analogous study of the finite matrix sizes kernel
for the squared singular values of the product of $M$
rectangular complex Hermitian matrices. Specifically, in this work we show that
the corresponding kernel can be written in
integrable form (\ref{a3}), and that the analogue of Strahov's equations can be derived.
These equations can be written in
Hamiltonian form, and as the isomonodromic deformation of a linear system.
For $M=1$ it is shown that they are equivalent to the system of equations for the
gap probabilities associated with the Laguerre kernel, as isolated by Tracy and Widom \cite{TW94c}.
For $M=2$ several integrals of motion are deduced. Moreover, a coupled differential system in two unknowns
is presented which uniquely determines the gap probability for no eigenvalues in $(0,s)$.

\nsection{Singular values of products of complex Ginibre random matrices}
\subsection{The kernel}
Complex Ginibre matrices are random matrices with independent standard complex Gaussian
entries. Let $X_1,\dots, X_M$, $M \ge 1$ be a sequence of such matrices with $X_m$
of size $N_m \times N_{m-1}$ ($1 \le m \le M$), and define the product
\beq  \label{sing1}
Y_M = X_M X_{M-1} \cdots X_1.
\eeq
That the squared singular values of $Y_M$, or equivalently the eigenvalues of
$Y^{\dagger}_M Y_M$, $\mbox{Spec}(Y^\dagger_M Y_M)=(x_1,\ldots,x_n)$ form a determinantal
point process on $\mathbb{R}_{>0}$ was
 first established by Akemann, Ipsen and Kieburg \cite{AIK13}, and further insights
 were given by Kuijlaars and Zhang
 \cite{KZ14}. The work \cite{AIK13}
 extended that of
Akemann, Kieburg   and Wei \cite{AKW13} in the case that each $X_m$ is square.
A review of these recent developments is given in \cite{AI15}.
Here we record the explicit form of the
correlation kernel, which is given in terms of Meijer G-functions
(see the Appendix for the definition).

\begin{theorem}
Introduce the parameters
\beq \nu_m=N_m-N_0,\quad \nu_m\geq0 \quad m=0,\ldots,M, \quad n=N_0 \label{sing2}
\eeq
In terms of the Meijer G-function \cite{BS13,Luke69} define
\beq
Q_n(x)=\frac{1}{\ds n! \prod_{j=1}^M\Gamma(\nu_j+n+1)}
{G}^{M+1,0}_{1,M+1}
\left.\left(\begin{array}{c}-n \\0,\nu_1,\ldots,\nu_M\end{array}\right|x\right) \label{ker3}
\eeq
(this is eq.~(3.7) of \cite{KZ14} and eq.~(47) of \cite{AIK13}) and
\beq \label{ker2}
P_n(x)=
-{\prod_{j=0}^M\Gamma(\nu_j+n+1)}\>\>
{G}^{1,0}_{1,M+1}
\left.\left(\begin{array}{c}n+1 \\-\nu_0,-\nu_1,\ldots,-\nu_M\end{array}\right|x\right)
\eeq
(this is  eq.~(44) of \cite{AIK13}
and eq.~(3.11) of \cite{AIK13}).
The correlation kernel for the determinantal point process specifying the
statistical distribution of $\mbox{Spec}(Y^\dagger_M Y_M)$ is given by \cite{AIK13}
\beq
K^M_n(x,y)=\sum_{k=0}^{n-1}P_k(x)Q_k(y)  \label{sing11}
\eeq
or alternatively \cite[eq.~(5.4)]{KZ14}
\footnote{There is a misprint in (5.4) of \cite{KZ14} where
the product should start from $j=0$ instead of $j=1$.}
\beq
K_n^M(x,y)=-\prod_{j=0}^M(n+\nu_j)\int_{0}^1P_{n-1}(ux)Q_n(uy)du.\label{edge6}
\eeq
\end{theorem}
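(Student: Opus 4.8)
The first two displayed formulae are a recollection rather than a new result. That $\mathrm{Spec}(Y_M^{\dagger}Y_M)$ is a determinantal point process on $\mathbb R_{>0}$, that its kernel is the bilinear sum (\ref{sing11}) built from a biorthonormal pair of families $\{P_k\}$, $\{Q_k\}$ with $\int_0^\infty P_j(x)Q_k(x)\,dx=\delta_{jk}$, and that the members of these families are the Meijer $G$-functions (\ref{ker3}) and (\ref{ker2}), are all established in \cite{AIK13} and revisited in \cite{KZ14}; here one only has to fix conventions --- in particular the Gamma-function prefactors in (\ref{ker3})--(\ref{ker2}), the overall sign in (\ref{ker2}), and the corrected product range in (\ref{edge6}) (see the footnote). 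So the single genuine task is to prove the equivalence of the bilinear-sum form (\ref{sing11}) and the single-integral form (\ref{edge6}).

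The plan is to do this by a telescoping argument. For $k\ge 0$ set
\[
R_k(x,y):=\Big(\prod_{j=0}^M(k+\nu_j)\Big)\int_0^1 P_{k-1}(ux)\,Q_k(uy)\,du .
\]
The key claim to be proved is the pointwise identity $P_k(x)Q_k(y)=R_k(x,y)-R_{k+1}(x,y)$ for every $k$. Granting it, summing over $0\le k\le n-1$ collapses the right-hand side to $R_0(x,y)-R_n(x,y)$, and $R_0\equiv 0$ because the prefactor $\prod_{j=0}^M(0+\nu_j)$ contains the factor $\nu_0=0$ --- which is precisely why the product in (\ref{edge6}) must begin at $j=0$; hence $K_n^M(x,y)=\sum_{k=0}^{n-1}P_k(x)Q_k(y)=-R_n(x,y)$, which is (\ref{edge6}). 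To establish the identity $P_kQ_k=R_k-R_{k+1}$ I would use the first-order ``ladder'' relations for the Meijer $G$-functions in (\ref{ker3})--(\ref{ker2}): these come from the $(M+1)$-st order differential equations satisfied by $G^{M+1,0}_{1,M+1}$ and $G^{1,0}_{1,M+1}$, and they relate $Q_k$ to $Q_{k-1}$, and $P_k$ to $P_{k-1}$, with integer shifts of the parameters $\nu_j$; combined with an integration by parts under the scaling $u\mapsto(ux,uy)$ they turn the bilinear product $P_{k-1}(ux)Q_k(uy)$ into a $u$-derivative of a combination of $P_{k-1},P_k,Q_{k-1},Q_k$, so that $\int_0^1$ acts on a total derivative and reproduces $P_kQ_k$ after accounting for the $u=0$ boundary value.

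An alternative route, which avoids isolating these ladder relations explicitly, is via the Mellin--Barnes representations of $P_k$ and $Q_k$: write each as a single contour integral in which the index $k$ enters only through ratios of Gamma functions, sum the resulting terminating geometric-type series in $k$, and recognise the factor $1/(s-t)$ times the difference of two powers that appears as an integral $\int_0^1(\cdots)\,du$ after the substitution $x\mapsto ux$, $y\mapsto uy$, the fixed-$u$ integrand then being the Mellin--Barnes form of $P_{n-1}(ux)Q_n(uy)$ and the Gamma prefactors collapsing to $-\prod_{j=0}^M(n+\nu_j)$. In either approach the conceptual content is modest; I expect the main obstacle to be purely computational --- pinning down the exact contiguous/ladder relations for these particular Meijer $G$-functions, carrying all the $\Gamma(\nu_j+k+1)$ normalisations through without error, and verifying that the index shifts land precisely on the pair $(P_{n-1},Q_n)$ with exactly the stated constant.
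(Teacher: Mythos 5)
Your reading of the statement is the right one: the paper offers no proof of this theorem at all --- it is a recollection of results from \cite{AIK13} and \cite{KZ14}, with the only original content being the footnoted correction of the product range in \eqref{edge6}. The one genuinely checkable assertion is the equivalence of \eqref{sing11} and \eqref{edge6}, and your telescoping plan does establish it; moreover it closes cleanly using only ingredients the paper itself records later. Indeed, with $R_k$ as you define it, the ladder relations \eqref{ker10} and \eqref{ker11} give
\begin{equation*}
R_k(x,y)-R_{k+1}(x,y)=\int_0^1\Big\{\big[(u\partial_u-k)P_k(ux)\big]Q_k(uy)+P_k(ux)\big[(u\partial_u+k+1)Q_k(uy)\big]\Big\}\,du,
\end{equation*}
since $\delta_{x}$ acting on $P_k$ (resp.\ $Q_k$) and evaluated at $ux$ (resp.\ $uy$) is $u\partial_u$ of the composite; the non-derivative terms combine to $\int_0^1 P_kQ_k\,du$, and a single integration by parts in $\int_0^1 u\,\partial_u[P_k(ux)Q_k(uy)]\,du$ produces $P_k(x)Q_k(y)-\int_0^1 P_kQ_k\,du$ (the $u=0$ boundary term vanishes because $P_k$ is a polynomial and $uQ_k(uy)\to0$, even allowing for logarithmic behaviour of the Meijer $G$-function at the origin). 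Hence $P_kQ_k=R_k-R_{k+1}$ exactly as claimed, $R_0=0$ because of the factor $\nu_0=0$, and the sum telescopes to $-R_n$. This differs from the route actually taken in \cite{KZ14}, which obtains \eqref{edge6} from the double Mellin--Barnes representation of the kernel by writing $1/(s-t)=\int_0^1u^{s-t-1}\,du$ (essentially your ``alternative route''); your telescoping version is more elementary, makes the necessity of starting the product at $j=0$ transparent, and is in effect the inverse of the manipulation the paper later performs in passing from \eqref{edge6} to \eqref{ker37}. The only caveat is that your write-up defers the verification of the ladder relations and the boundary term; as shown above both are immediate, the former being exactly Proposition \ref{prop2} of the paper.
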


We remark that $P_n(x)$ is a polynomial of degree $n$,
and as revised in Appendix A,
it can alternatively be written as a generalised hypergeometric function.

The singular values of products of complex Ginibre matrices is one of
a number of random matrix ensembles which gives rise to a correlation
kernel of the form (2.5), with $P_n(x)$, $Q_n(y)$ given in terms of
Meijer G-functions. Others include the Cauchy two-matrix model
\cite{BGS14}, the closely related Bures ensemble of random density
matrices \cite{FK16}, the singular values of products of complex
Ginibre matrices and their inverses \cite{Fo14}, and the singular values
of products of truncated unitary matrices \cite{KKS16}.

\subsection{Properties of biorthogonal functions $P_n(x)$ and $Q_n(x).$}\label{S33}

Following [16, 29], for future use we make note of several properties of 
the biorthogonal functions $P_n(x)$ and $Q_n(x)$,
following essentially from their definition as Meijer G-functions.
 We start with

\begin{prop}\label{prop1}
Let $\delta_x = x {d \over dx}$. We have
\bea
&&\prod_{i=0}^M(\delta_x+\nu_i) P_n(x)=x(\delta_x-n)P_n(x), \label{ker8} \\
&&\prod_{i=0}^M(\delta_x-\nu_i) Q_n(x)=(-1)^M x(\delta_x+n+1)Q_n(x). \label{ker9}
\eea
\end{prop}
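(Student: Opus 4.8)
My plan is to obtain both identities directly from the observation that, up to an irrelevant constant, each of $P_n$ and $Q_n$ is a Meijer $G$-function and hence is annihilated by the associated $(M+1)$-st order linear differential operator. Recall (see \cite{Luke69}, or the Mellin--Barnes representation of Appendix~A) that, writing $\delta_z := z\,d/dz$, a function $w(z) = G^{m,n}_{p,q}(z)$ with upper parameters $a_1,\dots,a_p$ and lower parameters $b_1,\dots,b_q$, $p\le q$, satisfies
\[
\Bigl[\,(-1)^{\,p-m-n}\, z\prod_{j=1}^{p}\bigl(\delta_z - a_j + 1\bigr)\;-\;\prod_{j=1}^{q}\bigl(\delta_z - b_j\bigr)\Bigr]\,w(z) = 0 .
\]
Since this equation is homogeneous, the constant prefactors in \eqref{ker3} and \eqref{ker2} play no role, and the proof reduces to substituting the two parameter strings occurring there and tidying up.

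First I would treat \eqref{ker9}. For $Q_n$ one has $(m,n,p,q) = (M+1,0,1,M+1)$, a single upper parameter $a_1 = -n$, and lower parameters $(b_1,\dots,b_{M+1}) = (0,\nu_1,\dots,\nu_M)$; here $p-m-n = -M$, so the overall sign is $(-1)^M$, and the equation reads
\[
\delta_x\prod_{i=1}^{M}(\delta_x-\nu_i)\,Q_n(x) = (-1)^M\, x\,(\delta_x+n+1)\,Q_n(x).
\]
By \eqref{sing2} we have $\nu_0 = N_0 - N_0 = 0$, and since $\delta_x,\delta_x-\nu_1,\dots$ are all polynomials in $\delta_x$ they commute freely, so the left-hand side equals $\prod_{i=0}^{M}(\delta_x - \nu_i)Q_n(x)$, which is \eqref{ker9}. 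Next, for \eqref{ker8} I would repeat the computation with $(m,n,p,q) = (1,0,1,M+1)$, upper parameter $a_1 = n+1$, and lower parameters $(-\nu_0,-\nu_1,\dots,-\nu_M)$; now $p-m-n = 0$, there is no sign, and the equation collapses to $\prod_{i=0}^{M}(\delta_x + \nu_i)P_n(x) = x(\delta_x - n)P_n(x)$, i.e.\ \eqref{ker8}.

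I do not expect a genuine obstacle here: the argument is essentially bookkeeping of parameters. The one point requiring care is fixing the normalisation convention for the Meijer $G$-equation --- the exponent $p-m-n$ that controls the sign, and the unit shift inside $\delta_z - a_j + 1$ --- because sources differ on whether the $G$-function has argument $z$ or $1/z$ and on the ordering of the parameters; I would cross-check that the conventions of \cite{AIK13,KZ14} underlying \eqref{ker3}--\eqref{ker2} agree with the one used above, for instance by confirming the case $M=1$, where $P_n,Q_n$ degenerate to Laguerre-type functions and \eqref{ker8}--\eqref{ker9} to the classical Laguerre differential equation. A fully self-contained alternative, avoiding any appeal to tabulated ODEs, is to argue straight from the Mellin--Barnes contour integral defining each $G$-function: since $\delta_x$ acts on $x^{s}$ as multiplication by $s$, the product operators on the left-hand sides act on the integrand by multiplication by $\prod_i(s+\nu_i)$, resp.\ $\prod_i(s-\nu_i)$, while the extra factor $x$ on the right, combined with the contour shift $s\mapsto s-1$, is absorbed using $\Gamma(s+1) = s\,\Gamma(s)$; equating the two resulting integrands then yields \eqref{ker8} and \eqref{ker9} directly.
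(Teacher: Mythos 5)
Your proposal is correct and is exactly the paper's argument: the paper disposes of this proposition in one line by citing the Meijer $G$-function differential equation \eqref{A4}, and your parameter bookkeeping (the sign $(-1)^{p-m-n}$ with $p-m-n=-M$ for $Q_n$ and $0$ for $P_n$, the shift $\delta_x-a_1+1$, and the identification $\nu_0=0$ so that $\delta_x=\delta_x-\nu_0$) is all accurate. The Mellin--Barnes cross-check you mention is a sound, if redundant, confirmation.
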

The proof  follows from the differential equation for Meijer G-functions \eqref{A4}.

\begin{prop}\label{prop1a}
Upon multiplication by $x$,
$P_n(x)$ and $Q_n(x)$ satisfy the recurrence relation
\begin{align}
&xP_{n}(x)=P_{n+1}(x)+\sum_{k=0}^M P_{n-k}(x)\>a_{k,n},\label{ker9a}\\
&xQ_{n}(x)=Q_{n-1}(x)+\sum_{k=0}^M Q_{n+k}(x)\>a_{k,n+k},\label{ker9b}
\end{align}
where
\beq
a_{k,n}=\prod_{j=0}^M(n-k+\nu_j+1)_k\>\>\sum_{j=0}^{k+1}\frac{(-1)^{j}}{j!(k+1-j)!}
\prod_{i=0}^M(n+1-j+\nu_i). \label{ker9c}
\eeq
\end{prop}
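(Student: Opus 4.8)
The proof rests on the biorthogonality $\int_0^\infty P_j(x)Q_k(x)\,dx=\delta_{jk}$, which is built into the normalisations in \eqref{ker3}--\eqref{ker2} (see \cite{AIK13,KZ14}), together with the fact that $P_n$ is monic of degree $n$. Since $xP_n(x)$ is a polynomial of degree $n+1$, it expands as $xP_n(x)=\sum_{m=0}^{n+1}c_{n,m}P_m(x)$ with $c_{n,m}=\int_0^\infty xP_n(x)Q_m(x)\,dx$; monicity forces $c_{n,n+1}=1$, and $c_{n,m}=0$ for $m\ge n+2$ since $Q_m$ is orthogonal to every polynomial of degree $<m$. Thus \eqref{ker9a} is equivalent to the two assertions $c_{n,m}=0$ for $m\le n-M-1$ and $c_{n,n-k}=a_{k,n}$ for $0\le k\le M$. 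Once \eqref{ker9a} is in hand, \eqref{ker9b} comes for free: multiplication by $x$ is symmetric for the pairing $\int_0^\infty\cdot\,dx$, so its matrix in the basis $\{Q_k\}$ is the transpose of its matrix in the basis $\{P_k\}$, and transposing the banded structure of \eqref{ker9a} reproduces \eqref{ker9b} verbatim, with the very same coefficients $a_{k,n+k}$.

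To compute the $c_{n,m}$ I would use two elementary inputs, both read straight off the Meijer G representations. First, the explicit monomial expansion of $P_n$, obtained by collecting residues in the Mellin--Barnes integral for $G^{1,0}_{1,M+1}$ --- equivalently the ${}_1F_M$ form recalled in Appendix A ---
\[
P_n(x)=\sum_{\ell=0}^n\frac{(-1)^{n-\ell}\,n!}{\ell!\,(n-\ell)!}\,\frac{\prod_{j=1}^M\Gamma(\nu_j+n+1)}{\prod_{j=1}^M\Gamma(\nu_j+\ell+1)}\,x^{\ell}.
\]
Second, the power moments of $Q_m$, which are simply the Mellin transform of $G^{M+1,0}_{1,M+1}$ evaluated at integer arguments,
\[
\int_0^\infty x^{\ell}Q_m(x)\,dx=\frac{\ell!\,\prod_{j=1}^M\Gamma(\nu_j+\ell+1)}{m!\,\prod_{j=1}^M\Gamma(\nu_j+m+1)\,(\ell-m)!},
\]
which vanishes for $0\le\ell<m$ because of the factor $1/(\ell-m)!=1/\Gamma(\ell-m+1)$. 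Substituting the first expansion into $c_{n,m}=\int_0^\infty xP_n Q_m\,dx$ and inserting the second, all Gamma factors telescope (via $\Gamma(\nu_j+\ell+2)/\Gamma(\nu_j+\ell+1)=\nu_j+\ell+1$, and $\nu_0=0$); after re-indexing the surviving sum by $t$ with $\ell=m-1+t$, one obtains a closed constant prefactor times
\[
\sum_{t=0}^{R}(-1)^{R-t}\binom{R}{t}\prod_{i=0}^M(\nu_i+m+t),\qquad R:=n-m+1 .
\]

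The final observation is that this sum is precisely the $R$-th forward difference at $0$ of the polynomial $t\mapsto\prod_{i=0}^M(\nu_i+m+t)$, which has degree $M+1$; hence it vanishes exactly when $R>M+1$, i.e. when $m\le n-M-1$, giving the $(M+1)$-band structure of \eqref{ker9a}. For the near-diagonal terms $m=n-k$ with $0\le k\le M$ one has $R=k+1\le M+1$, the finite difference is generically nonzero, and after the substitution $j=k+1-t$ --- which turns $\prod_i(\nu_i+n-k+t)$ into $\prod_i(\nu_i+n+1-j)$ and the binomial/factorial weights into $1/(j!\,(k+1-j)!)$, while the constant prefactor assembles into the Pochhammer product $\prod_{j=0}^M(n-k+\nu_j+1)_k$ --- one recovers exactly \eqref{ker9c}. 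The only place genuine care is needed, and hence the main obstacle, is this bookkeeping step: tracking the normalisation constants and the $j$-versus-$(j{+}1)$ index shifts through the re-indexings so that the compact Pochhammer form in \eqref{ker9c} emerges rather than an unreduced alternating sum (one should also check the harmless edge case $m=0$, where the spurious $t=0$ term vanishes because $\nu_0=0$). An essentially equivalent route, which some readers may prefer, bypasses the explicit expansions and instead iterates the first-order contiguous relations of the Meijer G-functions already underlying Proposition \ref{prop1}; identifying which shifted $P_m$ survive in $xP_n$ is then the same finite-difference vanishing.
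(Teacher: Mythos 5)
The paper offers no internal proof of this proposition --- it simply cites \cite[Section 4]{KZ14} --- so your argument can only be judged on its own terms. The derivation of \eqref{ker9a} is correct: $P_n$ is indeed monic (check the leading term of \eqref{A6}), the monomial expansion of $P_n$ and the Mellin moments $\int_0^\infty x^{\ell}Q_m\,dx$ are as you state, and I verified that the surviving sum is the $(n-m+1)$-st forward difference of the degree-$(M+1)$ polynomial $t\mapsto\prod_{i=0}^M(\nu_i+m+t)$, which gives both the band width $M+1$ and, after the re-indexing $j=k+1-t$, exactly the coefficient \eqref{ker9c} with the prefactor $\prod_{j=0}^M(n-k+\nu_j+1)_k$ (the $j=0$ factor absorbing $n!/(n-k)!$).

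The gap is in the claim that \eqref{ker9b} ``comes for free'' by transposition. That argument tells you what the coefficients must be \emph{if} $xQ_n$ admits an expansion in the $Q_l$; but the $Q_l$ are not polynomials, and the existence of such a finite expansion is precisely the content of \eqref{ker9b}. What duality actually yields is that $D(x)=xQ_n(x)-Q_{n-1}(x)-\sum_{k=0}^M a_{k,n+k}Q_{n+k}(x)$ has $\int_0^\infty x^{\ell}D(x)\,dx=0$ for all $\ell\ge0$. On $(0,\infty)$ vanishing of all moments does not force a function to vanish (Stieltjes' example $e^{-x^{1/4}}\sin x^{1/4}$), and since the $Q_m$ decay only like $\exp(-cx^{1/(M+1)})$ one is, for larger $M$, genuinely outside the range of any Carleman-type determinacy criterion. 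The repair uses the specific Mellin structure: by \eqref{ker14} the Mellin transform of $D$ equals that of $Q_n$ multiplied by
\[
\frac{\prod_{j=0}^M(t+\nu_j)-\prod_{j=0}^M(n+\nu_j)}{t-n}\;-\;\sum_{k=0}^M a_{k,n+k}\,\frac{(t-n-k)_k}{\prod_{j=0}^M(n+\nu_j+1)_k},
\]
which is a polynomial in $t$ of degree at most $M$ (the pole at $t=n$ cancels). Your moment computation shows this polynomial vanishes at $t=\ell+1$ for every integer $\ell\ge n$, hence identically, and $D\equiv0$ follows by Mellin inversion; equivalently, one can verify this rational identity in $t$ directly, which proves \eqref{ker9b} independently of \eqref{ker9a}. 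With this supplement the proof is complete.
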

This proposition was proved in \cite[Section 4]{KZ14}.

Next we note
\begin{prop}\label{prop2}
Upon application of the operator $\delta_x$, $P_n(x)$ and $Q_n(x)$ satisfy the recurrence
\begin{align}
 \prod_{i=0}^M(n+\nu_i)\> P_{n-1}(x)=&(\delta_x-n)P_n(x), \label{ker10} \\
\prod_{i=0}^M(n+\nu_i+1)\> Q_{n+1}(x)=&(-\delta_x-n-1)Q_n(x). \label{ker11}
\end{align}
\end{prop}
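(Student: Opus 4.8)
The plan is to derive \eqref{ker10} and \eqref{ker11} directly from the Mellin--Barnes representations of $P_n$ and $Q_n$ as Meijer $G$-functions, using the contour definition recalled in Appendix~A. Writing \eqref{ker2} and \eqref{ker3} as contour integrals in a variable $s$, one sees that the entire dependence on the index $n$ is carried by a single gamma factor: apart from the $n$-independent prefactors $\bigl(n!\prod_{j=1}^M\Gamma(\nu_j+n+1)\bigr)^{-1}$ for $Q_n$ and $-\prod_{j=0}^M\Gamma(\nu_j+n+1)$ for $P_n$, the integrands are, respectively,
\[
\frac{\Gamma(-s)\,\prod_{j=1}^M\Gamma(\nu_j-s)}{\Gamma(-n-s)}\,x^{s}
\qquad\text{and}\qquad
\frac{\Gamma(-s)}{\Gamma(n+1-s)\,\prod_{j=1}^M\Gamma(1+\nu_j+s)}\,x^{s},
\]
so that only $\Gamma(-n-s)$, resp.\ $\Gamma(n+1-s)$, depends on $n$; since both sit in the denominator, the contour $L$ may be taken the same for every $n$ and no pole is crossed when $n$ is shifted.

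Now $\delta_x=x\,d/dx$ acts under the integral sign as multiplication by $s$, so $\delta_x-n$ inserts the factor $s-n$ into the $P_n$-integrand while $-\delta_x-n-1$ inserts $-(s+n+1)$ into the $Q_n$-integrand. The single computational input is the functional equation $\Gamma(z+1)=z\Gamma(z)$, used in the forms
\[
\frac{s-n}{\Gamma(n+1-s)}=\frac{-1}{\Gamma(n-s)},\qquad
\frac{-(s+n+1)}{\Gamma(-n-s)}=\frac{1}{\Gamma\bigl(-(n+1)-s\bigr)},
\]
which convert the $P_n$-integrand into that of $P_{n-1}$ and the $Q_n$-integrand into that of $Q_{n+1}$. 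Reinstating the prefactors finishes the argument, with the help of $\nu_0=0$: for $P_n$ the prefactor in \eqref{ker2} is already a product over $j=0,\dots,M$, and the ratio of its values at $n$ and $n-1$ collapses, via $\Gamma(\nu_j+n+1)/\Gamma(\nu_j+n)=n+\nu_j$, to $\prod_{j=0}^M(n+\nu_j)$; for $Q_n$ the ratio of the prefactors of \eqref{ker3} at $n+1$ and $n$ equals $(n+1)\prod_{j=1}^M(n+\nu_j+1)$, which is $\prod_{j=0}^M(n+\nu_j+1)$ because $n+1=n+\nu_0+1$. These are precisely the coefficients multiplying $P_{n-1}$ in \eqref{ker10} and $Q_{n+1}$ in \eqref{ker11}. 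Equivalently, \eqref{ker10}--\eqref{ker11} are the instances, specialised to the parameter strings of \eqref{ker2}--\eqref{ker3}, of the standard Meijer $G$-function contiguous relations that raise or lower a top parameter (see \cite{Luke69}).

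There is no essential difficulty here; the one point that needs care is the bookkeeping --- tracking the signs through $\Gamma(n+1-s)=(n-s)\Gamma(n-s)$ and $\Gamma(-n-s)=-(n+s+1)\Gamma\bigl(-(n+1)-s\bigr)$, and absorbing the $\nu_0=0$ gamma factor $\Gamma(\nu_0+n+1)=n!$ correctly when passing between products over $j=1,\dots,M$ and products over $j=0,\dots,M$. An alternative that avoids contour integrals altogether is to induct on $n$, feeding the recurrences of Proposition~\ref{prop1a} into the differential relations of Proposition~\ref{prop1}; this also works, but it is heavier and less transparent than the direct computation above, which is in any case the route by which the underlying Meijer $G$ identities are themselves established.
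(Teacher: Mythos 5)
This is essentially the paper's own proof: the paper likewise acts with $\delta_x$ under Mellin--Barnes representations of $P_n$ and $Q_n$ (Eqs.~\eqref{ker12} and \eqref{ker14}, quoted from \cite{KZ14}) and shifts the single $n$-dependent Gamma factor via $\Gamma(z+1)=z\Gamma(z)$, the only cosmetic difference being that you start from the Meijer $G$ contour definition applied to \eqref{ker2}--\eqref{ker3} while the paper uses the equivalent forms with $\Gamma(t-n)$ in the numerator. One bookkeeping caveat: taken literally with the prefactor $-\prod_{j=0}^M\Gamma(\nu_j+n+1)$ of \eqref{ker2}, your two signs in the $P$-case combine to give $-\prod_{j=0}^M(n+\nu_j)\,P_{n-1}$ rather than \eqref{ker10}; the sign comes out right once one uses the normalisation actually employed elsewhere in the paper (the prefactor consistent with \eqref{ker12} and \eqref{A6} carries an extra $(-1)^n$), so the slip is inherited from \eqref{ker2} rather than from your method.
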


\begin{proof}
From \cite[Eq.~(3.8)]{KZ14} we have
\beq
P_n(x)=\frac{1}{2\pi i}\>{\ds\prod_{j=0}^M\Gamma(n+\nu_j+1)}
\oint\limits_{\Sigma_n}
\frac{\Gamma(t-n)}{\ds\prod_{j=0}^M\Gamma(t+\nu_j+1)}x^t dt, \label{ker12}
\eeq
where $\Sigma_n$ is a closed contour encircling $0,\ldots,n$ in a positive direction.

Let us calculate the RHS of \eqref{ker10}. We have the identity
\beq
(\delta_x-n)[\Gamma(t-n)x^t]=\Gamma(t-(n-1))x^t. \label{ker13}
\eeq
Therefore, the pole of the integrand at $t=n$ disappears, the contour shrinks to
$\Sigma_{n-1}$ and we come to
to the integral representation for $P_{n-1}(x)$.
The extra factor in the LHS of \eqref{ker10} comes from the pre-factor in \eqref{ker12}.

Similarly for $Q_n(x)$ we have from (3.6) in \cite{KZ14}
\beq
Q_n(x)=\frac{1}{2\pi i\>\ds\prod_{j=0}^M\Gamma(n+\nu_j+1)}\>\>
{\int\limits_{-i\infty}^{i\infty} }\>\frac{\ds\prod_{j=0}^M\Gamma(t+\nu_j)}{\Gamma(t-n)}x^{-t} dt.
\label{ker14}
\eeq
Using the identity
\beq
(-\delta_x-n-1)\left[\frac{x^{-t}}{\Gamma(t-n)}\right]=\frac{x^{-t}}{\Gamma(t-(n+1))} \label{ker15}
\eeq
we immediately obtain \eqref{ker11}.

\end{proof}

A generalisation of  {\bf Proposition \ref{prop2}} is
\begin{prop}\label{prop3}
We have
\begin{align}
&\prod_{j=0}^M(n-m+\nu_j+1)_m \>P_{n-m}(x)=
(\delta_x-n)_m \>P_n(x),\label{ker18}\\
&\prod_{j=0}^M(n+\nu_j+1)_m\> Q_{n+m}(x)=
(-1)^m(\delta_x+n+1)_m\>Q_n(x).\label{ker19}
\end{align}

\end{prop}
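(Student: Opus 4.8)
The plan is to obtain \eqref{ker18} and \eqref{ker19} by iterating the one-step relations of Proposition~\ref{prop2}, working as in the proof of that proposition directly with the Mellin--Barnes representations \eqref{ker12} and \eqref{ker14}; an equivalent route is a short induction on $m$ with Proposition~\ref{prop2} as the base case $m=1$.

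For \eqref{ker18}, note that $\delta_x x^t = t x^t$ and that the operators $\delta_x - c$ all commute, so bringing the product $(\delta_x-n)_m=(\delta_x-n)(\delta_x-n+1)\cdots(\delta_x-n+m-1)$ under the integral sign in \eqref{ker12} turns it into multiplication by $(t-n)_m=\Gamma(t-n+m)/\Gamma(t-n)$. This is the $m$-fold version of \eqref{ker13} and yields
\[
(\delta_x-n)_m P_n(x)=\frac{1}{2\pi i}\prod_{j=0}^M\Gamma(n+\nu_j+1)\oint_{\Sigma_n}\frac{\Gamma(t-(n-m))}{\prod_{j=0}^M\Gamma(t+\nu_j+1)}\,x^t\,dt .
\]
For $0\le m\le n$ the numerator $\Gamma(t-(n-m))$ now has poles inside $\Sigma_n$ only at $t=0,1,\dots,n-m$, so the contour contracts to $\Sigma_{n-m}$ and the integral becomes the one representing $P_{n-m}(x)$; collecting the ratio $\prod_{j=0}^M\Gamma(n+\nu_j+1)/\Gamma(n-m+\nu_j+1)=\prod_{j=0}^M(n-m+\nu_j+1)_m$ then gives \eqref{ker18}.

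For \eqref{ker19} the argument is the mirror image starting from \eqref{ker14}: since $-\delta_x x^{-t}=t x^{-t}$, each factor $-\delta_x-n-k=-(\delta_x+n+k)$ acts on $x^{-t}/\Gamma(t-n)$ as multiplication by $t-n-k$, so iterating \eqref{ker15} $m$ times and telescoping $\Gamma(t-n)=(t-n-1)(t-n-2)\cdots(t-n-m)\,\Gamma(t-(n+m))$ produces the representation \eqref{ker14} at index $n+m$, up to the prefactor $\prod_{j=0}^M\Gamma(n+m+\nu_j+1)/\Gamma(n+\nu_j+1)=\prod_{j=0}^M(n+\nu_j+1)_m$; rearranging gives \eqref{ker19}. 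Alternatively one can run the induction: assume the $m$-step identity, move the single extra operator factor past $(\delta_x-n)_m$ (resp.\ $(\delta_x+n+1)_m$) using commutativity, apply Proposition~\ref{prop2} with $n$ shifted by $\mp m$, and recombine the $\nu_j$-products via $(a)_{m+1}=a\,(a+1)_m$ (resp.\ $(a)_{m+1}=(a)_m(a+m)$). Either way there is no genuine obstacle; the only points requiring care are the Pochhammer bookkeeping and, for \eqref{ker18}, the tacit restriction $0\le m\le n$ that makes $P_{n-m}$ meaningful (both sides then being polynomials of degree at most $n-m$).
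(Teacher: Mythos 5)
Your argument is correct. Your secondary route (induction on $m$, moving the extra factor past $(\delta_x-n)_m$ by commutativity and invoking Proposition~\ref{prop2} with shifted $n$) is precisely the proof given in the paper, which treats Proposition~\ref{prop2} as a black box and only has to track the Pochhammer recombination $(a)_{m+1}=a\,(a+1)_m$. Your primary route is genuinely different in form: you apply $(\delta_x-n)_m$ (resp.\ $(-1)^m(\delta_x+n+1)_m$) directly under the Mellin--Barnes integrals \eqref{ker12} and \eqref{ker14}, using $\delta_x x^{\pm t}=\pm t\,x^{\pm t}$ to convert the operator into multiplication by $(t-n)_m$ (resp.\ $(t-n-m)_m$), which telescopes the Gamma factors and contracts the contour $\Sigma_n$ to $\Sigma_{n-m}$ in one step. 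This is really the $m$-fold version of the paper's proof of Proposition~\ref{prop2} itself, so it buys a self-contained one-shot derivation that makes the cancellation of the poles at $t=n-m+1,\dots,n$ and the origin of the prefactor $\prod_{j}\Gamma(n+\nu_j+1)/\Gamma(n-m+\nu_j+1)=\prod_j(n-m+\nu_j+1)_m$ completely transparent, at the price of redoing the contour bookkeeping rather than reusing the already-proved one-step identity. Your explicit remark that \eqref{ker18} tacitly requires $0\le m\le n$ is a worthwhile point the paper leaves implicit.
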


\begin{proof}

Let us prove  \eqref{ker18} by induction in $m$.
For $m=1$ \eqref{ker18} coincides with \eqref{ker10}.
Consider the LHS of \eqref{ker18} with $m$ replaced with $m+1$.
Using \eqref{ker10} with $n$ replaced by $n-m$
 we obtain
\beq
\prod_{j=0}^M(n-m+\nu_j)_{m+1} \>P_{n-m-1}(x)=
\prod_{j=0}^M(n-m+\nu_j+1)_m \>(\delta_x+m-n)P_{n-m}(x).\label{ker22}
\eeq
Now applying  \eqref{ker18} to the RHS of \eqref{ker22} we get
\beq
(\delta_x+m-n)\prod_{j=0}^M(n-m+\nu_j+1)_m \>P_{n-m}(x)=
(\delta_x+m-n)(\delta_x-n)_m \>P_n(x)=(\delta_x-n)_{m+1} \>P_n(x), \label{ker22a}
\eeq
which completes the proof.
 The proof of \eqref{ker19} is similar.
 \end{proof}

 An identity involving multiplication by $x$ and the operator $\delta_x$ is also of interest.

\begin{prop}\label{prop4}
We have
\begin{align}
P_n(x)-xP_{n-1}(x)+&\sum_{k=0}^M\sum_{l=0}^{M-k}
e_{M-k-l}({\boldsymbol \nu})n^l\delta_x^k P_{n-1}(x)=0, \label{ker28} \\
Q_{n-1}(x)-xQ_{n}(x)+&\sum_{k=0}^M\sum_{l=0}^{M-k}
e_{M-k-l}({\boldsymbol \nu})n^l(-\delta_x)^k Q_{n}(x)=0, \label{ker29}
\end{align}
where $e_k({\boldsymbol \nu})$ is the $k$-th elementary symmetric function
of $M$ variables ${\boldsymbol \nu}=(\nu_1,\ldots,\nu_M)$.

\begin{proof}
Let us first write \eqref{ker28} in the form
\beq
P_n(x)= \Big ( x - {\mathcal A}_n(\delta_x)\Big )P_{n-1}(x), \label{ker30}
\eeq
where ${\mathcal A}_n(\delta_x)$ is the differential operator given by a double sum in \eqref{ker28}.
Using \eqref{ker10} we can rewrite \eqref{ker30} in the form
\beq
\left[{x(\delta_x-n)-\mathcal A}_n(\delta_x)(\delta_x-n)-\prod_{j=0}^M (n+\nu_j)\right]P_n(x)=0.
\label{ker31}
\eeq
Using \eqref{ker8} we obtain from \eqref{ker31}
\beq
\left[\prod_{j=0}^M(\delta_x+\nu_j)-\prod_{j=0}^M (n+\nu_j)-{\mathcal A}_n(\delta_x)(\delta_x-n)\right]
P_n(x)=0. \label{ker32}
\eeq
Let us show that the differential operator in the LHS is identically equal to $0$. We have
\begin{align}
&{\mathcal A}_n(\delta_x)(\delta_x-n)=
\sum_{k=0}^M\sum_{l=0}^{M-k}
e_{M-k-l}(\nu_1,\ldots,\nu_M)(n^l\delta_x^{k+1}-n^{l+1}\delta_x^k)=\nonumber\\
&=\sum_{k=0}^{M+1}e_{M+1-k}(0,\nu_1,\ldots,\nu_M)(\delta_x^k-n^k)=
\prod_{j=0}^M(\delta_x+\nu_j)-\prod_{j=0}^M (n+\nu_j). \label{ker33}
\end{align}
Here we used the fact that the double sum in \eqref{ker33} is telescopic and so only
boundary one-dimensional sums  survive. Thus, \eqref{ker32} is proved.
The proof of \eqref{ker29} follows from \eqref{ker9} and \eqref{ker11} in a similar manner.

\end{proof}

\end{prop}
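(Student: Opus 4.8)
The plan is to establish \eqref{ker28} in the equivalent operator form
\[
P_n(x) = \big(x - \mathcal A_n(\delta_x)\big)P_{n-1}(x),\qquad
\mathcal A_n(\delta_x) := \sum_{k=0}^M\sum_{l=0}^{M-k} e_{M-k-l}(\boldsymbol\nu)\,n^l\,\delta_x^{\,k},
\]
and then to reduce the whole statement to a single polynomial identity in the two commuting quantities $\delta_x$ and $n$. The only inputs needed are two relations already at our disposal: the ODE \eqref{ker8} from Proposition~\ref{prop1}, $\prod_{i=0}^M(\delta_x+\nu_i)P_n(x) = x(\delta_x-n)P_n(x)$, and the lowering relation \eqref{ker10} from Proposition~\ref{prop2}, $\prod_{i=0}^M(n+\nu_i)P_{n-1}(x) = (\delta_x-n)P_n(x)$.

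First I would multiply the target identity by the nonzero scalar $\prod_{i=0}^M(n+\nu_i)$ and eliminate $P_{n-1}$ on the right via \eqref{ker10}, turning it into
\[
\prod_{i=0}^M(n+\nu_i)\,P_n(x) = \big(x - \mathcal A_n(\delta_x)\big)(\delta_x - n)P_n(x).
\]
In the bracket $(\delta_x-n)$ acts on $P_n$ first and the multiplication operator $x$ acts last, so the right-hand side splits as $x\big[(\delta_x-n)P_n(x)\big] - \mathcal A_n(\delta_x)\big[(\delta_x-n)P_n(x)\big]$; the first term is exactly the right-hand side of the ODE \eqref{ker8}, hence equals $\prod_{i=0}^M(\delta_x+\nu_i)P_n(x)$, and the explicit factor $x$ disappears. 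What is then left is the \emph{operator} identity
\[
\mathcal A_n(\delta_x)(\delta_x-n) = \prod_{i=0}^M(\delta_x+\nu_i) - \prod_{i=0}^M(n+\nu_i),
\]
after which running the substitutions backwards (dividing out the scalar $\prod_{i}(n+\nu_i)$ and reinstating $P_{n-1}$ via \eqref{ker10}) recovers the claimed form of $P_n$.

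This last identity is purely algebraic, since $\delta_x$ and $n$ commute it is an equality of polynomials in an indeterminate $t=\delta_x$ with $n$ a parameter. Writing $\nu_0=0$ one has $\prod_{i=0}^M(t+\nu_i)=\sum_{k=0}^M e_{M-k}(\boldsymbol\nu)\,t^{k+1}$ and the same with $t$ replaced by $n$, so the difference is $\sum_{k=0}^M e_{M-k}(\boldsymbol\nu)(t^{k+1}-n^{k+1})$; using $t^{k+1}-n^{k+1}=(t-n)\sum_{l=0}^{k}t^{l}n^{\,k-l}$ and collecting powers of $t$, the sum $\sum_{k=0}^M e_{M-k}(\boldsymbol\nu)\sum_{l=0}^{k}t^{l}n^{\,k-l}$ reorganises to $\sum_{l=0}^{M}t^{l}\sum_{p=0}^{M-l}e_{M-l-p}(\boldsymbol\nu)n^{p} = \mathcal A_n(t)$, which is exactly the claim. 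Equivalently one expands $\mathcal A_n(\delta_x)(\delta_x-n)$ directly and observes that the double sum telescopes, only the two boundary one-dimensional sums surviving; this reindexing is the one place where the bookkeeping has to be done attentively, and it is really the entire content of the proposition. The companion identity \eqref{ker29} for $Q_n$ follows by the same scheme, feeding in \eqref{ker9} and \eqref{ker11} in place of \eqref{ker8} and \eqref{ker10}; the signs in those two relations are what convert $\delta_x^{\,k}$ into $(-\delta_x)^{k}$ in the statement.

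Should the operator manipulation feel delicate, an alternative is to argue from the Mellin--Barnes representation \eqref{ker12}: multiplication by $x$ shifts $x^t\mapsto x^{t+1}$ (equivalently $t\mapsto t-1$ in the integrand together with a factor $\prod_{j=0}^M(t+\nu_j)=\prod_{j=0}^M\Gamma(t+\nu_j+1)/\prod_{j=0}^M\Gamma(t+\nu_j)$), and $\delta_x$ multiplies the integrand by $t$, so \eqref{ker28} becomes an identity between rational functions of $t$ under the contour integral. The drawback of that route is having to reconcile the three contours $\Sigma_n$, $\Sigma_{n-1}$ and $\Sigma_{n-1}+1$ and to check that no spurious residues are created, so I would keep the operator argument as the main line and regard the telescoping polynomial identity above as the crux.
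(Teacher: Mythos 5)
Your proposal is correct and follows essentially the same route as the paper: rewrite \eqref{ker28} as $P_n=(x-\mathcal A_n(\delta_x))P_{n-1}$, eliminate $P_{n-1}$ via \eqref{ker10}, trade $x(\delta_x-n)P_n$ for $\prod_i(\delta_x+\nu_i)P_n$ via \eqref{ker8}, and reduce everything to the polynomial identity $\mathcal A_n(t)(t-n)=\prod_i(t+\nu_i)-\prod_i(n+\nu_i)$, which you verify by the factorisation $t^{k+1}-n^{k+1}=(t-n)\sum_l t^l n^{k-l}$ — the same computation the paper phrases as a telescoping double sum. Your handling of \eqref{ker29} via \eqref{ker9} and \eqref{ker11} likewise matches the paper's remark.
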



\subsection{The hard edge  limit}\label{S2.3}

In the limit $N_0\to\infty$ the eigenvalues near the origin are spaced at distances of order $1/N_0$.
Changing scale $x_j \mapsto x_j/N_0$ and taking $N_0 \to \infty$ with $\nu_m$ fixed defines the hard edge
limit, and the corresponding hard edge scaled correlation function
\beq
\lim\limits_{N_0\to\infty}\frac{1}{N_0}K_{N_0}^M\left(\frac{x}{N_0},\frac{y}{N_0}\right)=K^M(x,y),
\label{edge1}
\eeq
is well defined. Kuijlaars and Zhang \cite{KZ14} used (\ref{edge6}) to deduce that
\beq
K^M(x,y)=\int_0^1 P(ux) Q(uy) du, \label{edge7}
\eeq
with $P(x)$ and $Q(y)$ defined by
\beq
P(x)=\MeijerG{1}{0}{0}{M+1}{-}{-\nu_0,-\nu_{1},\ldots,-\nu_M}{x},\quad
Q(y)=\MeijerG{M}{0}{0}{M+1}{-}{\nu_1,\ldots,\nu_M,\nu_0}{y}. \label{edge2a}
\eeq
Most significant for our present purposes is that
these authors were able to deduce from (\ref{edge7}) that $K^M$ can be written as
an integrable kernel.

\begin{theorem}\label{thedge1}
Let $P(x)$ and $Q(y)$ be given by (\ref{edge2a}). Let
$\mathcal{B}(.,.)$ be the  bilinear operator defined by
\beq
\mathcal{B}(f(x),g(y))=(-1)^{M+1}\sum_{j=0}^M(-1)^j\delta_x^jf(x)
\sum_{i=0}^{M-j}\alpha_{i+j}\delta_y^jg(y),
\quad \delta_x\equiv x\frac{d}{dx},
\label{edge3}
\eeq
 with the constants $\al_i$
given  by
\beq
\prod_{i=1}^M(x-\nu_i)=\sum_{i=0}^M\al_i x^i, \label{edge4}
\eeq
or equivalently in terms of an elementary symmetric function
\beq
\al_i=(-1)^i e_{M-i}(\nu_1,\ldots,\nu_M).\label{edge5}
\eeq
We have
\begin{equation}\label{2.2}
K^M(x,y) = {  \mathcal{B}(P(x), Q(y)) \over x - y}.
\end{equation}
\end{theorem}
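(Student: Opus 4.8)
The plan is to start from the integral representation \eqref{edge7}, namely $K^M(x,y)=\int_0^1 P(ux)Q(uy)\,du$, and to convert the integrand into a total $u$-derivative modulo a term proportional to $(x-y)$. The key structural input is the pair of differential equations satisfied by $P$ and $Q$: these are the hard edge limits of \eqref{ker8}--\eqref{ker9} (obtained by dropping the $n$-dependent terms, since $n^{-1}$-scaled variables send $(\delta_x\pm n)\to$ leading behaviour governed by multiplication by $\pm x$ after rescaling). Concretely one expects
\beq
\prod_{i=0}^M(\delta_x+\nu_i)\,P(x)=(-1)^{M+1}x\,P(x),\qquad
\prod_{i=0}^M(\delta_y-\nu_i)\,Q(y)=(-1)^M y\,Q(y),
\eeq
or the appropriate sign-corrected versions; I would verify these directly from the Meijer G-function ODE \eqref{A4} applied to \eqref{edge2a}, exactly as Proposition \ref{prop1} was proved. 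Note $\delta_0=0$ in the first product and $\nu_0$ appears in the second, so after expanding one product has the $\alpha_i$ of \eqref{edge4} as coefficients (that is the point of introducing $\alpha_i$ via $\prod_{i=1}^M(x-\nu_i)=\sum\alpha_i x^i$).

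Next I would compute $\dfrac{d}{du}\,\mathcal B\big(P(ux),Q(uy)\big)$, or more precisely look for an ``antiderivative'' identity of the form
\beq
(x-y)\,P(ux)Q(uy)=\frac{d}{du}\,\big[u\,\widetilde{\mathcal B}(P(ux),Q(uy))\big]
\eeq
for a suitable bilinear form $\widetilde{\mathcal B}$ built from $\delta$-operators. The mechanism is the standard Christoffel--Darboux-type telescoping: using $\delta_u f(ux)=(\delta_x f)(ux)$, repeated integration by parts in $u$ trades a factor of $x$ (resp.\ $y$) acting on $P$ (resp.\ $Q$) for one power of $\delta_x$ (resp.\ $\delta_y$), and the ODEs above let one re-express the highest $\delta$-power in terms of multiplication by $x$ or $y$, closing the recursion. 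The alternating signs $(-1)^j\delta_x^j\otimes\delta_y^j$ and the coefficients $\alpha_{i+j}$ in \eqref{edge3} are precisely what one is forced into by this bookkeeping: the $j$-th term collects the contribution where $P$ has been hit $j$ times and $Q$ has been hit $j$ times, with the remaining $\alpha$-weights coming from the not-yet-telescoped part of $\prod(\delta+\nu_i)$. Then integrating from $u=0$ to $u=1$ gives $(x-y)K^M(x,y)=\big[u\widetilde{\mathcal B}(\cdots)\big]_0^1=\mathcal B(P(x),Q(y))$, the boundary term at $u=0$ vanishing because each $\delta_u^k$ of $u\,(\textrm{analytic in }u)$ vanishes at $u=0$ for the relevant $k\ge 1$ (and the $k=0$ piece carries an explicit factor of $u$).

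The main obstacle I anticipate is purely combinatorial: matching the telescoping bookkeeping to the exact double-sum in \eqref{edge3}, i.e.\ checking that the coefficient of $\delta_x^j f(x)\,\delta_y^j g(y)$ really is $(-1)^{M+1}(-1)^j\sum_{i=0}^{M-j}\alpha_{i+j}$ and that no ``off-diagonal'' terms $\delta_x^j\otimes\delta_y^{j'}$ with $j\ne j'$ survive. The latter is the crucial cancellation and is the analogue of the ``telescopic'' collapse used in \eqref{ker33}; I would organize it by writing $\prod_{i=0}^M(\delta+\nu_i)$ in the $P$-equation as $\delta\cdot\prod_{i=1}^M(\delta+\nu_i)$ (note $\nu_0$ enters only through $\delta=\delta+\nu_0$ when $\nu_0$ need not be zero — here one must be careful whether it is $\nu_0$ or $0$ that is absent, reading off \eqref{edge2a}), and dually for $Q$, so that the two products share the symmetric-function data encoded by $\alpha_i$. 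A secondary technical point is getting all the signs right: the factors $(-1)^M$ and $(-1)^{M+1}$ floating around in \eqref{edge3}, in Proposition \ref{prop1}, and in the hard edge ODEs must be reconciled; I would pin these down at the end by checking the base case $M=1$, where \eqref{2.2} must reduce to the Bessel kernel in Christoffel--Darboux form $K(x,y)=\big(\widetilde P(x)\widetilde Q'(y)-\widetilde P'(x)\widetilde Q(y)\big)/(x-y)$ up to the known normalization, and the case $M=0$ (if meaningful) or direct comparison with \cite[eq.~(5.4)]{KZ14} / \cite{KZ14} for the general formula.
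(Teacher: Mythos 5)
First, a point of reference: the paper does not actually prove Theorem \ref{thedge1} --- it is quoted from Kuijlaars and Zhang \cite{KZ14}, and your route (start from the integral representation \eqref{edge7}, invoke the Meijer G-function ODE \eqref{A4} for $P$ and $Q$, and turn $(x-y)P(ux)Q(uy)$ into a total $u$-derivative) is precisely the derivation given there; it is also the strategy the authors mention in Section 3 as the natural one that they deliberately avoid at finite $n$. Your outline is sound: since $\delta_u=u\,d/du$ is formally anti-self-adjoint with respect to $du/u$, the operator $L=\prod_{j=0}^M(\delta+\nu_j)$ and its formal adjoint $L^{*}=(-1)^{M+1}\prod_{j=0}^M(\delta-\nu_j)$ appear in the two ODEs, and the Lagrange identity $[Lf]g-f[L^{*}g]=\delta_u J(f,g)$ makes $(x-y)P(ux)Q(uy)$ an exact $\tfrac{d}{du}$ of the bilinear concomitant $J$, whose value at $u=1$ is $\mathcal B(P(x),Q(y))$.

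Two places where your sketch would go off the rails as written. (i) Display \eqref{edge3} contains a typo: the inner operator should be $\delta_y^{i}$, not $\delta_y^{j}$ (compare \eqref{ker6}). Consequently your declared target --- showing that ``no off-diagonal terms $\delta_x^j\otimes\delta_y^{j'}$ with $j\neq j'$ survive'' --- is the wrong identity to aim for: the concomitant genuinely contains every product $\delta_x^jP\cdot\delta_y^iQ$ with $i+j\le M$, with coefficient $(-1)^{M+1}(-1)^j\alpha_{i+j}$. This comes out cleanly from the elementary identity $[\delta^k f]g-f[(-\delta)^k g]=\delta\bigl(\sum_{a+b=k-1}(-1)^b\,\delta^a f\,\delta^b g\bigr)$ applied term by term after writing $\prod_{j=0}^M(\delta+\nu_j)=\delta\prod_{i=1}^M(\delta+\nu_i)$ and expanding the second factor in powers of $\delta$ (the coefficients being the $\alpha_i$ up to signs); the combinatorial obstacle you anticipate essentially evaporates when organised this way. (ii) The concomitant carries no explicit factor of $u$, so the vanishing of the $u\to0$ boundary term is not automatic. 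The terms with $j\ge1$ vanish because $\delta_u^jP(ux)\to0$ for an entire $P$, but the $j=0$ term is $P(0)\cdot\prod_{i=1}^M(\delta-\nu_i)Q(uy)\big|_{u=0}$ with $P(0)\neq0$, and one must check from the contour representation of $Q$ that this limit is zero. Finally, the $P$-equation is $\prod_{i=0}^M(\delta_x+\nu_i)P=-xP$ (the sign is independent of $M$), not $(-1)^{M+1}xP$; you flag this yourself, and the $M=1$ consistency check you propose would catch it.
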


\nsection{The integrable form of the kernel $K_n^M(x,y)$}

We would like to express the finite $n$ kernel (\ref{sing11}) in integrable form.
In light of the fact that the hard edge scaled kernel $K^M(x,y)$ was derived from
the integral representation (\ref{edge7}), it seems natural to start from the representation
(\ref{edge6}), and to use the differential equations for $P_k(x)$ and $Q_k(x)$ analogous to
what was done in \cite{KZ14}; see also \cite{Zh17} in the closely related case of
the hard edge scaled Muttalib--Borodin model \cite{Mu95,Bor99,FW15}. However,
the presence of the parameter $n$ makes it
unclear as to how to implement this strategy.

We proceed instead by algebraic means. Our central result is

\begin{theorem}\label{Th1}
The kernel $K_n^M(x,y)$ permits the integrable form
\beq
K_n^M(x,y)=\frac{{\mathcal{D}}(P_n(x),Q_n(y))}{x-y}, \label{repr}
\eeq
valid for any $n \ge 1$,
where the bilinear differential operator ${{\mathcal{D}}}$
does not depend on $n$ and has the form
\beq
{\mathcal{D}}(f(x),g(y))=
\sum_{j=0}^M \phi_j(x)\psi_j(y)  \label{ker5}
\eeq
with
\beq\phi_j(x)=(-1)^{j+1}\delta_x^j f(x), \quad
\psi_j(x)=-\delta_{j,0}\,x g(x)+\sum_{i=0}^{M-j}\alpha_{i+j}\delta_x^i g(x),
 \label{ker6}
\eeq
and where  the $\alpha_i$ are given by (\ref{edge5}).
\end{theorem}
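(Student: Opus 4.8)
The plan is to prove \eqref{repr} by induction on $n$, using the recurrences of Section \ref{S33} to propagate the integrable representation from $n$ to $n+1$. Since $\mathcal{D}$ is required to be independent of $n$, the natural base case is to identify the statement with the hard edge result of Theorem \ref{thedge1}: indeed, comparing \eqref{ker5}--\eqref{ker6} with the bilinear form $\mathcal{B}$ in \eqref{edge3}, one sees that $\mathcal{D}$ is precisely $\mathcal{B}$ \emph{plus} the extra term $-xg(y)f(x)$ coming from the $\delta_{j,0}$ piece in $\psi_0$; this extra term is exactly what is needed to compensate for the fact that the finite-$n$ functions $P_n,Q_n$ satisfy the inhomogeneous differential relations \eqref{ker8}--\eqref{ker9} (with the $x(\delta_x\mp n)$ right-hand side) rather than the homogeneous hard edge equations. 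So the first step is to write out $\mathcal{D}$ explicitly and verify algebraically that $\mathcal{D}(f,g) = \mathcal{B}(f,g) - xf(x)g(y) + (\text{polynomial in }\delta_x,\delta_y)$, setting up the precise form we must reproduce.

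\textbf{Key steps.} First I would establish the identity at the level of the biorthogonal functions: show that
\beq
(x-y)\,\mathcal{D}(P_n(x),Q_n(y)) = \sum_{k=0}^{n-1}\big[ (\text{something})_k(x) - (\text{something})_k(y)\big],
\eeq
i.e. that $\mathcal{D}(P_n(x),Q_n(y))$ telescopes. Concretely, the sum defining $K_n^M$ in \eqref{sing11} is $\sum_{k=0}^{n-1} P_k(x)Q_k(y)$, and the standard route to a Christoffel--Darboux-type identity is to find functions $R_k(x,y)$ with $P_k(x)Q_k(y) = \frac{R_{k+1}(x,y) - R_k(x,y)}{x-y}$ up to lower-order corrections. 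The candidate building blocks are the three-term-style relations: Proposition \ref{prop1a} (multiplication by $x$, which introduces the shift $k\mapsto k\pm 1$ and the coefficients $a_{k,n}$), Proposition \ref{prop2} (action of $\delta_x$, giving $\prod_i(n+\nu_i)P_{n-1} = (\delta_x-n)P_n$), and Proposition \ref{prop4} (the mixed relation \eqref{ker28}--\eqref{ker29} relating $P_n$, $xP_{n-1}$ and $\delta_x^k P_{n-1}$ via elementary symmetric functions $e_k(\boldsymbol\nu)$). Multiplying $P_k(x)Q_k(y)$ by $(x-y)$ and substituting $xP_k(x)$ from \eqref{ker9a} and $yQ_k(y)$ from \eqref{ker9b} should, after re-indexing the sums over $k$, collapse the double sum to boundary terms at $k=0$ and $k=n-1$; the bulk cancellation is the finite-$n$ analogue of the telescoping argument in \eqref{ker33}. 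The coefficients $\alpha_i = (-1)^i e_{M-i}(\nu_1,\dots,\nu_M)$ appearing in $\mathcal{D}$ are the same elementary symmetric functions that appear in Proposition \ref{prop4}, which is the structural reason the same operator works for all $n$.

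\textbf{An alternative, cleaner route} is to avoid the explicit telescoping sum and instead characterise $K_n^M$ by the two defining properties of the biorthogonal kernel: (i) the reproducing property $\int_0^\infty K_n^M(x,z)P_j(z)\,?$ — more precisely, using \eqref{edge6}, $K_n^M(x,y) = -\prod_j(n+\nu_j)\int_0^1 P_{n-1}(ux)Q_n(uy)\,du$. Differentiating under the integral sign with respect to the scaling variable $u$ and using the differential recurrences for $P_{n-1}$ and $Q_n$ (Propositions \ref{prop1}, \ref{prop2}) to convert $\delta_u$ acting on $P_{n-1}(ux)Q_n(uy)$ into a combination of $x$-derivatives and $y$-derivatives, one obtains $\frac{d}{du}\big[(\text{bilinear in }P_{n-1},Q_n)\big] = (x-y)\cdot(\text{integrand up to corrections})$, so that the $u$-integral evaluates at the endpoints $u=0$ and $u=1$. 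This is exactly the mechanism used by Kuijlaars and Zhang in the hard edge case; the only new feature is keeping track of the $n$-dependent prefactors $\prod_j(n+\nu_j)$, which must be shown to cancel against the prefactors in \eqref{ker10}--\eqref{ker11} so that the resulting $\mathcal{D}$ is $n$-independent. I would pursue this route as primary and use the induction as a cross-check.

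\textbf{Main obstacle.} The hard part will be the bookkeeping that forces the operator to be genuinely $n$-independent: the recurrences \eqref{ker8}--\eqref{ker11} all carry $n$-dependent coefficients ($n$ itself in $\delta_x\mp n$, and $\prod_j(n+\nu_j)$), and one must check that every such factor either cancels or reorganises into the combination $-\delta_{j,0}\,xg + \sum_i \alpha_{i+j}\delta_x^i g$, which has \emph{no} $n$ in it. The telescoping identity \eqref{ker33} in the proof of Proposition \ref{prop4} shows that $\prod_j(\delta_x+\nu_j) - \prod_j(n+\nu_j) = \mathcal{A}_n(\delta_x)(\delta_x - n)$, and it is precisely this kind of cancellation — where the $n$-dependent product $\prod_j(n+\nu_j)$ is absorbed into a polynomial in $\delta_x$ with $n$-free coefficients $e_{M-k}(\boldsymbol\nu)$ — that must be pushed through the bilinear setting. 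I expect the correct way to organise this is to first prove the operator identity on the polynomial/function side (that $(x-y)\mathcal{D}(P_n,Q_n)$ equals the telescoping boundary difference) using only Propositions \ref{prop1a} and \ref{prop4}, and only at the end match the boundary term at $k=0$ and $k=n-1$ to $\sum_{k=0}^{n-1}P_k(x)Q_k(y)$ using \eqref{ker10}--\eqref{ker11}; the $n$-dependent prefactors will then appear symmetrically on both sides and cancel.
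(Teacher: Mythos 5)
Your proposed primary route --- differentiating the integral representation (\ref{edge6}) under the integral sign in the scaling variable $u$, in the manner of Kuijlaars--Zhang --- is precisely the strategy the authors considered and rejected: the opening of Section 3 states that ``the presence of the parameter $n$ makes it unclear as to how to implement this strategy'', and the paper proceeds ``instead by algebraic means''. The obstruction is concrete. At the hard edge, $P$ and $Q$ satisfy homogeneous Meijer G-equations, so $u\frac{d}{du}$ of the bilinear form produces exactly $(x-y)$ times the integrand and the $u$-integral localises at the endpoints. At finite $n$ the analogues (\ref{ker8})--(\ref{ker9}) carry the inhomogeneous right-hand sides $x(\delta_x-n)P_n$ and $(-1)^Mx(\delta_x+n+1)Q_n$, and the integrand in (\ref{edge6}) couples $P_{n-1}$ with $Q_n$ (different indices), so the total-derivative structure in $u$ does not survive. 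You should not rest the proof on this route.

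Your backup route --- telescoping $\sum_{k=0}^{n-1}P_k(x)Q_k(y)$ against the recurrences (\ref{ker9a})--(\ref{ker9b}) --- is the paper's actual approach, and your first step is right: subtracting the two recurrences and summing over $k$ yields (\ref{ker38}). But the sum does not ``collapse to boundary terms at $k=0$ and $k=n-1$'' in the clean Christoffel--Darboux sense: because the recurrence has bandwidth $M+2$ rather than $3$, the residue is $P_n(x)Q_{n-1}(y)$ minus a double sum of $M(M+1)/2$ cross terms $a_{k,n+m}P_{n-k+m}(x)Q_{n+m}(y)$ with shifted indices. The real content of the proof --- which your proposal does not anticipate --- is to push all these shifted indices back to $(n,n)$ via Proposition \ref{prop3} (introducing the operators $(\delta_x-n)_{k-m}$ and $(\delta_y+n+1)_m$), and then to show that the resulting bilinear operator, whose coefficients a priori depend on $n$ through both these Pochhammer shift operators and the explicit $a_{k,n+m}$ of (\ref{ker9c}), collapses to the $n$-free operator $\mathcal{D}$. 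This reduces to the three-variable polynomial identity (\ref{ker39}) in $n$, $\delta_x$, $\delta_y$, whose verification occupies Lemmas \ref{lem1} and \ref{lem2} --- a substantial chain of partial-fraction and Pochhammer summations, far more than ``the finite-$n$ analogue of the telescoping in (\ref{ker33})'', which is a one-line cancellation for a single function. Without this identity the argument is incomplete. Two smaller slips: an induction on $n$ cannot take the hard-edge kernel as its base case, since that is the $n\to\infty$ limit rather than a small-$n$ instance; and the extra term in $\mathcal{D}(f(x),g(y))$ relative to $(-1)^M\mathcal{B}(f(x),g(y))$ is $+y\,f(x)g(y)$, not $-x\,f(x)g(y)$.
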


The simplicity of this result for any finite $n$ is striking. Comparing the bilinear
differential operators $\mathcal{B}$ from \eqref{edge3} and $\mathcal{D}$ from \eqref{ker5}
we see that they are almost identical except for the overall factor $(-1)^{M}$ and the
extra term in $\psi_0(x)$ in \eqref{ker6}.

To prove the above theorem we need some preparatory lemmas.
\begin{lemma} \label{lem1}
For any $n\geq 0$ and $x,y\in\mathbb{C}$
\beq
\sum_{i=0}^n(-1)^i \frac{(x-i)^n}{i!(n-i)!(y+i)}=\frac{(x+y)^n}{(y)_{n+1}}\label{ker40a}
\eeq
and
\beq
\sum_{i=0}^n(-1)^i \frac{(x-i)^{n+1}}{i!(n-i)!(y+i)}=\frac{(x+y)^{n+1}}{(y)_{n+1}}-1.\label{ker40a1}
\eeq
\end{lemma}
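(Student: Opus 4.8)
The plan is to treat the two identities in parallel, since the second follows from the first together with an auxiliary elementary sum. I would begin with \eqref{ker40a}. The natural approach is to recognize the left-hand side as a finite difference applied to a generating function. Introduce the forward/backward difference operator $\Delta$ acting on a sequence $(c_i)$ by $(\Delta c)_i = c_i - c_{i+1}$; then $\sum_{i=0}^n (-1)^i \binom{n}{i} c_i = (\Delta^n c)_0$. With $c_i = (x-i)^n/(y+i)$ (so that the summand is $\frac{1}{n!}(-1)^i\binom{n}{i} c_i$), the claim becomes $(\Delta^n c)_0 = n!\,(x+y)^n/(y)_{n+1}$. The term $(x-i)^n$ is a polynomial of degree $n$ in $i$ with leading coefficient $(-1)^n$, so $\Delta^n$ applied to $(x-i)^n$ alone gives the constant $n!$; the subtlety is the pole $1/(y+i)$. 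Here I would use the known finite-difference evaluation $\Delta^n \big[\tfrac{1}{y+i}\big]_0 = \tfrac{n!}{(y)_{n+1}}$ (the discrete analogue of $\tfrac{d^n}{dy^n}\tfrac1y$), valid for $y\notin\{0,-1,\dots,-n\}$, and combine it with the Leibniz rule for $\Delta^n$ of a product. Because $(x-i)^n$ is a polynomial of degree exactly $n$, its $(k)$-th differences vanish for $k>n$, which truncates the Leibniz expansion; the telescoping then collapses to produce exactly $(x+y)^n$ in the numerator.

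An alternative, possibly cleaner route is via partial fractions in the parameter $y$: the rational function of $y$ on the left of \eqref{ker40a} has simple poles only at $y=-i$, $i=0,\dots,n$, with residue at $y=-i$ equal to $(-1)^i (x-i)^n/(i!(n-i)!) \cdot(-1) = $ (a computable constant), while the right-hand side $(x+y)^n/(y)_{n+1}$ has the same poles with residues $(x-i)^n/\prod_{j\ne i}(j-i) = (-1)^{n-i}(x-i)^n/(i!(n-i)!)$. Matching residues at each pole, together with checking that both sides vanish as $y\to\infty$ (degree of numerator is $n$, of denominator $n+1$), forces the two rational functions to coincide. This reduces the lemma to the purely combinatorial residue bookkeeping, which is routine.

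For \eqref{ker40a1}, write $(x-i)^{n+1} = (x-i)\cdot(x-i)^n = x\,(x-i)^n - i\,(x-i)^n$. The term $x\,(x-i)^n$ contributes $x$ times \eqref{ker40a}. For the term $-i\,(x-i)^n$, note $\frac{i}{y+i} = 1 - \frac{y}{y+i}$, so its contribution splits into $-\sum_i (-1)^i \frac{(x-i)^n}{i!(n-i)!}$ plus $y$ times \eqref{ker40a}. The first of these is $-\tfrac{1}{n!}\Delta^n[(x-i)^n]_0 = -1$ by the leading-coefficient computation above (the $n$-th difference of a degree-$n$ polynomial with leading coefficient $(-1)^n$ is $(-1)^n\cdot(-1)^n n! = n!$, divided by $n!$). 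Collecting terms, the right-hand side becomes $(x+y)\cdot\frac{(x+y)^n}{(y)_{n+1}} - 1 = \frac{(x+y)^{n+1}}{(y)_{n+1}} - 1$, as required.

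The main obstacle is the finite-difference-of-a-product step in the first approach: one must justify carefully that the Leibniz expansion truncates and that the surviving term is precisely the one producing $(x+y)^n$ rather than a messier polynomial, which requires knowing $\Delta^k[(x-i)^n]_0$ for $0\le k\le n$ and assembling the alternating sum $\sum_k \binom{n}{k}\Delta^k[(x-i)^n]_0 \cdot \Delta^{n-k}[\tfrac{1}{y+i}]_k$. For that reason I expect the residue/partial-fraction argument of the second paragraph to be the more economical one to write out in full, with the finite-difference viewpoint relegated to a remark.
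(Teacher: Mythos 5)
Your recommended route --- regarding both sides of \eqref{ker40a} as rational functions of $y$, checking that they vanish as $|y|\to\infty$ and have the same simple poles at $y=0,-1,\dots,-n$ with matching residues --- is exactly the paper's proof, and your reduction of \eqref{ker40a1} to \eqref{ker40a} plus the elementary sum $\sum_{i}(-1)^i\binom{n}{i}(x-i)^n=n!$ is a harmless variant of the paper's remark that the same residue argument applies again. One caution for the write-up: both residue formulas in your sketch carry sign slips --- the residue of the left-hand side at $y=-i$ is $(-1)^i(x-i)^n/(i!\,(n-i)!)$ with no extra factor of $-1$, and the residue of $(x+y)^n/(y)_{n+1}$ there is the same quantity, since $\prod_{j\neq i}(j-i)=(-1)^i\,i!\,(n-i)!$, so the residues do agree once computed correctly.
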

\begin{proof}
Consider first (\ref{ker40a}), and regard both sides as a function of the complex variable $y$.
Both sides go to zero as $|y| \to \infty$ and have simple poles at
$y=0,-1,\dots,-n$ with the same residues, and hence are identical functions of $y$.
The same argument works for (\ref{ker40a1}).
\end{proof}

\begin{lemma}\label{lem2} For $x,y,z\in\mathbb{C}$ and $l\geq0$

\beq
\sum_{k=1}^l\sum_{m=0}^{k-1}\sum_{j=0}^{k+1}
(-1)^{j+m}\frac{(x)_{k-m}(y)_m}{j!(k+1-j)!}(z+m-j+1)^{l+1}=
\frac{z^{l+1}}{1-y}-\frac{(x+z)^{l+1}}{1-x-y}+\frac{x(1-y+z)^{l+1}}{(1-y)(1-x-y)}.
\label{ker40}
\eeq
\end{lemma}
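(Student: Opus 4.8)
The plan is to eliminate the degree parameter $l$ by passing to an exponential generating function, which converts the powers $(z+m-j+1)^{l+1}$ into exponentials that factorize. Let $L_l$ and $R_l$ denote the two sides of \eqref{ker40} for a fixed $l$, and form the formal power series in $t$ (coefficients polynomial in $x,y,z$)
\[
\mathcal{L}(t)=\sum_{l\ge0}\frac{t^{l+1}}{(l+1)!}\,L_l,\qquad \mathcal{R}(t)=\sum_{l\ge0}\frac{t^{l+1}}{(l+1)!}\,R_l .
\]
It then suffices to prove $\mathcal{L}=\mathcal{R}$ and compare coefficients of $t^{l+1}/(l+1)!$. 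Computing $\mathcal{R}$ is immediate: from $\sum_{l\ge0}t^{l+1}c^{l+1}/(l+1)!=e^{ct}-1$ and the cancellation $\tfrac1{1-y}-\tfrac1{1-x-y}+\tfrac{x}{(1-y)(1-x-y)}=0$ of the three constant terms one gets $\mathcal{R}(t)=e^{zt}\bigl(\tfrac1{1-y}-\tfrac{e^{xt}}{1-x-y}+\tfrac{xe^{(1-y)t}}{(1-y)(1-x-y)}\bigr)$.

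For $\mathcal{L}$ I would interchange the $l$-sum with the $k$-sum --- legitimate as a formal power series identity, since for each power of $t$ only the terms with $k\le l$ contribute --- and write $\sum_{l\ge k}t^{l+1}c^{l+1}/(l+1)!=e^{ct}-\sum_{q=0}^{k}c^qt^q/q!$ with $c=z+m-j+1$. The key observation is that the partial-sum tail $\sum_{q=0}^{k}(\cdot)$ disappears after the $j$-summation: one has $\sum_{j=0}^{k+1}(-1)^j\binom{k+1}{j}P(j)=0$ for every polynomial $P$ with $\deg P\le k$, and $(z+m+1-j)^q$ has degree $q\le k$ in $j$. So only the genuine exponential survives, and the $j$-sum collapses by the binomial theorem to $\sum_{j=0}^{k+1}(-1)^je^{-jt}/(j!(k+1-j)!)=(1-e^{-t})^{k+1}/(k+1)!$. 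Reindexing $p=k-m$ and setting $a=1-e^{-t}$, $b=1-e^{t}=-ae^{t}$ (so that $a^{p+m+1}(-1)^me^{mt}=a^{p+1}b^{m}$) leaves
\[
\mathcal{L}(t)=e^{(z+1)t}\sum_{p\ge1,\, m\ge0}\frac{(x)_p(y)_m}{(p+m+1)!}\,a^{p+1}b^{m}.
\]

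To evaluate the double sum I would insert $\tfrac1{(p+m+1)!}=\tfrac1{p!\,m!}\int_0^1r^p(1-r)^m\,dr$ and resum the two series, obtaining $a\int_0^1\bigl[(1-ar)^{-x}-1\bigr]\bigl(1-b(1-r)\bigr)^{-y}\,dr$. The step that makes this elementary is the pair of identities $1-b=e^{t}$ and $be^{-t}=-a$, which give $1-b(1-r)=e^{t}(1-ar)$, so that the integrand becomes $a\,e^{-yt}\bigl[(1-ar)^{-x-y}-(1-ar)^{-y}\bigr]$ and its $r$-integral equals $e^{-yt}\bigl(\tfrac{1-e^{-(1-x-y)t}}{1-x-y}-\tfrac{1-e^{-(1-y)t}}{1-y}\bigr)$ (valid for generic $x,y$, hence everywhere since $L_l,R_l$ are polynomial in $x,y$). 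Multiplying by $e^{(z+1)t}$ yields $\mathcal{L}(t)=e^{zt}\bigl(\tfrac{e^{(1-y)t}-e^{xt}}{1-x-y}-\tfrac{e^{(1-y)t}-1}{1-y}\bigr)$, whence $\mathcal{L}(t)-\mathcal{R}(t)=e^{(1-y+z)t}\bigl(\tfrac1{1-x-y}-\tfrac1{1-y}-\tfrac{x}{(1-y)(1-x-y)}\bigr)=0$. Thus $\mathcal{L}=\mathcal{R}$, which is \eqref{ker40}.

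The hard part will be the bookkeeping of the generating-function passage: justifying the rearrangements of infinite sums (cleanest if one works throughout with formal power series in $t$, where every coefficient is a finite sum), and recognising that the vanishing of the finite difference $\sum_{j}(-1)^j\binom{k+1}{j}P(j)$ for $\deg P\le k$ is exactly the mechanism that discards the partial-exponential tails. Once past this, the simplifications $1-b=e^{t}$ and $be^{-t}=-a$ collapse the Beta-type integral to an elementary one and the concluding comparison is routine algebra; note that Lemma~\ref{lem1} is not needed for this route.
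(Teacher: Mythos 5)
Your proof is correct, and it takes a genuinely different route from the one in the paper. The paper works entirely with finite sums: it reorders the triple sum via $s=k-m$, evaluates the $s$-sums with the Pochhammer identity $\sum_{s=0}^k (x)_s/s!=(x+1)_k/k!$, disposes of the resulting $j$-sums with the partial-fraction identities of Lemma \ref{lem1}, and finishes the $m$-sums with a further telescoping identity proved by induction. You instead encode all $l$ at once in the exponential generating function $\sum_l t^{l+1}L_l/(l+1)!$, so that the powers $(z+m-j+1)^{l+1}$ become exponentials; the $j$-sum then collapses to $(1-e^{-t})^{k+1}/(k+1)!$ by the binomial theorem once the partial-exponential tails are annihilated by the $(k+1)$-st finite difference (the same identity $\sum_j(-1)^j\binom{k+1}{j}P(j)=0$ for $\deg P\le k$ that the paper invokes later in the proof of Theorem \ref{Th1}), and the remaining double sum factorises through the Beta integral and the key algebraic relation $1-b(1-r)=e^{t}(1-ar)$. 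I checked the individual steps --- the reindexing $p=k-m$, the substitution $a^{p+m+1}(-1)^m e^{mt}=a^{p+1}b^m$, the evaluation of the $r$-integral, and the final cancellation using $\tfrac1{1-x-y}-\tfrac1{1-y}=\tfrac{x}{(1-y)(1-x-y)}$ --- and they are all sound; your insistence on working with formal power series in $t$ (every coefficient a finite sum) together with the remark that one may first take $x,y$ generic makes the rearrangements legitimate. What your approach buys is uniformity in $l$, independence from Lemma \ref{lem1}, and a structural explanation of why exactly the three exponents $z$, $x+z$, $1-y+z$ appear on the right-hand side; what the paper's approach buys is that it never leaves the realm of finite algebraic identities and so requires no discussion of formal series or analytic continuation.
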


\begin{proof}
Let us change the order of summations in $k$ and $m$ and introduce a new variable $s=k-m$.
Then we can rewrite the LHS of \eqref{ker40} as
\beq
\sum_{m=0}^{l-1}\sum_{s=1}^{l-m}\sum_{j=0}^{l+1}
(-1)^{j+m}\frac{(x)_{s}(y)_m}{j!(s+m+1-j)!}(z+m-j+1)^{l+1}. \label{ker41}
\eeq
We extended the summation in $j$ to $l+1$ since it is truncated by the factor $(s+m+1-j)!$
in the denominator.
If we interchange the summations in $s$ and $j$,
we need to split the sum in $j$ at the value $j=m+1$
and write
\beq
\sum_{s=0}^{l-m}\sum_{j=0}^{l+1}=\sum_{j=0}^{m+1}\sum_{s=1}^{l-m}+
\sum_{j=m+2}^{l+1}\>\sum_{s=j-m-1}^{l-m}, \label{ker43}
\eeq
where we take into account the truncating condition $s+m+1-j\geq 0$ in \eqref{ker41}.

Now the sum in $s$ can be calculated. The simple identity
\beq
\sum_{s=0}^k \frac{(x)_s}{s!}=\frac{(x+1)_k}{k!} \label{ker44}
\eeq
shows
\beq
\sum_{s=m+1}^k \frac{(x)_s}{s!}=\frac{(x+1)_k}{k!}-\frac{(x+1)_m}{m!},
\quad k\geq m.\label{ker45}
\eeq
Using \eqref{ker45} we can calculate the sums in $s$ in both terms in \eqref{ker43}.
This allows (\ref{ker41}) to be reduced to two double summations
\begin{align}
&\sum_{m=0}^{l-1}\sum_{j=0}^{l+1}(-1)^{j+m}\frac{(x)_{l-m+1}(y)_m}{j!(1+l-j)!(j-m+x-1)}
(1+z-j+m)^{l+1}\label{ker46a}\\
+&\sum_{m=0}^{l-1}\sum_{j=0}^{m+1}(-1)^{j+m+1}\frac{x\,(y)_m}{j!(1+m-j)!(j-m+x-1)}
(1+z-j+m)^{l+1}.\label{ker46}
\end{align}

Let us consider the first sum  \eqref{ker46a}. The sum in $j$ can be evaluated using
Lemma \ref{lem1}
\beq
\sum_{j=0}^{l+1}(-1)^{j}\frac{(1+z-j+m)^{l+1}}{j!(1+l-j)!(j-m+x-1)}=
\frac{(x+z)^{l+1}}{(x-m-1)_{l+2}}. \label{ker47}
\eeq
To calculate the sum over $m$ in \eqref{ker46a} we need
the formula
\beq
\sum_{m=0}^n\frac{(x)_m}{(y)_m}=\frac{y-1}{1+x-y}+\frac{(x)_{n+1}}{(1+x-y)(y)_n}, \label{ker48}
\eeq
which is easy to prove by induction.
Using \eqref{ker48} we finally get the answer for the sum \eqref{ker46a}
\beq
-\frac{(x+z)^{l+1}}{1-x-y}+(-1)^l \frac{  \>(x)_{-l}(y)_l(x+z)^{l+1}}{1-x-y}.\label{ker49}
\eeq
Now let us turn to the second sum \eqref{ker46}. Changing the summation variable $j=m+1-r$
shows
\beq
\sum_{m=0}^{l-1}\sum_{r=0}^{m+1}(-1)^{r}\frac{x\,(y)_m}{r!(1+m-r)!(x-r)}
(z+r)^{l+1}.\label{ker50}
\eeq
Let us separate the term at $r=0$ and calculate it using \eqref{ker45}. We obtain
\beq
\sum_{m=0}^{l-1}\frac{(y)_m}{(1+m)!}z^{l+1}=\frac{z^{l+1}}{(1-y)}+\frac{z^{l+1}(y)_l}{l!\,(y-1)}.
\label{ker51}
\eeq
In the remaining sum we substitute $r=s+1$ and interchange summations in $m$ and $s$, giving
\begin{align}
\sum_{m=0}^{l-1}&\sum_{s=0}^m\frac{x\,(-1)^{s+1}(y)_m(z+1+s)^{l+1}}{(m-s)!(s+1)!(x-s-1)}=
\sum_{s=0}^{l-1}\sum_{m=s}^{l-1}\frac{x\,(-1)^{s+1}(y)_m(z+1+s)^{l+1}}{(m-s)!(s+1)!(x-s-1)}
\nonumber\\
=x\,(y)_l&\sum_{s=0}^{l-1} (-1)^{s+1}\frac{(z+1+s)^{l+1}}{(l-s-1)!(s+1)!(y+s)(x-s-1)}
\nonumber\\
=x\,(y)_l&\sum_{t=0}^{l} (-1)^{t}\frac{(z+t)^{l+1}}{t!\,(l-t)!\,(y+t-1)(x-t)}-
\frac{z^{l+1}(y)_l}{l!\,(y-1)},
 \label{ker52}
\end{align}
where we used \eqref{ker44} to calculate the sum in $m$, set $s=t-1$ and subtracted the term with
$t=0$. Splitting the factors in the denominator as
\beq
\frac{1}{(y+t-1)(x-t)}=\frac{1}{(x+y-1)(x-t)}+\frac{1}{(x+y-1)(y+t-1)}, \label{ker53}
\eeq
we can finally evaluate  \eqref{ker52} using \eqref{ker40a1}.
The result reads
\beq
\frac{x\,(y)_l(x+z)^{l+1}}{(1-x-y)(-x)_{l+1}}+\frac{x(1-y+z)^{l+1}}{(1-y)(1-x-y)}-
\frac{z^{l+1}(y)_l}{l!\,(y-1)}.\label{ker54}
\eeq
Combining \eqref{ker49}, \eqref{ker51} and \eqref{ker54} we get the RHS of \eqref{ker40}.
\end{proof}

Now we can give the proof of  {\bf Theorem \ref{Th1}}, based on the above identities, and the
algebraic properties of $P_n(x)$ and $Q_n(x)$ from Section \ref{S33}.

\bigskip
\noindent {\it Proof of Theorem \ref{Th1}}. \:
From \eqref{ker9a} and \eqref{ker9b} we have
\begin{align}
&xP_{m}(x)Q_m(y)=P_{m+1}(x)Q_m(y)+\sum_{k=0}^M a_{k,m} P_{m-k}(x)Q_m(y),\label{ker34}\\
&yP_m(x) Q_{m}(y)=P_m(x)Q_{m-1}(y)+\sum_{k=0}^M a_{k,m+k}P_m(x)Q_{m+k}(y). \label{ker35}
\end{align}
Subtracting these two relations and summing over $m$ from $0$ to $n-1$,
we get after simplifications
\beq
(x-y)K_n^M(x,y)=P_n(x)Q_{n-1}(y)-\sum_{k=1}^M\sum_{m=0}^{k-1}a_{k,n+m}
P_{n-k+m}(x)Q_{n+m}(y).\label{ker38}
\eeq
Here we used the fact that $a_{k,n}=0$ for $k>n$ as follows from \eqref{ker9c}
and $Q_{-1}(x)=0$ which can be seen from the integral representation \eqref{ker14}.

Using formulas (\ref{ker18}-\ref{ker19}, \ref{ker29}) and the explicit form \eqref{ker9c}
of the coefficients $a_{k,n}$ we can rewrite the RHS as a bilinear differential operator
acting on the product $P_n(x)Q_n(y)$. Expanding the result in the basis
of symmetric functions $e_{M-l}(\nu_1,\ldots,\nu_M)$, $l=0,\ldots,M$
 and comparing with \eqref{ker5}
we can reformulate the statement of the theorem as the  identity
\begin{align}
&\sum_{k=1}^M\sum_{m=0}^{k-1}\sum_{j=0}^{k+1}
\frac{(-1)^{j+m}}{j!(k+1-j)!}(\delta_x-n)_{k-m}(\delta_y+n+1)_m(n+m-j+1)^{l+1}=\nonumber\\
&\sum_{i=0}^{l}\delta_x^{l-i}(-\delta_y)^i-\sum_{i=0}^{l}n^{l-i}(-\delta_y)^i
,\quad l=0,\ldots,M.
\label{ker39}
\end{align}
This identity contains three independent variables $n$, $\delta_x$
and $\delta_y$. First we notice that all contributions to the sum with $k>l$ are equal to $0$.
This follows from the fact that for any fixed  $k>l$ and  $m=0,\ldots,k-1$, the sum over $j$
is equal to $0$ because of the identity
$$
\sum_{i=0}^n (-1)^i { i^p \over i! (n - i)!} =
\sum_{i=0}^n (-1)^i {\delta_z^p(z^i) \over i! (n-i)!} \Big |_{z=1} =
{1 \over n!} \delta_z^p[(1-z)^n] \Big |_{z=1} = 0,
$$
valid for $0 \le p < n$.
Restricting the summation in $k$ to $1,\ldots,l$ we can use Lemma \ref{lem2} to calculate the sum
in the LHS of \eqref{ker39}. We obtain
\beq
-\frac{n^{l+1}}{n+\delta_y}+\frac{\delta_x^{l+1}}{\delta_x+\delta_y}+
\frac{(\delta_x-n)(-\delta_y)^{l+1}}{(n+\delta_y)(\delta_x+\delta_y)}.\label{ker55}
\eeq
Summing up geometric series in the RHS of \eqref{ker39} we get the same result, thus
verifying (\ref{ker39}) and establishing the Theorem. \hfill $\square$

\nsection{Generalization of Tracy and Widom theory}

The integrable form of the finite $n$ kernel (\ref{repr})  enables  the derivation
of a set of partial differential equations for
the gap probabilities.
We closely follow Strahov's \cite{St14} generalization of the original
approach of Tracy and Widom \cite{TW94c}
in his derivation of
analogous equations in relation to the hard edge kernel in integrable form  (\ref{2.2}), but
 with modifications
due to the final value of $N_0=n$. We remark that the approach of Strahov has also been
applied in \cite{Zh17} to study the gap probability in the hard edge scaled Muttalib--Borodin model.

Using the definitions (\ref{ker6}) of the
functions $\phi_j^{(n)}$, $\psi_j^{(n)}$, and (\ref{edge5}) of $\alpha_i$ we can derive
\beq
\delta_x\phi_j(x)=-\phi_{j+1}(x),\quad j=0,\ldots,M-1,\label{ker66}
\eeq
\beq
\delta_x\phi_M(x)=\sum_{k=1}^M(-1)^{M-k+1}e_{M-k+1}({\boldsymbol\nu})
\phi_k(x)+(-1)^{M-1}x(\phi_1(x)+n\phi_0(x)),\label{ker67}
\eeq
\begin{align}
&\delta_x\psi_0(x)=nx(-1)^M\psi_M(x),\label{ker68}
\\
&\delta_x\psi_j(x)=\psi_{j-1}(x)+(-1)^{M-j}(e_{M-j+1}({\boldsymbol\nu})-x\delta_{j,1})\psi_M(x),\quad
j=1,\ldots,M.\label{ker69}
\end{align}

Denoting the characteristic function of the interval $J$ \eqref{J}
by $\chi_J(x)$, we introduce the operator ${K}_{M,n}$  on $L^2(0,\infty)$ with the kernel
$K_n^M(x,y)\chi_J(y)$ which we denote as
\beq
K_{M,n}(x,y)=K_n^M(x,y)\chi_J(y).\label{ker62a}
\eeq
Let us notice that to restore a dependence on the parameter $\lambda$ entering \eqref{a2} all we need
is to make a substitution
\beq
Q_n(x)\rightarrow {\lambda}Q_n(x) \label{renorm}
\eeq
in all formulas in subsequent sections. This will only effect initial conditions for primary
variables satisfying equations (\ref{ker118}-\ref{ker124}) below.
For simplicity we set $\lambda=1$ and restore a dependence on $\lambda$ in Sections 9 and 10
when we analyze cases $M=1,2$ in details.

Similarly define the operators ${K}'_{M,n}$ and ${K}^T_{M,n}$ with kernels
\beq
K_{M,n}'(x,y):=K_n^M(y,x)\chi_J(y),\quad K_{M,n}^T(x,y):=K^M_n(y,x)\chi_J(x). \label{ker62b}
\eeq
We also define operators
\begin{align}
\rho_{M,n}=({1}-{K}_{M,n})^{-1},&\quad
{R}_{M,n}=({1}-{K}_{M,n})^{-1}{K}_{M,n}
\label{ker63},\\
\rho'_{M,n}=({1}-{K}'_{M,n})^{-1},&\quad
{R}'_{M,n}=({1}-{K}'_{M,n})^{-1}{K}'_{M,n}
\label{ker64},\\
{\rho}^T_{M,n}=({1}-{K}^T_{M,n})^{-1},&\quad
{R}^T_{M,n}=({1}-{K}^T_{M,n})^{-1}{K}^T_{M,n},
\label{ker65}
\end{align}
as well as the functions
\begin{align}
&Q_j^{(n)}(x;{\bf a})=(1-{K}_{M,n})^{-1} \phi_j^{(n)}(x),\label{ker71}\\
&P_j^{(n)}(x;{\bf a})=(1-{K}'_{M,n})^{-1}\psi_j^{(n)}(x),\quad
{\bf a}\equiv(a_1,\ldots,a_{2m}),\label{ker72}
\end{align}
\beq
V_{i,j}^{(n)}({\bf a})=\int_J\phi_i^{(n)}(x)P_j^{(n)}(x;{\bf a})dx=
\int_0^\infty \phi_i^{(n)}(x)P_j^{(n)}(x;{\bf a})\chi_J(x)dx \label{ker73}
\eeq
with $0\leq i,j\leq M$. As a final preliminary, we note
that by substituting  $P_{n-1}(x)$ from \eqref{ker10} in (\ref{edge6}) gives
\beq
K_n^M(x,y)=\left(n-{\delta_x}\right)\int_0^1 P_{n}(u x)Q_n(u y)du. \label{ker37}
\eeq

\begin{prop}\label{Th2}
For $ j=0,\ldots,M-1$ the functions $Q_j^{(n)}(x;{\bf a})$ satisfy the system of partial
differential equations
\begin{align}
\delta_x Q_j^{(n)}(x;{\bf a})&=(-1)^{M+1}\left[nQ_0^{(n)}(x;{\bf a})+
Q_1^{(n)}(x;{\bf a})\right]V_{j,M}^{(n)}({\bf a})-
Q_{j+1}^{(n)}(x;{\bf a})\nonumber\\
&-\sum_{k=1}^{2m}(-1)^k a_k R_{M,n}(x;a_k)Q_j^{(n)}(a_k;{\bf a}),\label{ker80}
\end{align}
while for $j=M$
\begin{align}
\delta_x Q_M^{(n)}(x;{\bf a})&=(-1)^{M+1}\left[nQ_0^{(n)}(x;{\bf a})+
Q_1^{(n)}(x;{\bf a})\right](x+V_{M,M}^{(n)}({\bf a}))+\nonumber\\
&+(-1)^M\sum_{k=0}^MQ_k^{(n)}(x;{\bf a})\left[nV_{0,k}^{(n)}({\bf a})+V_{1,k}^{(n)}({\bf a})
-(-1)^ke_{M+1-k}({\boldsymbol\nu})\right]\nonumber\\
&-\sum_{k=1}^{2m}(-1)^k a_k R_{M,n}(x,a_k)Q_M^{(n)}(a_k;{\bf a}).\label{ker81}
\end{align}
For $1\leq k\leq 2m$, $0\leq j\leq M$ we also have
\beq
\frac{\partial}{\partial a_k}Q_j^{(n)}(x;{\bf a})=(-1)^k R_{M,n}(x;a_k)Q_j^{(n)}(a_k;{\bf a}).
\label{ker82}
\eeq
\end{prop}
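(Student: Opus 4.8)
The plan is to follow the Tracy--Widom--Strahov resolvent method, adapted to the finite-$n$ integrable kernel \eqref{repr}. The key structural input is the pair of first-order systems \eqref{ker66}--\eqref{ker69} expressing $\delta_x\phi_j$ and $\delta_x\psi_j$ in terms of the $\phi$'s and $\psi$'s; these play the role of the ``$A,B$ matrices'' of the classical theory. First I would record two auxiliary commutation facts. (i) For the operator $\delta_x$ acting on a kernel, one has the standard identity that if $L$ has kernel $L(x,y)$ then $[\delta_x, L]$ has kernel $(\delta_x+\delta_y+1)L(x,y)$ plus boundary terms coming from the endpoints $a_k$ of $J$ when $L$ is the restriction $\mathbb{K}_{M,n}$; differentiating the projector $\chi_J$ produces precisely the sum $-\sum_k(-1)^k a_k\delta(y-a_k)$ (with the sign convention fixed by the alternating orientation of the intervals in \eqref{J}). (ii) Applying $(1-K_{M,n})^{-1}$ on the left and using $\rho_{M,n}\,\chi_J(\cdot=a_k) = R_{M,n}(\cdot;a_k)+$ a $\delta$-contribution converts each such boundary term into the $R_{M,n}(x;a_k)Q_j^{(n)}(a_k;\mathbf a)$ terms appearing on the right-hand sides of \eqref{ker80}--\eqref{ker82}.

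With those in hand, the computation of $\delta_x Q_j^{(n)}$ proceeds as follows. Write $Q_j^{(n)} = (1-K_{M,n})^{-1}\phi_j$ and compute $\delta_x$ of both sides. The term from $\delta_x$ hitting the resolvent produces the commutator $[\delta_x,(1-K_{M,n})^{-1}]$, which equals $(1-K_{M,n})^{-1}[\delta_x,K_{M,n}](1-K_{M,n})^{-1}$; using the integrable form \eqref{repr} of $K_n^M$, the bulk part of $[\delta_x+\delta_y,K_n^M(x,y)]$ is controlled because $(\delta_x+\delta_y)$ applied to $\mathcal{D}(P_n(x),Q_n(y))/(x-y)$ can be re-expressed, via the telescoping structure of $\mathcal{D}$ in \eqref{ker5}--\eqref{ker6} together with \eqref{ker66}--\eqref{ker69}, as a finite sum $\sum_l \phi_l(x)\cdot(\text{something})(y)$ — i.e. it collapses onto the $\phi_l$ and produces the coefficients $V_{j,M}^{(n)}$, $nV_{0,k}^{(n)}+V_{1,k}^{(n)}$, and the elementary-symmetric-function terms $(-1)^k e_{M+1-k}(\boldsymbol\nu)$ via the definition \eqref{ker73} once the inner kernel is paired against $P_j^{(n)}$. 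Meanwhile the term from $\delta_x$ hitting $\phi_j$ directly gives $-\phi_{j+1}$ for $j<M$ (hence $-Q_{j+1}^{(n)}$ after applying the resolvent), and for $j=M$ gives the longer right-hand side \eqref{ker67} which, after applying $(1-K_{M,n})^{-1}$ and inserting $\psi$-expansions, reproduces the $(x+V_{M,M}^{(n)})$ and the $e_{M+1-k}(\boldsymbol\nu)$ terms. Assembling the bulk contribution, the $-\phi_{j+1}$ (resp.\ \eqref{ker67}) contribution, and the boundary contribution from (i)--(ii) yields exactly \eqref{ker80} and \eqref{ker81}. Equation \eqref{ker82} is the easiest: differentiating $(1-K_{M,n})^{-1}$ in $a_k$ only feels the endpoint of $\chi_J$, so $\partial_{a_k}(1-K_{M,n})^{-1} = (1-K_{M,n})^{-1}(\partial_{a_k}K_{M,n})(1-K_{M,n})^{-1}$ with $\partial_{a_k}K_{M,n}$ a rank-one $\delta$-kernel at $a_k$, immediately giving the stated formula with the sign $(-1)^k$.

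The main obstacle will be the bookkeeping in the bulk commutator $[\delta_x+\delta_y, K_n^M(x,y)]$: one must verify that the ``raising'' action of $\delta_x$ on $\phi_j$ (which lowers the index in the sense $\phi_j\to-\phi_{j+1}$) and the ``lowering'' action of $\delta_y$ on $\psi_j$ (cf.\ \eqref{ker68}--\eqref{ker69}) combine so that all terms except those proportional to $\phi_0,\dots,\phi_M$ (with $y$-independent operator coefficients that can be absorbed into the $V$-matrix) cancel. Concretely this is the statement that $(\delta_x+\delta_y+1)\mathcal{D}(P_n(x),Q_n(y))/(x-y)$, modulo the total derivative $\partial_x(x\cdot)+\partial_y(y\cdot)$-type terms that become boundary data, reduces to $\sum_j \phi_j(x)\bigl[nx(-1)^M\psi_M(y)\delta_{j,?}+\cdots\bigr]$; the extra $xg(y)$ piece in $\psi_0$ from \eqref{ker6} and the extra $x(\phi_1+n\phi_0)$ piece in \eqref{ker67} are exactly the finite-$n$ modifications that make this work, and tracking their signs through $(-1)^{M}$ versus $(-1)^{M+1}$ is where care is needed. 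Once this algebraic identity is pinned down, the rest is the standard resolvent manipulation, identical in form to \cite[Sec.~2--3]{St14} with $N_0$ kept finite.
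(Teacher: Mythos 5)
Your plan is correct and its skeleton coincides with the paper's proof: write $\delta_xQ_j^{(n)}=[\delta,(1-K_{M,n})^{-1}]\phi_j+(1-K_{M,n})^{-1}\delta\phi_j$, evaluate the commutator through $[\delta,L](x,y)=(\delta_x+\delta_y+1)L(x,y)$ plus the boundary deltas from $\chi_J$, use $\delta\phi_j=-\phi_{j+1}$ for $j<M$ and \eqref{ker67} together with the commutator of the resolvent with multiplication by $x$ for $j=M$, and obtain \eqref{ker82} from the rank-one kernel $\partial_{a_k}K_{M,n}$. The one genuine divergence is in how the bulk part of $[\delta,K_{M,n}]$ is reduced to rank one (Lemma \ref{lem3}). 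You apply $(\delta_x+\delta_y+1)$ directly to the integrable form $\sum_j\phi_j(x)\psi_j(y)/(x-y)$ and let \eqref{ker66}--\eqref{ker69} telescope; this does work: the $e_{M-k+1}(\boldsymbol\nu)$ contributions from $\delta\phi_M$ and from the $\delta\psi_j$ carry opposite signs $(-1)^{M-k+1}$ versus $(-1)^{M-k}$ and cancel, while the $x$- and $y$-proportional pieces of $\delta\phi_M$, $\delta\psi_0$ and $\delta\psi_1$ assemble into an overall factor of $(x-y)$ that cancels the denominator, leaving exactly $(-1)^{M+1}\bigl(n\phi_0(x)+\phi_1(x)\bigr)\psi_M(y)$. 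The paper instead reaches the same rank-one expression in one line from the representation \eqref{ker37}, since $(\delta_x+\delta_y+1)\int_0^1P_n(ux)Q_n(uy)\,du=\int_0^1\partial_u\bigl(u\,P_n(ux)Q_n(uy)\bigr)du=P_n(x)Q_n(y)$. Your route costs more bookkeeping but needs no input beyond the first-order system \eqref{ker66}--\eqref{ker69} (and is the route taken by Strahov at the hard edge); the paper's buys brevity at the price of invoking the Kuijlaars--Zhang integral representation \eqref{edge6}. The remaining steps you outline --- in particular the commutator $[(1-K_{M,n})^{-1},M]$ needed to convert the $x(\phi_1+n\phi_0)$ term of \eqref{ker67} into $x(nQ_0^{(n)}+Q_1^{(n)})$ minus the $V^{(n)}_{0,k},V^{(n)}_{1,k}$ corrections --- are exactly the paper's \eqref{ker88}--\eqref{ker91}.
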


\begin{prop}\label{Th3}
For $j=2,\ldots,M$ the functions $P_j^{(n)}(x;{\bf a})$ satisfy the system of partial
differential equations
\begin{align}
\delta_x P_j^{(n)}(x;{\bf a})&=
(-1)^{M+1}P_M^{(n)}(x;{\bf a})\left[nV_{0,j}^{(n)}({\bf a})+V_{1,j}^{(n)}({\bf a})-
(-1)^je_{M-j+1}(\boldsymbol{\nu})\right]
+\nonumber\\
&+P_{j-1}^{(n)}(x;{\bf a})-\sum_{k=1}^{2m}(-1)^k a_k R'_{M,n}
(x,a_k)P_j^{(n)}(a_k;{\bf a}), \label{ker99}
\end{align}
while for $j=1$
\begin{align}
\delta_x P_1^{(n)}(x;{\bf a})&=
(-1)^{M+1}P_M^{(n)}(x;{\bf a})\left[nV_{0,1}^{(n)}({\bf a})+V_{1,1}^{(n)}({\bf a})+
e_{M}(\boldsymbol{\nu})-x\right]
+\nonumber\\
&+P_{0}^{(n)}(x;{\bf a})+(-1)^M\sum_{k=0}^MP_k^{(n)}(x;{\bf a})
V_{k,M}^{(n)}({\bf a})\nonumber\\
&-\sum_{k=1}^{2m}(-1)^k a_k R'_{M,n}
(x,a_k)P_1^{(n)}(a_k;{\bf a}),\label{ker100}
\end{align}
and for $j=0$
\begin{align}
\delta_x P_0^{(n)}(x;{\bf a})&=
(-1)^{M+1}P_M^{(n)}(x;{\bf a})\left[nV_{0,0}^{(n)}({\bf a})+V_{1,0}^{(n)}({\bf a})\right]
\nonumber\\
&+(-1)^M n x P_M^{(n)}(x;{\bf a})
+(-1)^Mn\sum_{k=0}^M P_k^{(n)}(x;{\bf a}) V_{k,M}^{(n)}({\bf a}) \nonumber\\
&-\sum_{k=1}^{2m}(-1)^k a_k R'_{M,n}
(x,a_k)P_0^{(n)}(a_k;{\bf a}).\label{ker101}
\end{align}
For $1\leq k\leq 2m$, $0\leq j\leq M$ we also have
\beq
\frac{\partial}{\partial a_k}P_j^{(n)}(x;{\bf a})=(-1)^k R'_{M,n}(x;a_k)P_j^{(n)}(a_k;{\bf a}),
\label{ker102}
\eeq
and for $1\leq l\leq 2m$, $0\leq j\leq M$
\beq
\frac{\partial}{\partial a_l}V_{i,j}^{(n)}({\bf a})=
(-1)^l Q_i^{(n)}(a_l;{\bf a})P_j^{(n)}(a_l;{\bf a}).\label{ker103}
\eeq
\end{prop}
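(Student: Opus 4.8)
\medskip
\noindent\emph{Proof proposal for Proposition~\ref{Th3}.}\quad
The plan is to adapt the Tracy--Widom computation \cite{TW94c} in the form used by Strahov \cite{St14}, the decisive new inputs being the finite-$n$ integrable representation of the kernel from Theorem~\ref{Th1} and the first-order differential relations (\ref{ker66})--(\ref{ker69}) for $\phi^{(n)}_j,\psi^{(n)}_j$. It is natural to split the argument into three blocks: the deformation equations (\ref{ker102})--(\ref{ker103}); a single ``shifted rank-one'' identity for the kernel; and the assembly of the $\delta_x$-equations (\ref{ker99})--(\ref{ker101}), where the cases $j=0$ and $j=1$ require one extra commutator.

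First I would dispose of the deformation equations, which are routine. Reading the kernel of $K'_{M,n}$ as $K^M_n(y,x)\chi_J(y)$ and using $\partial_{a_k}\chi_J(y)=(-1)^k\delta(y-a_k)$ (sign $-1$ at a left endpoint $a_{2j-1}$, $+1$ at a right endpoint $a_{2j}$), the operator $\partial_{a_k}K'_{M,n}$ is rank one: $h\mapsto(-1)^k K^M_n(a_k,\cdot)\,h(a_k)$. Hence $\partial_{a_k}(1-K'_{M,n})^{-1}=(1-K'_{M,n})^{-1}(\partial_{a_k}K'_{M,n})(1-K'_{M,n})^{-1}$; applying this to $\psi^{(n)}_j$ and using that $(1-K'_{M,n})^{-1}$ sends the function $x\mapsto K^M_n(a_k,x)$ to the resolvent kernel $R'_{M,n}(x,a_k)$ gives (\ref{ker102}). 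Then (\ref{ker103}) follows on differentiating $V^{(n)}_{i,j}=\int_0^\infty\phi^{(n)}_i P^{(n)}_j\chi_J$: the term coming from $\partial_{a_l}\chi_J$ and the term obtained from (\ref{ker102}) combine into $(-1)^l Q^{(n)}_i(a_l)P^{(n)}_j(a_l)$, once one uses the resolvent duality $\int_J\phi^{(n)}_i(x)R'_{M,n}(x,a_l)\,dx=(R_{M,n}\phi^{(n)}_i)(a_l)=Q^{(n)}_i(a_l)-\phi^{(n)}_i(a_l)$, which is valid because $R'_{M,n}(x,y)=R_{M,n}(y,x)$.

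The heart of the matter is the identity
\beq
(\delta_x+\delta_y)\,K^M_n(y,x)+K^M_n(y,x)=(-1)^{M+1}\,\psi^{(n)}_M(x)\,\bigl(\phi^{(n)}_1(y)+n\,\phi^{(n)}_0(y)\bigr).
\label{keyKid}
\eeq
To prove (\ref{keyKid}) I would start from $K^M_n(y,x)=\sum_{i=0}^M\phi^{(n)}_i(y)\psi^{(n)}_i(x)/(y-x)$, use the elementary rule $(\delta_x+\delta_y)[u(y)v(x)/(y-x)]=[(\delta_yu)v+u\,\delta_xv]/(y-x)-uv/(y-x)$, and then substitute the relations (\ref{ker66})--(\ref{ker69}) for $\delta_y\phi^{(n)}_i$ and $\delta_x\psi^{(n)}_i$. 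Two telescoping cancellations occur: the cross terms $\phi^{(n)}_i(y)\psi^{(n)}_{i-1}(x)$ produced by $\delta_y\phi_i=-\phi_{i+1}$ cancel those produced by $\delta_x\psi_i=\psi_{i-1}+\cdots$, and the two $e_{M-i+1}(\boldsymbol\nu)$-weighted sums cancel because $(-1)^{M-i+1}=-(-1)^{M-i}$. What is left is exactly $(y-x)$ times $(-1)^{M+1}\psi^{(n)}_M(x)(\phi^{(n)}_1(y)+n\phi^{(n)}_0(y))$, the factor $(y-x)$ killing the denominator; in particular the extra $-x\,\delta_{j,1}\psi_M$ in (\ref{ker69}) and the $nx\psi_M$ on the right of (\ref{ker68}) are what assemble the combination $\phi_1+n\phi_0$. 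I expect this combinatorial collapse, together with the careful matching of the signs $(-1)^{M-i}$, $(-1)^{M-i+1}$ and of the elementary symmetric functions, to be the main obstacle; this is precisely where the finite-$n$ case goes beyond Strahov's hard-edge computation.

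With (\ref{keyKid}) in hand the rest is assembly. Writing $\delta_x P^{(n)}_j=(1-K'_{M,n})^{-1}\delta_x\psi^{(n)}_j+(1-K'_{M,n})^{-1}[\delta_x,K'_{M,n}]\,P^{(n)}_j$ and integrating by parts in $y$ the $\delta_y$ that is generated when $\delta_x$ falls on the kernel of $K'_{M,n}$,
\beq
[\delta_x,K'_{M,n}]\,P^{(n)}_j(x)=\int_J\bigl[(\delta_x+\delta_y)K^M_n(y,x)+K^M_n(y,x)\bigr]P^{(n)}_j(y)\,dy-\bigl[\,y\,K^M_n(y,x)\,P^{(n)}_j(y)\bigr]_{\partial J},
\eeq
so by (\ref{keyKid}) the integral becomes $(-1)^{M+1}\psi^{(n)}_M(x)(nV^{(n)}_{0,j}+V^{(n)}_{1,j})$ and the boundary term $\sum_k(-1)^k a_k K^M_n(a_k,x)P^{(n)}_j(a_k)$; applying $(1-K'_{M,n})^{-1}$ (which turns $\psi^{(n)}_M$ into $P^{(n)}_M$ and $K^M_n(a_k,\cdot)$ into $R'_{M,n}(\cdot,a_k)$) produces the $P^{(n)}_M(nV_{0,j}+V_{1,j})$ and the $-\sum_k(-1)^k a_k R'_{M,n}(x,a_k)P^{(n)}_j(a_k)$ contributions common to (\ref{ker99})--(\ref{ker101}). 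For $j\ge2$, $\delta_x\psi^{(n)}_j=\psi^{(n)}_{j-1}+(-1)^{M-j}e_{M-j+1}(\boldsymbol\nu)\psi^{(n)}_M$ then yields (\ref{ker99}) (using $(-1)^{M-j}=(-1)^{M+j}$). For $j=0,1$ the piece $\delta_x\psi^{(n)}_j$ contains $x\,\psi^{(n)}_M(x)$, and $(1-K'_{M,n})^{-1}$ does not commute with multiplication by $x$; I would handle this via the rank-$(M+1)$ commutator $[x,K'_{M,n}]h=-\sum_i\psi^{(n)}_i(x)\int_J\phi^{(n)}_i(y)h(y)\,dy$ (again coming from the $1/(y-x)$ pole), which gives $(1-K'_{M,n})^{-1}(x\,\psi^{(n)}_M)(x)=x\,P^{(n)}_M(x)+\sum_k P^{(n)}_k(x)\,V^{(n)}_{k,M}$. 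Feeding this into $\delta_x\psi^{(n)}_1=\psi^{(n)}_0+(-1)^{M-1}(e_M(\boldsymbol\nu)-x)\psi^{(n)}_M$ and into $\delta_x\psi^{(n)}_0=(-1)^M n\,x\,\psi^{(n)}_M$ produces the $-x$ inside the bracket of (\ref{ker100}), the $(-1)^M n\,x\,P^{(n)}_M$ term of (\ref{ker101}), and the extra sums $\sum_k P^{(n)}_k V^{(n)}_{k,M}$ in both. Finally, Proposition~\ref{Th2} for the $Q^{(n)}_j$ is obtained by the transposed version of the same argument, with (\ref{keyKid}) replaced by its analogue (roles of $\phi$ and $\psi$ interchanged) and $K_{M,n}$ in place of $K'_{M,n}$; the asymmetry between (\ref{ker80})--(\ref{ker81}) and (\ref{ker99})--(\ref{ker101}) is the footprint of the extra $-\delta_{j,0}\,x\,g(x)$ term in $\psi^{(n)}_0$ in (\ref{ker6}).
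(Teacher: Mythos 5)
Your proof is correct, and its overall architecture coincides with the paper's: the commutator decomposition $\delta_x P_j^{(n)}=(1-K'_{M,n})^{-1}\delta_x\psi_j+(1-K'_{M,n})^{-1}[\delta_x,K'_{M,n}]P_j^{(n)}$, the rank-one structure of $[\delta,K'_{M,n}]$ plus boundary terms from $\chi_J$, the extra commutator with the multiplication operator for $j=0,1$ (your formula $(1-K'_{M,n})^{-1}(x\psi_M)(x)=xP_M^{(n)}(x)+\sum_kP_k^{(n)}(x)V_{k,M}^{(n)}$ is exactly \eqref{ker109}), and the routine derivation of \eqref{ker102}--\eqref{ker103}. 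The one genuine divergence is in how the central identity — the kernel of $[\delta,K'_{M,n}]$, i.e.\ Lemma \ref{lem3}, eq.~\eqref{ker79} — is established. The paper proves it from the alternative integral representation \eqref{ker37}: since $(\delta_x+\delta_y)P_n(ux)Q_n(uy)=u\partial_u[P_n(ux)Q_n(uy)]$, the ``$+1$'' in $(\delta_x+\delta_y+1)$ is absorbed by an integration by parts in $u$, leaving only the boundary value $u=1$. You instead compute $(\delta_x+\delta_y+1)$ directly on the integrable form $\sum_i\phi_i(y)\psi_i(x)/(y-x)$ using the first-order system \eqref{ker66}--\eqref{ker69}; I have checked that your telescoping goes through as described (the $\phi_{i+1}\psi_i$ cross terms cancel against the $\phi_i\psi_{i-1}$ terms, the two elementary-symmetric-function sums cancel by the sign flip $(-1)^{M-i+1}=-(-1)^{M-i}$, and the leftover $x$- and $y$-proportional pieces assemble into $(-1)^{M-1}(y-x)(\phi_1(y)+n\phi_0(y))\psi_M(x)$, cancelling the denominator). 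Your route is the classical Tracy--Widom/IIKS mechanism and makes the proposition self-contained given Theorem \ref{Th1} and \eqref{ker66}--\eqref{ker69}, at the cost of a longer combinatorial verification; the paper's route is shorter but leans on the Kuijlaars--Zhang representation \eqref{edge6}/\eqref{ker37}, which is external input. Both are valid, and they cross-check each other's signs.
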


These two propositions generalise  Propositions 4.1 and 4.2 \cite{St14}.

\nsection{Proofs}

Here we give proofs of  {\bf Propositions  \ref{Th2}, \ref{Th3}}.
To make it more structural we split the derivation of the partial differential equations
for the functions (\ref{ker71}-\ref{ker73})
into several steps.

First, we notice that for any two operators ${K}$ and ${L}$ (see e.g.~\cite[Prop.~9.3.4]{Forrester2010})
\beq
[{L},({1}-{K})^{-1}]=({1}-{K})^{-1}[{L},{K}]({1}-{K})^{-1}, \label{ker70a}
\eeq
and
\beq
\frac{d}{da}({1}-{K})^{-1}=({1}-{K})^{-1}\frac{d{K}}{da}({1}-{K})^{-1},\label{ker70b}
\eeq
where we imply that the operator $K$ smoothly depends on a parameter $a$. And with $J$ given by
(\ref{J}) we have
\beq\label{JJ}
{\partial \over \partial x} \chi_J(x) = \sum_{k=1}^{2m} (-1)^{k-1} \delta(x - a_k), \quad
{\partial \over \partial a_k} \chi_J(x) =  (-1)^{k} \delta(x - a_k).
\eeq

It is convenient to use the following notations
\beq
{D}\phi(x)=\frac{d}{dx}\phi(x),\quad {M}\phi(x)=x\phi(x). \label{ker70c}
\eeq
Obviously, the operator $\delta$ introduced earlier in \eqref{edge3} is equal to $\delta={MD}$.
We also notice that for any operator ${L}$ with a kernel $L(x,y)$
\begin{align}
&\int_\Omega (\delta_x L(x,y)-L(x,y)\delta_y) f(y) dy =
\int_\Omega (x\frac{d}{dx} L(x,y)-L(x,y)y\frac{d}{dy}) f(y) dy = \nonumber\\
&-L(x,y)yf(y)|_{y\in\p\Omega}+
(\delta_x+1)\int_\Omega L(x,y)f(y)dy +
\int_\Omega f(y)y\frac{\partial }{\partial y}L(x,y)dy.
\label{ker70d}
\end{align}
So if ${L}$ has a compact support, we have the identity for the kernel of the commutator
$[\delta,{L}]$
\beq
[\delta,{L}](x,y)=(\delta_x+\delta_y+1)L(x,y).\label{ker70e}
\eeq

\begin{lemma}\label{lem3}

The kernel of the operator $[\delta,(1-K_{M,n})^{-1}]$ is given by
\begin{align}
[\delta,(1-K_{M,n})^{-1}](x,y)&=
(-1)^{M+1}\{nQ_0^{(n)}(x;{\bf a})+Q_1^{(n)}(x;{\bf a})\}P_M^{(n)}(y;{\bf a})\chi_J(y)\nonumber\\
&-
\sum_{k=1}^{2m}(-1)^k a_k R_{M,n}(x,a_k)\rho_{M,n}(a_k,y). \label{ker74}
\end{align}
Similarly,
\begin{align}
[\delta,(1-K'_{M,n})^{-1}](x,y)&=
(-1)^{M+1}\{nQ_0^{(n)}(y;{\bf a})+Q_1^{(n)}(y;{\bf a})\}P_M^{(n)}(x;{\bf a})\chi_J(y)\nonumber\\
&-
\sum_{k=1}^{2m}(-1)^k a_k R'_{M,n}(x,a_k)\rho'_{M,n}(a_k,y). \label{ker75}
\end{align}
\end{lemma}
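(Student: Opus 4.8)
The plan is to compute the commutator $[\delta,(1-K_{M,n})^{-1}]$ by combining the two general identities \eqref{ker70a} and \eqref{ker70e}. First I would apply \eqref{ker70a} with $L=\delta$ and $K=K_{M,n}$, reducing the problem to evaluating the kernel of the commutator $[\delta,K_{M,n}]$. Since $K_{M,n}$ has kernel $K_n^M(x,y)\chi_J(y)$, which has compact support in $y$, I cannot directly apply \eqref{ker70e}, because that identity was derived for operators of compact support in both variables; instead I would go back to the integration-by-parts computation \eqref{ker70d} keeping the boundary term. The boundary term produces, via \eqref{JJ}, a sum $\sum_{k=1}^{2m}(-1)^{k-1}$ of delta-function contributions at the endpoints $a_k$, multiplied by $-a_k K_n^M(x,a_k)$; the bulk term produces $(\delta_x+\delta_y+1)$ acting on $K_n^M(x,y)\chi_J(y)$, where $\delta_y$ hits only the kernel $K_n^M$, not $\chi_J$.

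The key computational step is then to show that $(\delta_x+\delta_y+1)K_n^M(x,y) = (-1)^{M+1}\{nP_n(x)+\delta_xP_n(x)\}\cdot(\text{something in }Q_n(y))$ — that is, that the symmetrised action of $\delta$ on the kernel collapses onto a rank-one object. Here I would use the integrable form \eqref{repr}: writing $(x-y)K_n^M(x,y)=\mathcal{D}(P_n(x),Q_n(y))=\sum_j\phi_j(x)\psi_j(y)$ and applying $\delta_x+\delta_y+1$, the left side becomes $(x-y)(\delta_x+\delta_y+1)K_n^M + (\delta_x x - \delta_y y)K_n^M = (x-y)(\delta_x+\delta_y+1)K_n^M$ after noting $\delta_x x = x\delta_x + x$ and $\delta_y y = y\delta_y+y$ so the extra terms cancel in the combination; hence $(\delta_x+\delta_y+1)$ commutes through division by $(x-y)$ up to these cancellations. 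On the right I would substitute the recurrences \eqref{ker66}--\eqref{ker69} for $\delta_x\phi_j$ and $\delta_y\psi_j$, and observe that the telescoping structure of \eqref{ker66} together with \eqref{ker69} collapses the double sum, leaving only the special terms from \eqref{ker67} and \eqref{ker68}. This is where the factor $(-1)^{M+1}$ and the combination $n\phi_0+\phi_1$ (equivalently $nP_n+\delta_xP_n$) emerges, matched against $\psi_M(y)$. Dividing back by $(x-y)$ and conjugating by $(1-K_{M,n})^{-1}$ on both sides turns $\phi_j\mapsto Q_j^{(n)}$ via \eqref{ker71} and $\psi_M\chi_J\mapsto P_M^{(n)}\chi_J$ via \eqref{ker72} (using that $(1-K'_{M,n})^{-1}$ is the transpose-type resolvent acting on the $y$ variable), yielding the first term of \eqref{ker74}; the boundary terms, after the same conjugation, become $-\sum_k(-1)^k a_k R_{M,n}(x,a_k)\rho_{M,n}(a_k,y)$, using the definitions \eqref{ker63}.

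The main obstacle I anticipate is the bookkeeping in the collapse of the telescoping sum: one must track carefully that $\delta_y$ acting on $\psi_j(y)$ via \eqref{ker69} shifts $j\mapsto j-1$ while $\delta_x$ on $\phi_j(x)$ via \eqref{ker66} shifts $j\mapsto j+1$, so that in $\sum_j(\delta_x\phi_j)\psi_j + \sum_j\phi_j(\delta_y\psi_j)$ the generic terms pair up and cancel, and only the ``overflow'' terms at $j=0$ (from \eqref{ker68}) and $j=M$ (from \eqref{ker67}) together with the $e_{M-j+1}(\boldsymbol\nu)$-corrections survive; one then checks these surviving pieces reassemble, after dividing by $(x-y)$, into $(-1)^{M+1}(n\phi_0(x)+\phi_1(x))\psi_M(y) + (\text{extra}\cdot(x-y))$, where the extra part cancels because of the explicit $x$-dependent term $-x\delta_{j,1}$ in \eqref{ker69} pairing with $(-1)^{M-1}x\phi_1(x)$ from \eqref{ker67} and the $nx\phi_0$ term against $\delta_x\psi_0$ in \eqref{ker68}. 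The identity \eqref{ker75} follows by the same argument with the roles of $x$ and $y$ interchanged and $K_{M,n}$ replaced by $K'_{M,n}$, noting that the adjoint structure swaps which resolvent dresses which family of functions.
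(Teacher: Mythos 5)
Your proposal is correct, and its overall skeleton coincides with the paper's: both reduce the problem via \eqref{ker70a} to the kernel of $[\delta,K_{M,n}]$, both account for the $\chi_J$--factor via \eqref{JJ} to produce the $\delta$-function terms at the endpoints $a_k$, and both then conjugate by the resolvents to convert $\phi_j\mapsto Q_j^{(n)}$, $\psi_M\chi_J\mapsto P_M^{(n)}\chi_J$ and $K_n^M(\cdot,a_k)\delta(\cdot-a_k)\mapsto R_{M,n}(\cdot,a_k)\rho_{M,n}(a_k,\cdot)$. Where you genuinely differ is in the key identity $(\delta_x+\delta_y+1)K_n^M(x,y)=(-1)^{M+1}\{n\phi_0^{(n)}(x)+\phi_1^{(n)}(x)\}\psi_M^{(n)}(y)$. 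The paper gets this in two lines from the integral representation \eqref{ker37}: since $x$ and $y$ enter only through $ux$ and $uy$, the operator $\delta_x+\delta_y$ becomes $u\,\partial_u$ under the integral, and integration by parts in $u$ leaves the boundary term $(n-\delta_x)P_n(x)Q_n(y)$ while the remaining integral cancels the $+1$. You instead work entirely with the integrable form \eqref{repr} and the first-order system \eqref{ker66}--\eqref{ker69}: applying $\delta_x+\delta_y$ to $\sum_j\phi_j(x)\psi_j(y)=(x-y)K_n^M(x,y)$, the generic shifts $j\mapsto j\pm1$ telescope, the $e_{M-j+1}(\boldsymbol{\nu})$ corrections cancel in pairs, and the surviving inhomogeneous pieces $(-1)^{M-1}x(\phi_1+n\phi_0)\psi_M$ from \eqref{ker67}, $(-1)^M n y\,\phi_0\psi_M$ from \eqref{ker68} and $(-1)^M y\,\phi_1\psi_M$ from the $-x\delta_{j,1}$ term of \eqref{ker69} assemble into $(x-y)\,(-1)^{M+1}(n\phi_0(x)+\phi_1(x))\psi_M(y)$; I have checked that this collapse works and produces exactly the right sign. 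Your route is longer but uses only the ``integrable'' data (the first-order system for $\phi_j,\psi_j$), so it is closer in spirit to Tracy--Widom and Strahov and would apply verbatim to any kernel equipped with such a structure, whereas the paper's route is shorter but leans on the specific Meijer-G integral representation \eqref{edge6}. Two small points to tighten: your stated boundary contribution, $(-1)^{k-1}$ times $-a_kK_n^M(x,a_k)$, has the sign reversed relative to what \eqref{JJ} actually gives, namely $\sum_k(-1)^{k-1}a_kK_n^M(x,a_k)\delta(y-a_k)=-\sum_k(-1)^{k}a_kK_n^M(x,a_k)\delta(y-a_k)$ (your final formula is nevertheless the correct one); and your hesitation about \eqref{ker70e} is unnecessary, since $K_{M,n}$ has compact support in $y$ and letting $\delta_y$ act distributionally on $\chi_J$ is the same bookkeeping as your explicit boundary terms.
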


\begin{proof}
We start with the proof of \eqref{ker74}. According to \eqref{ker70e} we obtain
for the operator $K_{M,n}$
\begin{align}
&[\delta, K_{M,n}](x,y)=(\delta_x+\delta_y+1)(n-\delta_x)\left\{\chi_J(y)
\int_{0}^1P_n(ux)Q_n(uy)du\right\}\nonumber\\
&=K_n^M(x,y)\left\{\chi_J(y)+y\frac{\p}{\p y}(\chi_J(y))\right\}+
\chi_J(y)(n-\delta_x)\int_{0}^1u\frac{\p}{\p u}
(P_n(ux)Q_n(uy))du=\nonumber\\
&=K_n^M(x,y)y\frac{\p}{\p y}(\chi_J(y))+
(n-\delta_x)P_n(x)Q_n(y)\chi_J(y)=\nonumber\\
&=(-1)^{M+1}\left\{n\phi_0^{(n)}(x)+\phi_1^{(n)}(x)\right\}\psi_M^{(n)}(y)\chi_J(y)
-\sum_{k=1}^{2m}(-1)^{k}a_k K_n^M(x,a_k)\delta(y-a_k), \label{ker76}
\end{align}
where we used  \eqref{JJ}, \eqref{ker37} and (\ref{repr}-\ref{ker6}) to express $P_n(x)$ and $Q_n(y)$ in terms
of $\phi$'s and $\psi$'s.

Now
\begin{align}
&[\delta,(1-K_{M,n})^{-1}](x,y)=\int_0^\infty du \>\rho_{M,n}(x,u)
\int_0^\infty dv\> \rho_{M,n}(v,y)\>[\delta, K_{M,n}](u,v)=\label{ker77}\\
&(-1)^{M+1}\int_0^\infty du\>\rho_{M,n}(x,u)\left\{n\phi_0^{(n)}(u)+\phi_1^{(n)}(u)\right\}
\int_0^\infty dv \rho_{M,n}(v,y)\psi_M^{(n)}(v)\chi_J(v)\nonumber\\
&-\sum_{k=1}^{2m}(-1)^{k}a_k\int_0^\infty du \>\rho_{M,n}(x,u)K_n^M(u,a_k)
\int_0^\infty dv \rho_{M,n}(v,y)\delta(v-a_k)=\nonumber\\
&(-1)^{M+1}\{nQ_0^{(n)}(x;{\bf a})+Q_1^{(n)}(x;{\bf a})\}P_M^{(n)}(y;{\bf a})\chi_J(y)-
\sum_{k=1}^{2m}(-1)^k a_k R_{M,n}(x,a_k)\rho_{M,n}(a_k,y), \nonumber
\end{align}
where we used (\ref{ker62a}-\ref{ker65}), (\ref{ker71}-\ref{ker72}) and
\beq
\rho_{M,n}(y,x)\chi_J(y)=\chi_J(x)\rho'_{M,n}(x,y).\label{ker78}
\eeq
For the kernel of $[\delta, K'_{M,n}]$ we obtain in the same way
\begin{align}
[\delta, K'_{M,n}](x,y)
&=(-1)^{M+1}\left\{n\phi_0^{(n)}(y)+\phi_1^{(n)}(y)\right\}\psi_M^{(n)}(x)\chi_J(y)\nonumber\\
&-\sum_{k=1}^{2m}(-1)^{k}a_k K_n^M(a_k,x)\delta(y-a_k) \label{ker79}
\end{align}
and \eqref{ker75} follows by  calculation similar to \eqref{ker77}.
\end{proof}

\subsection{\bf Proof of Proposition {\ref{Th2}}.}
We have
\begin{align}
&\delta_x Q_j^{(n)}(x;{\bf a})=\delta_x\{(1-K_{M,n})^{-1}\phi_j^{(n)}(x)\}=\nn \\
&[\delta,(1-K_{M,n})^{-1}]\phi_j^{(n)}(x)+(1-K_{M,n})^{-1}\delta \phi_j^{(n)}(x).\label{ker83}
\end{align}
For $j=0,\ldots,M-1$ we obtain from \eqref{ker6}
\beq
\delta \phi_j^{(n)}(x)=-\phi_{j+1}^{(n)}(x)\label{ker84}
\eeq
and \eqref{ker80} immediately follow by applying \eqref{ker74} from Lemma \ref{lem3} to the RHS
of \eqref{ker83}.

Now
\beq
\delta_x Q_M^{(n)}(x;{\bf a})=
[\delta,(1-K_{M,n})^{-1}]\phi_M^{(n)}(x)+(1-K_{M,n})^{-1}\delta\phi_M^{(n)}(x).\label{ker85}
\eeq
Using \eqref{ker74} we obtain for the first term in \eqref{ker85}
\begin{align}
[\delta,(1-K_{M,n})^{-1}]\phi_M^{(n)}(x)&=
(-1)^{M+1}\left[nQ_0^{(n)}(x;{\bf a})+
Q_1^{(n)}(x;{\bf a})\right]V_{M,M}^{(n)}({\bf a})\nonumber\\
&-\sum_{k=1}^{2m}(-1)^k a_k R_{M,n}(x,a_k)Q_M^{(n)}(a_k;{\bf a}).\label{ker86}
\end{align}

We can use  \eqref{ker67} to calculate $\delta\phi_M^{(n)}(x)$ in terms
of $\delta\phi_j^{(n)}(x)$, $j=0,\ldots,M$
\begin{align}
(1-K_{M,n})^{-1}\delta\phi_M^{(n)}(x)&=
\sum_{k=1}^M(-1)^{M-k+1}e_{M-k+1}({\boldsymbol\nu})
Q_k^{(n)}(x;{\bf a})\nonumber\\
&+(-1)^{M-1}(1-K_{M,n})^{-1}M \Big ( \phi_1(x)+n\phi_0(x)\Big ).\label{ker87}
\end{align}
Next we require
\beq
(1-K_{M,n})^{-1}M\phi_j(x)=[(1-K_{M,n})^{-1},M]\phi_j(x)+xQ_j^{(n)}(x;{\bf a})\label{ker88}
\eeq
for $j=0,1$.
Using (\ref{repr}-\ref{ker5}) we obtain
\beq
[K_{M,n},M](x,y)=-\sum_{k=0}^M\phi_k(x)\psi_k(y)\chi_J(y) \label{ker89}
\eeq
and
\beq
[(1-K_{M,n})^{-1},M](x,y)=-\sum_{k=0}^M Q_k^{(n)}(x;{\bf a})P_k^{(n)}(y;{\bf a})\chi_J(y).
\label{ker90}
\eeq
As a result
\begin{align}
(1-K_{M,n})^{-1}M\{\phi_1(x)+n\phi_0(x)\}&=
x(nQ_0^{(n)}(x;{\bf a})+Q_1^{(n)}(x;{\bf a}))\nonumber\\
&-
\sum_{k=0}^M Q_k^{(n)}(x;{\bf a})\Big (
nV_{0,k}^{(n)}({\bf a})+V_{1,k}^{(n)}({\bf a}) \Big ).\label{ker91}
\end{align}
Combining (\ref{ker85}-\ref{ker87}) and \eqref{ker91} we obtain \eqref{ker81}.

It remains to prove \eqref{ker82}. From \eqref{JJ} we derive
\beq
\frac{\p}{\p a_k}\Big ( K_n^M(x,y)\chi_J(y) \Big )=
(-1)^k K_n^M(x,a_k)\delta(y-a_k).\label{ker92}
\eeq
Since
\beq
\frac{\p}{\p a_k}(1-K_{M,n})^{-1}=(1-K_{M,n})^{-1}\frac{\p}{\p a_k}K_{M,n}(1-K_{M,n})^{-1},\label{ker93}
\eeq
we get similar to the calculation in \eqref{ker77}
\beq
\frac{\p}{\p a_k}(1-K_{M,n})^{-1}(x,y)=(-1)^k R_{M,n}(x,a_k)\rho_{M,n}(a_k,y)\label{ker94}
\eeq
and as a result
\beq
\frac{\partial}{\partial a_k}Q_j^{(n)}(x;{\bf a})=
\frac{\p}{\p a_k}(1-K_{M,n})^{-1}\phi_j(x)=
(-1)^k R_{M,n}(x,a_k)Q_j^{(n)}(a_k;{\bf a}).\label{ker95}
\eeq
\hfill\qedsymbol

\subsection{\bf Proof of Proposition {\ref{Th3}}.}

We have
\begin{align}
\delta_x P_j^{(n)}(x;{\bf a})=
[\delta, (1-K'_{M,n})^{-1}]\psi_j(x)+(1-K'_{M,n})^{-1}\delta\psi_j(x).
\label{ker104}
\end{align}
We evaluate the first term on the RHS using \eqref{ker75} from Lemma \ref{lem3}.

Noticing that
\begin{align}
&\int_J Q_i^{(n)}(x;{\bf a})\psi_j(x)dx=
\int_{0}^\infty\left\{\int_0^\infty \rho_{M,n}(x,y)\phi_i(y)dy\right\}
\psi_i(x)\chi_J(x)dx=\nonumber\\
&\int_0^\infty \chi_J(y)\phi_i(y)dy
\int_{0}^\infty\rho'_{M,n}(y,x)\psi_i(x)dx=
\int_J\phi_i(y)P_j^{(n)}(y;{\bf a})dy=V_{i,j}^{(n)}({\bf a}),\label{ker105}
\end{align}
we obtain
\begin{align}
[\delta, (1-K'_{M,n})^{-1}]\psi_j(x)&=
(-1)^{M+1}P_M^{(n)}(x;{\bf a})\left[nV_{0,j}^{(n)}({\bf a})+V_{1,j}^{(n)}({\bf a})\right]\nonumber\\
&-\sum_{k=1}^{2m}(-1)^k a_k R'_{M,n}
(x,a_k)P_j^{(n)}(a_k;{\bf a}).\label{ker106}
\end{align}

For the second term in \eqref{ker104} we consider three cases. For $2\leq j\leq M$
\begin{align}
(1-K'_{M,n})^{-1}\delta\psi_j(x)&=(1-K'_{M,n})^{-1}\{\psi_{j-1}(x)+(-1)^{M-j}e_{M-j+1}({\bf \nu})
\psi_M(x)\}\nonumber\\
&=P_{j-1}^{(n)}(x;{\bf a})+(-1)^{M-j}e_{M-j+1}({\bf \nu})P_{M}^{(n)}(x;{\bf a})
\label{ker107}
\end{align}
and this together with \eqref{ker106} gives \eqref{ker99}.
For $j=0,1$ we need to calculate the commutator $[(1-K'_{M,n})^{-1},M]$. The calculation
is similar to (\ref{ker89}-\ref{ker90}) and gives
\beq
[(1-K'_{M,n})^{-1},M](x,y)=\sum_{k=0}^M P_k^{(n)}(x;{\bf a})Q_k^{(n)}(y;{\bf a})\chi_J(y),
\label{ker108}
\eeq
implying
\beq
(1-K'_{M,n})^{-1}M \psi_{j}(x)=x P_j^{(n)}(x;{\bf a})+\sum_{k=0}^M P_k^{(n)}(x;{\bf a})
V_{k,j}^{(n)}({\bf a}).\label{ker109}
\eeq
Now for $j=1$ we obtain
\begin{align}
&(1-K'_{M,n})^{-1}\delta\psi_1(x)=1-K'_{M,n})^{-1}\left\{\psi_{0}(x)
+(-1)^M \left(x-e_{M}({\bf \nu})\right)\psi_M(x)\right\}\nonumber\\
&=P_{0}^{(n)}(x;{\bf a})+
(-1)^M(x-e_{M}({\bf \nu}))P_M^{(n)}(x;{\bf a})+(-1)^M\sum_{k=0}^M P_k^{(n)}(x;{\bf a})
V_{k,M}^{(n)}({\bf a}).\label{ker110}
\end{align}
Combining (\ref{ker104}, \ref{ker106}, \ref{ker110}) we come to \eqref{ker100}.
Finally, for $j=0$ we use \eqref{ker68}
\begin{align}
&(1-K'_{M,n})^{-1}\delta\psi_0(x)=(1-K'_{M,n})^{-1}
\left\{(-1)^M n x\psi_M(x)\right\}
\nonumber\\
&=(-1)^M nx P_M^{(n)}(x;{\bf a})
+(-1)^Mn\sum_{k=0}^M P_k^{(n)}(x;{\bf a}) V_{k,M}^{(n)}({\bf a})
\label{ker111}
\end{align}
and we obtain \eqref{ker101}.

It remains to check (\ref{ker102}-\ref{ker103}).
We have
\beq
\frac{\p}{\p a_k}K_{M,n}'(x,y)=(-1)^k K(a_k,x)\delta(a_k-y), \label{ker112}
\eeq
\begin{align}
\frac{\p}{\p a_k}(1-K_{M,n}')^{-1}(x,y)&=(-1)^k
\left\{\int_0^\infty \rho_{M,n}'(x,u)K(a_k,u)\right\} \rho'_{M,n}(a_k,y)\nonumber\\
&=(-1)^kR_{M,n}'(x,a_k)\rho'_{M,n}(a_k,y). \label{ker112a}
\end{align}
and \eqref{ker102} follows.
Similarly,
\begin{align}
&\frac{\partial}{\partial a_l}V_{i,j}^{(n)}({\bf a})=
\int_0^\infty \phi_i(x)\frac{\partial}{\partial a_l}\left\{
P_j^{(n)}(x;{\bf a})\chi_J(x)\right\}\nonumber\\
&=(-1)^l\phi_i(a_l)P_j^{(n)}(a_l;{\bf a})+
(-1)^l\left\{\int_0^\infty \phi_i(x)\chi_J(x)R'_{M,n}(x,a_l)\right\}P_j^{(n)}(a_l;{\bf a})\nonumber\\
&=(-1)^l\left\{\int_0^\infty \phi_i(x)\chi_J(x)\left(\delta(x-a_l)+R'_{M,n}(x,a_l)\right)
\right\}P_j^{(n)}(a_l;{\bf a})\nonumber\\
&=(-1)^l\left\{\int_0^\infty \phi_i(x)\chi_J(x)\rho'_{M,n}(x,a_l)
\right\}P_j^{(n)}(a_l;{\bf a})=(-1)^l Q_i^{(n)}(a_l;{\bf a})P_j^{(n)}(a_l;{\bf a}).\label{ker113}
\end{align}
\hfill\qedsymbol

\nsection{Kernels of resolvent operators}

From {\bf Theorem \ref{Th1}} the kernel $K_{M,n}$ and its transpose $K_{M,n}'$ in integrable form are
\beq
K_n^M(x,y)\chi_J(y)=\frac{\ds\sum_{j=0}^M \phi_j(x)\psi_j(y)}{x-y}\chi_J(y),  \quad
K_n^M(y,x)\chi_J(y)=-\frac{\ds\sum_{j=0}^M \psi_j(x)\phi_j(y)}{x-y}\chi_J(y). \label{res2}
\eeq

The following proposition is the direct analog of Proposition 4.4  from \cite{St14}
\begin{prop}
The kernels of the resolvent operators $R_{M,n}$ and $R'_{M,n}$ (\ref{ker63}-\ref{ker64})
are given by
\beq
R_{M,n}(x,y)=\frac{\ds \sum_{j=0}^M Q_j^{(n)}(x;{\bf a})
P_j^{(n)}(y;{\bf a})}{x-y}\chi_J(y), \label{res3}
\eeq
\beq
R'_{M,n}(x,y)=-\frac{\ds \sum_{j=0}^M P_j^{(n)}(x;{\bf a})
Q_j^{(n)}(y;{\bf a})}{x-y}\chi_J(y). \label{res4}
\eeq
\end{prop}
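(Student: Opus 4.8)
The plan is to reduce the statement to the commutator of the relevant resolvent with the multiplication operator $M$ from \eqref{ker70c}, and then to ``divide by $x-y$''. The key elementary observation is that, in contrast with the $\delta$-commutator \eqref{ker70e}, for \emph{any} integral operator $L$ with kernel $L(x,y)$ one has, with no boundary contributions,
\[
[M,L](x,y)=x\,L(x,y)-L(x,y)\,y=(x-y)\,L(x,y).
\]
Since $R_{M,n}=(1-K_{M,n})^{-1}-1$ and $[M,1]=0$, it follows that $(x-y)R_{M,n}(x,y)=[M,R_{M,n}](x,y)=[M,(1-K_{M,n})^{-1}](x,y)$, and likewise $(x-y)R'_{M,n}(x,y)=[M,(1-K'_{M,n})^{-1}](x,y)$. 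So everything comes down to computing these two commutators.

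First I would apply \eqref{ker70a} with $L=M$, giving $[M,(1-K_{M,n})^{-1}]=(1-K_{M,n})^{-1}[M,K_{M,n}](1-K_{M,n})^{-1}$. By the integrable form \eqref{repr}--\eqref{ker5}, the operator $[M,K_{M,n}]$ has the finite-rank kernel $(x-y)K_n^M(x,y)\chi_J(y)=\sum_{j=0}^M\phi_j(x)\psi_j(y)\chi_J(y)$, so it equals $\sum_j|\phi_j\rangle\langle\chi_J\psi_j|$. Propagating $\phi_j$ through $(1-K_{M,n})^{-1}$ on the left yields $Q_j^{(n)}(\cdot\,;{\bf a})$ by definition \eqref{ker71}; propagating the functional $\chi_J\psi_j$ through $(1-K_{M,n})^{-1}$ on the right yields $\chi_J P_j^{(n)}(\cdot\,;{\bf a})$ after using the duality \eqref{ker78} to replace $\rho_{M,n}\chi_J$ by $\chi_J\rho'_{M,n}$ and then recognising \eqref{ker72}. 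This is exactly the content of \eqref{ker90}, whence
\[
(x-y)R_{M,n}(x,y)=\sum_{j=0}^M Q_j^{(n)}(x;{\bf a})\,P_j^{(n)}(y;{\bf a})\,\chi_J(y),
\]
which gives \eqref{res3}. Running the identical argument with $K'_{M,n}$ in place of $K_{M,n}$ --- its integrable form carrying the opposite overall sign, as recorded in \eqref{res2} --- and invoking \eqref{ker108}, yields \eqref{res4}.

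The only step needing a word of justification is the final division by $x-y$: it is legitimate precisely because $R_{M,n}(x,y)$ carries no distributional piece on the diagonal, equivalently because the rank-$(M+1)$ numerator $\sum_j Q_j^{(n)}(x;{\bf a})P_j^{(n)}(x;{\bf a})$ vanishes for $x\in J$. I would obtain this either by propagating through the resolvent the identity $\sum_j\phi_j(x)\psi_j(x)=0$ --- which holds because \eqref{repr} forces $\mathcal D(P_n(x),Q_n(x))=0$, i.e.\ $K_n^M$ is non-singular on the diagonal --- or simply by observing that $R_{M,n}=(1-K_{M,n})^{-1}K_{M,n}$ is Hilbert--Schmidt (in fact smooth) on the compact set $J$ and hence an honest integral kernel. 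This being routine, the argument is otherwise a direct transcription, at finite $n$, of Proposition~4.4 of \cite{St14}.
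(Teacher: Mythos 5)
Your argument is correct and follows essentially the same route as the paper: write $R_{M,n}=-1+(1-K_{M,n})^{-1}$, identify $(x-y)R_{M,n}(x,y)$ with the kernel of $[M,(1-K_{M,n})^{-1}]$, and evaluate that commutator via \eqref{ker70a} and the integrable form of $K_{M,n}$, which is precisely the content of \eqref{ker90} and \eqref{ker108}. Your closing remark on the legitimacy of dividing by $x-y$ (the diagonal vanishing $\sum_j Q_j^{(n)}P_j^{(n)}=0$) is a small addition the paper only records later, in \eqref{ker129b}.
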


\begin{proof}
Let us first calculate the kernel of the operator $R_{M,n}(x,y)$,
\begin{align}
&(x-y)R_{M,n}(x,y)=[M,R_{M,n}](x,y)=[M,-1+(1-K_{M,n})^{-1}](x,y)=\nn\\
&-[(1-K_{M,n})^{-1},M](x,y)=
\sum_{j=0}^M Q_j^{(n)}(x;{\bf a})P_j^{(n)}(y;{\bf a})\chi_J(y),\label{res5}
\end{align}
where we used the fact that $R_{M,n}=(1-K_{M,n})^{-1}K_{M,n}=-1+(1-K_{M,n})^{-1}$ and \eqref{ker90}.

Repeating these calculations for the operator $R'_{M,n}$ and using \eqref{ker108} we come to \eqref{res4}.
\end{proof}

Another important property of the kernel $R_{M,n}$ generalises Proposition 4.6 of \cite{St14}.
\begin{prop}\label{prop5}
The kernel $R_{M,n}(x,y)$ satisfies the partial differential equation
\beq
\left(\delta_x+\delta_y+\sum_{k=1}^{2m}\delta_{a_k}+1\right)R_{M,n}(x,y)=
(-1)^{M+1}\{nQ_0^{(n)}(x;{\bf a})+Q_1^{(n)}(x;{\bf a})\}P_M^{(n)}(y;{\bf a})\chi_J(y).\label{res6}
\eeq

\end{prop}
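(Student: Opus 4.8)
The identity \eqref{res6} records how the resolvent kernel $R_{M,n}(x,y)$ transforms under the simultaneous Euler operator $\delta_x+\delta_y+\sum_k\delta_{a_k}$, and the plan is to deduce it directly from the commutator computation already carried out in Lemma \ref{lem3}, with no new input. Two elementary facts drive the argument: $R_{M,n}=(1-K_{M,n})^{-1}-1$, so that $[\delta,R_{M,n}]=[\delta,(1-K_{M,n})^{-1}]$ since $\delta$ commutes with the identity operator; and the endpoint derivative of the resolvent kernel is already known explicitly, namely $\tfrac{\p}{\p a_k}R_{M,n}(x,y)=(-1)^kR_{M,n}(x,a_k)\rho_{M,n}(a_k,y)$, which is \eqref{ker94} (equivalently the kernel form of \eqref{ker82}).

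Concretely I would argue in four short steps. \emph{Step 1}: since $R_{M,n}(x,y)$ carries the cut-off $\chi_J(y)$ it vanishes for $y$ near $0$ and for $y$ large, so the boundary contribution in \eqref{ker70d} --- which is evaluated only in the $y$-variable --- drops out, and the kernel identity \eqref{ker70e} gives $[\delta,R_{M,n}](x,y)=(\delta_x+\delta_y+1)R_{M,n}(x,y)$. \emph{Step 2}: by the first fact above, the left-hand side equals the kernel of $[\delta,(1-K_{M,n})^{-1}]$, which Lemma \ref{lem3}, eq.~\eqref{ker74}, evaluates as
\[
(-1)^{M+1}\{nQ_0^{(n)}(x;{\bf a})+Q_1^{(n)}(x;{\bf a})\}P_M^{(n)}(y;{\bf a})\chi_J(y)-\sum_{k=1}^{2m}(-1)^ka_kR_{M,n}(x,a_k)\rho_{M,n}(a_k,y).
\]
\emph{Step 3}: multiplying the endpoint derivative formula by $a_k$ identifies the $k$-th summand above as $\delta_{a_k}R_{M,n}(x,y)$, so the whole sum is $\sum_{k=1}^{2m}\delta_{a_k}R_{M,n}(x,y)$. \emph{Step 4}: equating the two expressions for the kernel of $[\delta,(1-K_{M,n})^{-1}]$ from Steps 1--2 and moving this sum to the left yields exactly \eqref{res6}.

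I expect the only point calling for genuine care to be the use of \eqref{ker70e} in Step 1, exactly as in the proof of Lemma \ref{lem3}: although $R_{M,n}$ is not compactly supported in $x$, the integration by parts underlying \eqref{ker70d}--\eqref{ker70e} needs only the vanishing of $L(x,y)$ as $y$ tends to the boundary of $(0,\infty)$, which $\chi_J(y)$ supplies since $J$ is a finite union of bounded intervals staying away from the origin. Everything else is a purely mechanical combination of Lemma \ref{lem3} with the endpoint-derivative formula, with no further obstruction, and this is precisely why the result is ``the direct analog'' of Proposition 4.6 of \cite{St14}.
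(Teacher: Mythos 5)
Your argument is correct and coincides with the paper's own proof: both identify $[\delta,R_{M,n}]=[\delta,(1-K_{M,n})^{-1}]$ via $R_{M,n}=-1+(1-K_{M,n})^{-1}$, apply \eqref{ker70e} and Lemma \ref{lem3} (eq.~\eqref{ker74}), and absorb the endpoint sum as $\sum_k a_k\partial_{a_k}R_{M,n}$ using \eqref{ker94}. Your extra remark justifying the vanishing of the boundary term in \eqref{ker70d} via the cut-off $\chi_J(y)$ is a point the paper leaves implicit, but it is not a different route.
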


\begin{proof}
By \eqref{ker70e} and {\bf Lemma \ref{lem3}} we have
\begin{align}
&\left(\delta_x+\delta_y+1\right)R_{M,n}(x,y)=[\delta,R_{M,n}(x,y)]=
[\delta,(1-K_{M,n})^{-1}](x,y)=\nn\\
&(-1)^{M+1}\{nQ_0^{(n)}(x;{\bf a})+Q_1^{(n)}(x;{\bf a})\}P_M^{(n)}(y;{\bf a})\chi_J(y)-
\sum_{k=1}^{2m}(-1)^k a_k R_{M,n}(x,a_k)\rho_{M,n}(a_k,y). \label{res7}
\end{align}
Finally, from  \eqref{ker94} we have
\beq
\sum_{k=1}^{2m}a_k\frac{\partial}{\partial a_k}
R_{M,n}(x,y)=\sum_{k=1}^{2m}(-1)^k a_k R_{M,n}(x,a_k)\rho_{M,n}(a_k,y) \label{res8}
\eeq
and \eqref{res6} follows.

\end{proof}

\nsection{Strahov's equations for primary variables}

We now define analogs of Strahov primary variables for finite $n$
for $j=0,\ldots,M$, $k=1,\ldots,2m$
\beq
x_{j,k}^{(n)}=\ep_k(1-K_{M,n})^{-1}\phi^{(n)}_j(a_k),\quad
y_{j,k}^{(n)}=\ep_k(1-K'_{M,n})^{-1}\psi^{(n)}_j(a_k),\label{ker114}
\eeq
where $\ep_k$ are some constants. It is easy to see that they enter  the equations
in  Theorems \ref{Th2}, \ref{Th3} only as $\ep_k^2$ and it is convenient to choose
\beq
\ep_k^2=(-1)^{k+1}.\label{ker115}\eeq
We realise this by following the choice of \cite{St14},
\beq
\ep_{2k}=i,\quad \ep_{2k-1}=1, \quad k=1,\ldots,m. \label{ker116}
\eeq

Let also define variables for $j=0,\ldots,M$
\begin{align}
&\xi^{(n)}_j=(-1)^M\left\{ n V_{0,j}^{(n)}({\bf a})+V_{1,j}^{(n)}({\bf a})-(-1)^{j}
e_{M+1-j}(\boldsymbol{\nu})\right\},\quad \boldsymbol{\nu}=(\nu_0,\ldots,\nu_M),\nonumber\\
&\eta_j^{(n)}=(-1)^MV_{j,M}^{(n)}({\bf a}).\label{ker117}
\end{align}

\begin{theorem} \label{Th4}
The functions $x_{j,k}^{(n)}$, $y_{j,k}^{(n)}$, $\xi^{(n)}_j$ and $\eta_j^{(n)}$
satisfy systems of partial differential equations
\begin{enumerate}
\item For $j=0,\ldots,M-1,$ $l=1,\ldots,2m$
\beq
a_l\frac{\p x_{j,l}^{(n)}}{\p a_l}=-\eta_j^{(n)}(nx_{0,l}^{(n)}+x_{1,l}^{(n)})-
x_{j+1,l}^{(n)}+\sum_{\substack{k=1\\k\neq l}}^{2m}
\frac{a_k }{a_l-a_k}x_{j,k}^{(n)}\sum_{i=0}^M x_{i,l}^{(n)}y_{i,k}^{(n)}\label{ker118}
\eeq
and for $j=M$, $l=1,\ldots,2m$
\beq
a_l\frac{\p x_{M,l}^{(n)}}{\p a_l}=-(\eta_M^{(n)}+(-1)^Ma_l)(nx_{0,l}^{(n)}+x_{1,l}^{(n)})
+
\sum_{i=0}^M x_{i,l}^{(n)}\Bigg\{\xi^{(n)}_i+
\sum_{\substack{k=1\\k\neq l}}^{2m}\frac{a_k }{a_l-a_k}x_{M,k}^{(n)}y_{i,k}^{(n)}\Bigg\}.
\label{ker119}
\eeq

\item For $j=0$, $l=1,\ldots,2m$
\beq
a_l\frac{\p y_{0,l}^{(n)}}{\p a_l}=\left[(-1)^Mna_l-\xi_0^{(n)}\right]y_{M,l}^{(n)}
+
\sum_{i=0}^M y_{i,l}^{(n)}\Bigg\{n\eta^{(n)}_i+
\sum_{\substack{k=1\\k\neq l}}^{2m}\frac{a_k }{a_k-a_l}x_{i,k}^{(n)}y_{0,k}^{(n)}\Bigg\},\label{ker120}
\eeq
for $j=1$, $l=1,\ldots,2m$
\beq
a_l\frac{\p y_{1,l}^{(n)}}{\p a_l}=\left[(-1)^Ma_l-\xi_1^{(n)}\right]y_{M,l}^{(n)}
+y_{0,l}^{(n)}+
\sum_{i=0}^M y_{i,l}^{(n)}\Bigg\{\eta^{(n)}_i+
\sum_{\substack{k=1\\k\neq l}}^{2m}\frac{a_k }{a_k-a_l}x_{i,k}^{(n)}y_{1,k}^{(n)}\Bigg\},\label{ker121}
\eeq
and for $j=2,\ldots,M$, $l=1,\ldots,2m$
\beq
a_l\frac{\p y_{j,l}^{(n)}}{\p a_l}=-\xi_j^{(n)}y_{M,l}^{(n)}
+y_{j-1,l}^{(n)}+
\sum_{\substack{k=1\\k\neq l}}^{2m}\frac{a_k }{a_k-a_l}y_{j,k}^{(n)}
\sum_{i=0}^M x_{i,k}^{(n)} y_{i,l}^{(n)}.\label{ker122}
\eeq
\item
For $j=0,\ldots,M$ and $k,l=1,\ldots,2m$, $k\neq l$
\beq
\frac{\p x_{j,l}^{(n)}}{\p a_k}=-\frac{x_{j,k}^{(n)}}{a_l-a_k}\sum_{i=0}^M
x_{i,l}^{(n)}y_{i,k}^{(n)},\quad
\frac{\p y_{j,l}^{(n)}}{\p a_k}=-\frac{y_{j,k}^{(n)}}{a_k-a_l}\sum_{i=0}^M
x_{i,k}^{(n)}y_{i,l}^{(n)}.
\label{ker123}
\eeq
\item
For $j=0,\ldots,M$, $l=1,\ldots,2m$
\beq
\frac{\p \xi_j^{(n)}}{\p a_l}=
(-1)^{M+1}(nx_{0,l}^{(n)}+x_{1,l}^{(n)})y_{j,l}^{(n)},\quad
\frac{\p \eta_j^{(n)}}{\p a_l}=
(-1)^{M+1}x_{j,l}^{(n)}y_{M,l}^{(n)}.\label{ker124}
\eeq
\end{enumerate}
\end{theorem}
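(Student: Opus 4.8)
The plan is to obtain every equation of Theorem~\ref{Th4} by specialising the spatial variable in Propositions~\ref{Th2} and~\ref{Th3} to an endpoint $a_l$ and rewriting everything with the resolvent formulas \eqref{res3}--\eqref{res4}. From the definitions \eqref{ker71}--\eqref{ker72} and \eqref{ker114} one has $x_{j,l}^{(n)}=\epsilon_l Q_j^{(n)}(a_l;{\bf a})$ and $y_{j,l}^{(n)}=\epsilon_l P_j^{(n)}(a_l;{\bf a})$, so that the primary variable depends on $a_l$ both through its first argument and through the parameter vector ${\bf a}$; the chain rule gives
\[
a_l\frac{\partial x_{j,l}^{(n)}}{\partial a_l}=\epsilon_l\Big[(\delta_x Q_j^{(n)})(a_l;{\bf a})+a_l\big(\tfrac{\partial}{\partial a_l}Q_j^{(n)}\big)(a_l;{\bf a})\Big].
\]
Into the first term I substitute the right-hand side of \eqref{ker80} (or \eqref{ker81} when $j=M$) evaluated at $x=a_l$, and into the second I substitute \eqref{ker82}. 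The $k=l$ contribution to the resolvent sum in \eqref{ker80}/\eqref{ker81}, namely $-(-1)^l a_l R_{M,n}(a_l,a_l)Q_j^{(n)}(a_l;{\bf a})$, is cancelled exactly by the second term $a_l(-1)^l R_{M,n}(a_l,a_l)Q_j^{(n)}(a_l;{\bf a})$; hence only the off-diagonal terms $k\neq l$ survive and the diagonal value of the resolvent kernel is never needed.

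Having reached an expression involving only $R_{M,n}(a_l,a_k)$ with $k\neq l$, endpoint values of $Q_i^{(n)},P_i^{(n)}$, and the $V_{i,j}^{(n)}$, I insert \eqref{res3} for each $R_{M,n}(a_l,a_k)$ and convert to the primary variables using $\epsilon_k Q_i^{(n)}(a_k;{\bf a})=x_{i,k}^{(n)}$, $\epsilon_k P_i^{(n)}(a_k;{\bf a})=y_{i,k}^{(n)}$, and the normalisation $\epsilon_k^2=(-1)^{k+1}$ of \eqref{ker115}, so that $P_i^{(n)}(a_k)Q_j^{(n)}(a_k)=(-1)^{k+1}x_{j,k}^{(n)}y_{i,k}^{(n)}$ and every stray $\epsilon_k$ enters squared. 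The $V_{i,j}^{(n)}$ are replaced by $\xi_j^{(n)},\eta_j^{(n)}$ through the definitions \eqref{ker117}; in particular the identity $(-1)^{M+1}\big(a_l+V_{M,M}^{(n)}\big)=-(\eta_M^{(n)}+(-1)^M a_l)$ produces the $a_l$-dependent coefficient in \eqref{ker119}. This yields \eqref{ker118}--\eqref{ker119}. The $y$-equations \eqref{ker120}--\eqref{ker122} come out the same way from Proposition~\ref{Th3} and \eqref{ker102}, with the three cases $j=0$, $j=1$, $2\le j\le M$ of \eqref{ker99}--\eqref{ker101} mapping directly onto the three displayed equations; the extra minus sign in \eqref{res4} is responsible for the denominators $a_k-a_l$ there rather than $a_l-a_k$.

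The remaining relations are immediate. For \eqref{ker123} one evaluates \eqref{ker82} and \eqref{ker102} at $x=a_l$ with $k\neq l$ — since $a_l$ does not depend on $a_k$ there is no spatial term — and again substitutes \eqref{res3}--\eqref{res4} and the conversion above. For \eqref{ker124} one differentiates the definitions \eqref{ker117} of $\xi_j^{(n)},\eta_j^{(n)}$ and uses \eqref{ker103} together with $Q_i^{(n)}(a_l)P_j^{(n)}(a_l)=(-1)^{l+1}x_{i,l}^{(n)}y_{j,l}^{(n)}$.

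I expect the only genuine work to be the sign and prefactor bookkeeping: tracking the $(-1)^M$'s coming from \eqref{ker6}, the $(-1)^k$'s from the $\partial/\partial a_k$ formulas, and the $(-1)^{k+1}=\epsilon_k^2$'s, and in particular confirming that every $\epsilon_k$ not paired with an endpoint value of $Q$ or $P$ occurs squared — this is precisely what closes the system in the variables $x_{j,k}^{(n)},y_{j,k}^{(n)},\xi_j^{(n)},\eta_j^{(n)}$ alone. The apparent subtlety of evaluating kernels containing $\chi_J$ at the endpoints of $J$ is harmless, since $R_{M,n}$ and $R'_{M,n}$ extend continuously to $\partial J$ and, as noted above, only their off-diagonal endpoint values enter.
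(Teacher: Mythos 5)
Your proposal is correct and follows exactly the paper's (very terse) argument: the paper's entire proof is the remark that one sets $x=a_l$ in Propositions \ref{Th2}--\ref{Th3} and in the resolvent formulas \eqref{res3}--\eqref{res4}, which is precisely what you carry out, including the key cancellation of the diagonal resolvent term between the spatial and parameter derivatives and the $\epsilon_k^2=(-1)^{k+1}$ bookkeeping.
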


The proof of this theorem is straightforward. We set $x=a_l$ in {\bf Propositions (\ref{Th2}-\ref{Th3})}
and in formulas for resolvent kernels (\ref{res3}-\ref{res4}).

Comparison with the corresponding result in \cite[Prop.~3.3]{St14} one sees that the
modification of Strahov equations to finite $n$ is very simple. It looks even simpler
at the level of symplectic structure. Consider a dynamical system with variables
$(x_{j,k}^{(n)},\xi_j^{(n)};y_{j,k}^{(n)},\eta_{j}^{(n)})$ and introduce
the same Poisson brackets as in \cite[eq. (3.24)]{St14},
\beq
\{x_{j,l}^{(n)},y_{i,k}^{(n)}\}=\frac{1}{a_k}\delta_{k,l}\delta_{i,j},\quad
\{\xi_j^{(n)},\eta_i^{(n)}\}=(-1)^M\delta_{i,j}.\label{ker125}
\eeq
with all remaining Poisson brackets equal to $0$.

\begin{theorem}\label{Th5}
The system of equations from the Theorem \ref{Th4} associated with the kernel $K_{M,n}$
can be written in Hamiltonian form
\begin{align}
\frac{\p x_{j,k}^{(n)}}{\p a_l}=\left\{x_{j,k}^{(n)},H_l^{(n)}\right\}&,\quad
\frac{\p y_{j,k}^{(n)}}{\p a_l}=\left\{y_{j,k}^{(n)},H_l^{(n)}\right\},\label{ker126a} \\
\frac{\p \xi_j^{(n)}}{\p a_l}=\left\{\xi_j^{(n)},H_l^{(n)}\right\}&,\quad
\frac{\p \eta_j^{(n)}}{\p a_l}=\left\{\eta_j^{(n)},H_l^{(n)}\right\},\label{ker126}
\end{align}
for $j=0,\ldots,M$, $k,l=1,\ldots,2m$. The Hamiltonians are given by
\begin{align}
&H_l^{(n)}=\Big\{(-1)^{M+1}a_ly_{M,l}^{(n)}-\sum_{i=0}^M\eta_i^{(n)}y_{i,l}^{(n)}\Big\}
\left(nx_{0,l}^{(n)}+x_{1,l}^{(n)}\right)-\nonumber\\
&-\sum_{j=0}^{M-1}x_{j+1,l}^{(n)}y_{j,l}^{(n)}+y_{M,l}^{(n)}\sum_{i=0}^M\xi_i^{(n)}
x_{i,l}^{(n)}+
\sum_{\substack{k=1\\k\neq l}}^{2m}\frac{a_k }{a_l-a_k}
\sum_{i,j=0}^M x_{i,k}^{(n)}x_{j,l}^{(n)} y_{i,l}^{(n)}y_{j,k}^{(n)}.\label{ker127}
\end{align}

The Hamiltonians $H_l^{(n)}$ are in involution
\beq
\{H_l^{(n)},H_r^{(n)}\}=0,
\label{ker127a}
\eeq
where $1\leq l,r\leq 2m$.

\end{theorem}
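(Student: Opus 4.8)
I would establish the two parts of Theorem~\ref{Th5} in turn.

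\emph{The Hamiltonian form} is a direct verification. Under \eqref{ker125} the $x$-variables Poisson-commute among themselves, the $y$-variables likewise, and the $(x,y)$- and $(\xi,\eta)$-blocks are coupled only through the canonical pairings, so
\[
\{x_{j,k}^{(n)},H_l^{(n)}\}=\frac{1}{a_k}\,\partial_{y_{j,k}^{(n)}}H_l^{(n)},\qquad
\{y_{j,k}^{(n)},H_l^{(n)}\}=-\frac{1}{a_k}\,\partial_{x_{j,k}^{(n)}}H_l^{(n)},\qquad
\{\xi_j^{(n)},H_l^{(n)}\}=(-1)^M\partial_{\eta_j^{(n)}}H_l^{(n)},
\]
and $\{\eta_j^{(n)},H_l^{(n)}\}=-(-1)^M\partial_{\xi_j^{(n)}}H_l^{(n)}$. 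One then differentiates \eqref{ker127}, splitting off the ``off-diagonal'' sum $\sum_{k\ne l}\frac{a_k}{a_l-a_k}(\cdots)$: its $y_{j,k}^{(n)}$- and $x_{j,k}^{(n)}$-derivatives ($k\neq l$) give \eqref{ker123}, while the remaining (``diagonal'') terms of $H_l^{(n)}$ reproduce \eqref{ker118}--\eqref{ker122} once the $\nu$-constants $e_{M+1-j}(\boldsymbol\nu)$ absorbed into $\xi_j^{(n)}$ through \eqref{ker117} are written out, and the $\eta$- and $\xi$-derivatives yield \eqref{ker124}. Beyond sign bookkeeping (the factors $(-1)^M$) nothing is at stake here.

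\emph{Involution} I would prove by adapting Strahov's argument for the hard-edge case \cite{St14}. Since the $a_l$ are genuinely independent, $\partial_{a_l}\partial_{a_r}=\partial_{a_r}\partial_{a_l}$ on each primary variable; inserting the Hamiltonian equations \eqref{ker126a}--\eqref{ker126} and applying the Jacobi identity --- while tracking the explicit $a_k$-dependence both of $H_l^{(n)}$ and of the Poisson tensor \eqref{ker125} (the latter conveniently removed by the rescaling $y_{j,k}^{(n)}\mapsto a_ky_{j,k}^{(n)}$, which puts \eqref{ker125} in constant canonical form) --- reduces $\{H_l^{(n)},H_r^{(n)}\}$ to a function of the $a_k$ alone, i.e.\ a Casimir, which is then pinned down to $0$ by evaluation in a configuration where the $K_{M,n}$-dressing trivialises. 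The one substantive input is the bilinear identity $\sum_{j=0}^M x_{j,l}^{(n)}y_{j,l}^{(n)}=0$ for each $l$, equivalently $\sum_j Q_j^{(n)}(x;{\bf a})P_j^{(n)}(x;{\bf a})=0$ for $x\in J$: this is exactly the absence of a diagonal pole in the resolvent kernel \eqref{res3}, hence a consequence of the integrable form of $R_{M,n}$ and Proposition~\ref{prop5}, and together with $\sum_j Q_j^{(n)}(a_l;{\bf a})P_j^{(n)}(a_r;{\bf a})=(a_l-a_r)R_{M,n}(a_l,a_r)$ (namely \eqref{res3} at $(a_l,a_r)$) it is what collapses the cross terms.

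\emph{The main obstacle} is precisely this last collapse. The finite-$n$ Hamiltonian \eqref{ker127} differs from its hard-edge counterpart by the appearance of the combination $nx_{0,l}^{(n)}+x_{1,l}^{(n)}$ in place of a single primary variable, by the overall sign, and by the extra term $-\delta_{j,0}\,xg(x)$ in $\psi_j$ in \eqref{ker6} (cf.\ the discussion following Theorem~\ref{Th1}); all of these feed into $\xi_j^{(n)},\eta_j^{(n)}$ and into the bracket $\{H_l^{(n)},H_r^{(n)}\}$, and one must check that the new $n$-linear terms reorganise --- via $\sum_j x_{j,l}^{(n)}y_{j,l}^{(n)}=0$ and the resolvent identity above --- so that the cancellation to $0$ persists. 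As an independent cross-check I would also read off involution from the $(M+1)\times(M+1)$ Schlesinger form established later in the paper, where it is the classical statement that the non-autonomous isomonodromic Hamiltonians Poisson-commute once \eqref{ker125} is brought to constant form.
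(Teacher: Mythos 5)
Your verification of the Hamiltonian form is correct and is exactly what the paper does: one computes $\{x_{j,k}^{(n)},H_l^{(n)}\}=\frac{1}{a_k}\partial_{y_{j,k}^{(n)}}H_l^{(n)}$, etc., from \eqref{ker125} and matches the result term by term against (\ref{ker118})--(\ref{ker124}); the paper dispatches this in one sentence.

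For the involution, however, you have departed from the paper and introduced a genuine gap. The paper proves \eqref{ker127a} as an \emph{unconstrained polynomial identity}: treating all $x_{j,k}^{(n)},y_{j,k}^{(n)},\xi_j^{(n)},\eta_j^{(n)}$ as independent coordinates on phase space, one expands $\{H_l^{(n)},H_r^{(n)}\}$ with \eqref{ker125} and checks that it vanishes identically (``tedious but straightforward''). Your argument instead works on the particular solution supplied by the Fredholm kernel: commutativity of $\partial_{a_l}\partial_{a_r}$ on the primary variables, combined with the Jacobi identity, yields $\{f,\{H_l^{(n)},H_r^{(n)}\}+\partial^{\mathrm{expl}}_{a_r}H_l^{(n)}-\partial^{\mathrm{expl}}_{a_l}H_r^{(n)}\}=0$ only at points lying on that one $2m$-parameter trajectory, which is a measure-zero subset of the $2(2m+1)(M+1)$-dimensional phase space. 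That is not enough to conclude the bracket is a Casimir (hence a function of the $a_k$ alone), so the final step of ``pinning it down to $0$ by evaluation in a degenerate configuration'' is not justified; and even granting the Casimir step, one must separately verify the symmetry $\partial^{\mathrm{expl}}_{a_r}H_l^{(n)}=\partial^{\mathrm{expl}}_{a_l}H_r^{(n)}$ before $\{H_l^{(n)},H_r^{(n)}\}$ itself is isolated. Relatedly, the relation $\sum_{j=0}^M x_{j,l}^{(n)}y_{j,l}^{(n)}=0$ that you single out as ``the one substantive input'' is a property of the particular solution (continuity of $R_{M,n}$ on the diagonal, \eqref{ker129b}), not of general phase-space points; the involution identity neither needs it nor may legitimately invoke it --- the paper uses \eqref{ker129b} elsewhere (for Proposition \ref{Th6} and the integrals of motion), but not here. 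Your suggested cross-check via the Schlesinger form is a reasonable pointer, but note that Proposition \ref{prop6} is itself derived from Theorem \ref{Th4}, and the classical involution of Schlesinger-type Hamiltonians with the extra $\Tr(C_nA_n^{(l)})+a_l\Tr(E_nA_n^{(l)})$ terms is established by precisely the direct computation you are trying to avoid. The fix is simply to carry out that computation on the free polynomial algebra generated by the primary variables.
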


Using the Poisson brackets \eqref{ker125} it is easy to see that
equations  (\ref{ker118}-\ref{ker124}) follow from (\ref{ker126a},\ref{ker126}).
The relation \eqref{ker127a} is a tedious but straightforward calculation
with the use of \eqref{ker125}.

Comparing \eqref{ker127} with eq. (3.28) in \cite{St14} we see that
the only modification of the Hamiltonians
from the case $n\to\infty$ to finite $n$ reduces to the change
\beq
x_{0,l}\rightarrow nx_{0,l}^{(n)}+x_{1,l}^{(n)}.\label{ker128}
\eeq
in the first term of \eqref{ker127}.

\begin{prop}\label{Th6}
The Hamiltonians $H_l^{(n)}$ \eqref{ker127} can be written as
\beq
H_l^{(n)}=a_l\frac{\p}{\p a_l}\log\left(\det(1-K_{M,n})\right).\label{ker127b}
\eeq
\end{prop}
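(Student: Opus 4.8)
The plan is to establish the identity $H_l^{(n)} = a_l \frac{\partial}{\partial a_l}\log\det(1-K_{M,n})$ by comparing the logarithmic derivative of the Fredholm determinant with the explicit Hamiltonian \eqref{ker127}, rewritten in terms of the resolvent kernel. First I would recall the standard formula $\frac{\partial}{\partial a_l}\log\det(1-K_{M,n}) = -\Tr\!\big((1-K_{M,n})^{-1}\frac{\partial K_{M,n}}{\partial a_l}\big)$, and combine it with \eqref{ker92}, which gives $\frac{\partial}{\partial a_l} K_{M,n}(x,y) = (-1)^l K_n^M(x,a_l)\delta(y-a_l)$. Taking the trace then yields
\beq
\frac{\partial}{\partial a_l}\log\det(1-K_{M,n}) = (-1)^l\big(-1+\rho_{M,n}(a_l,a_l)\big)\,\cdot(-1) = (-1)^{l+1}R_{M,n}(a_l,a_l)\quad(\text{in the limiting sense}).\nn
\eeq
Strictly the diagonal value $R_{M,n}(a_l,a_l)$ has to be interpreted as a limit because of the $\chi_J$ factor and the $1/(x-y)$ singularity in \eqref{res3}; so the first technical step is to give meaning to $a_l R_{M,n}(a_l,a_l)$, using the fact that $\sum_{j=0}^M Q_j^{(n)}(x;{\bf a})P_j^{(n)}(y;{\bf a})$ vanishes on the diagonal in a way that cancels $x-y$. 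This is exactly the analogue of the Tracy--Widom computation of the $\log$-derivative of a Fredholm determinant and of Proposition 4.7 / Theorem 1 of \cite{St14}.

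The cleanest route to the diagonal limit is via the PDE for $R_{M,n}$ in Proposition \ref{prop5}. Setting $x=y=a_l$ in \eqref{res6} and using that the left side becomes (after multiplying by $a_l$ and using the scaling/coincidence structure) exactly $a_l$ times the diagonal resolvent, one expresses $a_l R_{M,n}(a_l,a_l)$ in terms of the primary variables $x_{j,l}^{(n)}$, $y_{j,l}^{(n)}$. Concretely, I would evaluate $\lim_{y\to a_l}(x-y)R_{M,n}(x,y)$ from \eqref{res3} by L'Hôpital in the variable $x$, bringing down a $\delta_x$ acting on $\sum_j Q_j^{(n)}(x;{\bf a})$, then substitute the differential equations \eqref{ker80}--\eqref{ker81} for $\delta_x Q_j^{(n)}$ at $x=a_l$. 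The boundary/resolvent terms $\sum_{k\neq l}$ in those equations reproduce precisely the off-diagonal double sum $\sum_{k\neq l} \frac{a_k}{a_l-a_k}\sum_{i,j} x_{i,k}^{(n)}x_{j,l}^{(n)}y_{i,l}^{(n)}y_{j,k}^{(n)}$ appearing in \eqref{ker127}, while the $V$-dependent and $x$-dependent terms reproduce the $\xi_i^{(n)}$, $\eta_i^{(n)}$ and $(-1)^{M+1}a_l y_{M,l}^{(n)}$ contributions; the residual $-\sum_{j=0}^{M-1}x_{j+1,l}^{(n)}y_{j,l}^{(n)}$ comes from the $-Q_{j+1}^{(n)}$ terms in \eqref{ker80}. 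Matching the result against \eqref{ker127} term by term, using the definitions \eqref{ker117} of $\xi_j^{(n)},\eta_j^{(n)}$ and the choice $\ep_k^2=(-1)^{k+1}$ from \eqref{ker115}, then gives $H_l^{(n)} = (-1)^{l+1}\,a_l R_{M,n}(a_l,a_l) = a_l\frac{\partial}{\partial a_l}\log\det(1-K_{M,n})$.

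Alternatively, and perhaps more robustly, one can bypass the delicate diagonal limit entirely: define $\widetilde H_l^{(n)} := a_l\frac{\partial}{\partial a_l}\log\det(1-K_{M,n})$ and verify directly that $\widetilde H_l^{(n)}$ generates the same flows as $H_l^{(n)}$ through the Poisson brackets \eqref{ker125}, i.e.\ that $\frac{\partial X}{\partial a_l} = \{X,\widetilde H_l^{(n)}\}$ for $X$ ranging over the primary variables. Since by Theorem \ref{Th5} the flows determine the $a_l$-dependence of all variables, and since $\widetilde H_l^{(n)}$ and $H_l^{(n)}$ are both polynomial in the primary variables (the former after expressing $\partial_{a_l}\log\det$ via \eqref{ker103}-type formulas, namely $\partial_{a_l}\log\det(1-K_{M,n})$ is built from $Q_i^{(n)}(a_l)P_j^{(n)}(a_l)$ combinations), equality of the generated flows forces $\widetilde H_l^{(n)} - H_l^{(n)}$ to be an $a_l$-independent Casimir, which one fixes to zero by an initial-condition/degenerate-interval check. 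The main obstacle in either approach is the careful treatment of the coincident-point limit $R_{M,n}(a_l,a_l)$ and the accompanying distributional manipulations with $\chi_J$ and $\delta(x-a_k)$; once that is handled rigorously (following the $M\to\infty$ template in \cite{St14}), the remaining algebra is a direct, if lengthy, identification of terms.
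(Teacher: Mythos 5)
Your primary route is exactly the paper's proof: the logarithmic derivative of the Fredholm determinant equals $(-1)^{l+1}R_{M,n}(a_l,a_l)$ via \eqref{ker92}, and the diagonal value is obtained from the integrable form \eqref{res3} by the coincidence limit (using $\sum_j P_j^{(n)}Q_j^{(n)}=0$ on the diagonal, then substituting \eqref{ker80}--\eqref{ker81} for $\delta_x Q_j^{(n)}$ at $x=a_l$) and matched term by term against \eqref{ker127}. The proposal is correct; the alternative Poisson-bracket argument in your last paragraph is not needed.
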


\begin{proof}
The proof is standard (see, for example, \cite[Ex.~9.3 q.1]{Forrester2010})
and based on calculation of the trace of the resolvent operator $R_{M,n}$.
Using \eqref{ker63} and \eqref{ker92}  we obtain
\beq
\left((1-K_{M,n})^{-1}\frac{\p K_{M,n}}{\p a_l}\right)(x,y)=
(-1)^lR_{M,n}(x,a_l)\delta(a_l-y)\label{ker127c}
\eeq
and
\beq
\frac{\p}{\p a_l}\log\left(\det(1-K_{M,n})\right)
=-\Tr\left((1-K_{M,n})^{-1}\frac{\p K_{M,n}}{\p a_l}\right)=
(-1)^{l+1} R_{M,n}(a_l,a_l).\label{ker127d}
\eeq
We can calculate $R_{M,n}(a_l,a_l)$ by taking the limit $x\to y=a_l$ in \eqref{res3}.
First notice that from continuity of the kernel $R_{M,n}(x,y)$  at $x=y$
we have
\beq
\sum_{j=0}^M P_{j}^{(n)}(x;{\bf a})\>
Q_{j}^{(n)}(x;{\bf a})=0\label{ker129b}
\eeq
and as a result we obtain
\beq
a_l R_{M,n}(a_l,a_l)=\sum_{j=0}^M \left.P_{j}^{(n)}(a_l;{\bf a})\> x\frac{\p}{\p x}
Q_{j}^{(n)}(x;{\bf a})\right|_{x=a_l}=(-1)^{l+1}H_l^{(n)}, \label{ker128a}
\eeq
where we used (\ref{ker80}-\ref{ker81}) to calculate the the derivatives
of $Q_{j}^{(n)}(x;{\bf a})$ at $x=a_l$ for $j=0,\ldots,m$
and the explicit expressions \eqref{ker127} for $H_l^{(n)}$.
Comparing \eqref{ker127d} and \eqref{ker128a} we
obtain
\eqref{ker127b}.
\end{proof}
We can now define a sequence of $\tau$-functions $\tau_n$, $n=1,2,\ldots$
\beq
\tau_n({\bf a})=\det(1-K_{M,n}),\label{ker130}
\eeq
and the closed form
\beq
w_n({\bf a})=d\log\tau_n({\bf a}). \label{ker131}
\eeq
We have
\beq
w_n({\bf a})=\sum_{l=1}^{2m}H_l^{(n)}\frac{da_l}{a_l}.\label{ker132}
\eeq

For the simplest case $m=1$, $J=(0,s)$, $a_1=0$, $a_2=s$ the system (\ref{ker118}-\ref{ker124})
becomes the system of nonlinear differential equations in $s$.
The gap probability $E_n^M(0;J)$ coincides with the tau-function $\tau_n(s)$
which is given
by
\beq
\tau_n(s)=\exp\left\{\int_0^s{\frac{dt}{t} H_n(t)}\right\}, \label{ker133}
\eeq
where $H_n(s)=H_2^{(n)}(s)$.
Using variables $x_j^{(n)}=x_{j,2}^{(n)}$, $y_j^{(n)}=y_{j,2}^{(n)}$
we can rewrite \eqref{ker127} for $l=2$ as
\beq
H_n(s)=(-1)^{M+1}s(nx_0+x_1)y_M-
\sum_{i=0}^{M-1}x_{i+1}^{(n)}y_i^{(n)}+\sum_{i=0}^M\left\{
y_M^{(n)}\xi_i^{(n)}x_i^{(n)}-(nx_0+x_1)\eta_i^{(n)}y_i^{(n)}\right\}.\label{ker134}
\eeq
Now {\bf Proposition \ref{prop5}} gives
\beq
(sR_{M,n}(s,s))'=(-1)^{M+1}\ep_2^2(nx_0+x_1)y_M=(-1)^M(nx_0+x_1)y_M.\label{ker135}
\eeq
Comparing this with the equations \eqref{ker124} we can integrate \eqref{ker135} and derive
\beq
sR_{M,n}(s,s)=-(n\eta_0(s)+\eta_1(s))+C_0,\label{ker136}
\eeq
where $C_0$ is the integration constant. Taking into account \eqref{ker73} and \eqref{ker117}
we obtain in the limit $s\to0$ $\eta_j(0)=0$, $j=0,\ldots,M$ and as a result $C_0=0$.
Comparing \eqref{ker136} with \eqref{ker128a} we finally get
an alternative expression for the Hamiltonian \eqref{ker134}
\beq
H_n(s)=n\eta_0(s)+\eta_1(s) \label{ker137}
\eeq
and the expression for the tau-function in terms of primary variables
\beq
\tau_n(s)=\exp\left\{\int_0^s{\frac{n\eta_0(t)+\eta_1(t)}{t}dt}\right\}.\label{ker138}
\eeq
The sequence of tau-functions $\tau_n(s)$ should satisfy Toda-type recurrence relations,
but we postpone investigation of this possibility to another occasion.

\nsection{Isomonodromic deformation}

Here we briefly discuss the isomonodromic deformation of the system given by
the {\bf Theorem \ref{Th4}}. Again this section is a straightforward generalization of
 results of Section 3.4 in \cite{St14}
to finite $n$.
Introduce a set of $(M+1)\times(M+1)$ matrices
\beq
E_n=(-1)^{M+1}
\left(\begin{array}{ccccc}
0& 0 & 0 & \ldots & 0 \\
0& 0 & 0 & \ldots & 0 \\
\vdots & \, & \ddots  & \, & \vdots \\
0& 0 & 0 & \ldots & 0 \\
n & 1 & 0 & \ldots & 0 \end{array}\right),\>\>
C_n=\left(\begin{array}{lllll}
-n\eta_0 & -\eta_0-1 & 0 & \ldots & 0 \\
-n\eta_1 & -\eta_1 & -1 & \ldots & 0\\
\phantom{aaa}\vdots & \phantom{aa}\vdots & \ddots  & \, & \phantom{,}\vdots \\
-n\eta_{M-1} & -\eta_{M-1} & 0 & \ldots & -1\\
-n\eta_M+\xi_0 & -\eta_M+\xi_1 & \xi_2 & \ldots & \xi_M
\end{array}\right)\label{Lax1}
\eeq
and a set of residue matrices
\beq
A^{(l)}_n=\left(\begin{array}{c}x_{0,l}^{(n)}\\
x_{1,l}^{(n)}\\
\vdots \\
x_{M,l}^{(n)}\end{array}\right)\otimes
(y_{0,l}^{(n)},\,y_{1,l}^{(n)},\,\ldots,\,y_{M,l}^{(n)}).\label{Lax2}
\eeq

The following proposition is the analogue of Proposition 3.6 from \cite{St14}
\begin{prop}\label{prop6}
The differential equations (\ref{ker118}-\ref{ker124}) can be rewritten in the matrix
form
\begin{align}
a_l\frac{\p}{\p a_l}A^{(l)}_n&=[C_n+a_l E_n,A^{(l)}_n]+\sum_{\substack{k=1\\k\neq l}}^{2m}
\frac{a_k}{a_l-a_k}[A^{(k)}_n,A^{(l)}_n], \quad l=1,\ldots,2m, \label{Lax3}\\
\frac{\p}{\p a_k}A^{(l)}_n&=\frac{[A^{(l)}_n,A^{(k)}_n]}{a_l-a_k},\quad
k\neq l=1,\ldots,2m,\label{Lax4}\\
\frac{\p}{\p a_l}C_n&=[E_n,A^{(l)}_n],\quad l=1,\ldots,2m.\label{Lax5}
\end{align}
The Hamiltonians $H_l^{(n)}$ have the form
\beq
H_l^{(n)}=\Tr(C_n A^{(l)}_n)+a_l\Tr(E_nA^{(l)}_n)+\sum_{\substack{k=1\\k\neq l}}^{2m}
\frac{a_k}{a_l-a_k}\Tr(A^{(k)}_n A^{(l)}_n),\quad l=1,\ldots,2m. \label{Lax6}
\eeq
\end{prop}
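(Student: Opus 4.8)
The plan is to exploit the fact that each residue matrix $A_n^{(l)}$ in \eqref{Lax2} has rank one. Write $\mathbf{x}^{(l)}=(x_{0,l}^{(n)},\dots,x_{M,l}^{(n)})^{T}$ for the column and $\mathbf{y}^{(l)}=(y_{0,l}^{(n)},\dots,y_{M,l}^{(n)})$ for the row, so that $A_n^{(l)}=\mathbf{x}^{(l)}\otimes\mathbf{y}^{(l)}$. The whole argument can then be run on the two elementary outer-product identities
\beq
(\mathbf{u}\otimes\mathbf{v})(\mathbf{p}\otimes\mathbf{q})=(\mathbf{v}\!\cdot\!\mathbf{p})\,\mathbf{u}\otimes\mathbf{q},\qquad [B,\mathbf{u}\otimes\mathbf{v}]=(B\mathbf{u})\otimes\mathbf{v}-\mathbf{u}\otimes(\mathbf{v}B),
\eeq
the second valid for any $(M+1)\times(M+1)$ matrix $B$, with $\mathbf{v}B$ the row-times-matrix product.

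The central step, which I expect to carry essentially all of the work, is to verify that the scalar system \eqref{ker118}--\eqref{ker122} is precisely the collection of components of the pair of vector ODEs
\beq
a_l\p_{a_l}\mathbf{x}^{(l)}=(C_n+a_lE_n)\mathbf{x}^{(l)}+\sum_{\substack{k=1\\k\neq l}}^{2m}\frac{a_k}{a_l-a_k}\,(\mathbf{y}^{(k)}\!\cdot\!\mathbf{x}^{(l)})\,\mathbf{x}^{(k)},
\eeq
and
\beq
a_l\p_{a_l}\mathbf{y}^{(l)}=-\mathbf{y}^{(l)}(C_n+a_lE_n)+\sum_{\substack{k=1\\k\neq l}}^{2m}\frac{a_k}{a_k-a_l}\,(\mathbf{y}^{(l)}\!\cdot\!\mathbf{x}^{(k)})\,\mathbf{y}^{(k)}.
\eeq
This is a direct but fiddly check against the explicit entries of $C_n$ and $E_n$ in \eqref{Lax1}: the sub/super-diagonal $-1$'s of $C_n$ give the ladder terms $x_{j+1,l}^{(n)}$ and $y_{j-1,l}^{(n)}$, the first two columns give the $\eta_j^{(n)}$-terms multiplying $(nx_{0,l}^{(n)}+x_{1,l}^{(n)})$, the last row gives the $\xi_j^{(n)}$-terms multiplying $y_{M,l}^{(n)}$, and $E_n$ supplies exactly the residual terms proportional to $a_l$ appearing in \eqref{ker119}, \eqref{ker120} and \eqref{ker121}. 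The only genuinely delicate bookkeeping is the bottom component $j=M$, where the $\xi_i^{(n)}$, the $\eta_M^{(n)}$ and the $a_lE_n$ contribution all land in the single last row of $C_n+a_lE_n$; here one uses the definitions \eqref{ker117} of $\xi_j^{(n)}$ and $\eta_j^{(n)}$ to make the match.

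Granting these two vector ODEs, \eqref{Lax3} is immediate: differentiate $A_n^{(l)}=\mathbf{x}^{(l)}\otimes\mathbf{y}^{(l)}$ by the product rule, substitute the two displays, and regroup. The terms $(C_n+a_lE_n)\mathbf{x}^{(l)}\otimes\mathbf{y}^{(l)}$ and $\mathbf{x}^{(l)}\otimes\big(\!-\mathbf{y}^{(l)}(C_n+a_lE_n)\big)$ reassemble into $[C_n+a_lE_n,A_n^{(l)}]$ by the commutator identity, and the two sums reassemble into $\sum_{k\neq l}\tfrac{a_k}{a_l-a_k}[A_n^{(k)},A_n^{(l)}]$ via the product identity together with $a_k/(a_k-a_l)=-a_k/(a_l-a_k)$. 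Equation \eqref{Lax4} comes out the same way from \eqref{ker123}, where the product rule gives $\p_{a_k}A_n^{(l)}=\tfrac{1}{a_l-a_k}[A_n^{(l)},A_n^{(k)}]$ directly. For \eqref{Lax5} I would note that $C_n$ depends on $a_l$ only through $\eta_j^{(n)}$ and $\xi_j^{(n)}$, insert the derivatives from \eqref{ker124}, and compare entry by entry with $[E_n,A_n^{(l)}]$; the latter is transparent because $E_n=(-1)^{M+1}\,\mathbf{e}_M\otimes(n\mathbf{e}_0+\mathbf{e}_1)$ in the standard basis $\mathbf{e}_0,\dots,\mathbf{e}_M$, whence $[E_n,A_n^{(l)}]=(-1)^{M+1}\big[(nx_{0,l}^{(n)}+x_{1,l}^{(n)})\,\mathbf{e}_M\otimes\mathbf{y}^{(l)}-y_{M,l}^{(n)}\,\mathbf{x}^{(l)}\otimes(n\mathbf{e}_0+\mathbf{e}_1)\big]$.

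For the Hamiltonian formula \eqref{Lax6} I would just compute the three traces using the rank-one structure: $\Tr(C_nA_n^{(l)})=\mathbf{y}^{(l)}C_n\mathbf{x}^{(l)}$ expands through the explicit $C_n$ into $-(nx_{0,l}^{(n)}+x_{1,l}^{(n)})\sum_i\eta_i^{(n)}y_{i,l}^{(n)}-\sum_{j=0}^{M-1}x_{j+1,l}^{(n)}y_{j,l}^{(n)}+y_{M,l}^{(n)}\sum_i\xi_i^{(n)}x_{i,l}^{(n)}$, while $a_l\Tr(E_nA_n^{(l)})=(-1)^{M+1}a_l(nx_{0,l}^{(n)}+x_{1,l}^{(n)})y_{M,l}^{(n)}$ and $\Tr(A_n^{(k)}A_n^{(l)})=(\mathbf{y}^{(k)}\!\cdot\!\mathbf{x}^{(l)})(\mathbf{y}^{(l)}\!\cdot\!\mathbf{x}^{(k)})=\sum_{i,j}x_{i,k}^{(n)}x_{j,l}^{(n)}y_{i,l}^{(n)}y_{j,k}^{(n)}$. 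Adding the three contributions reproduces \eqref{ker127} term for term, i.e.\ \eqref{Lax6}. The structure of the argument follows Proposition 3.6 of \cite{St14}; the one new feature is that $x_{0,l}$ there is replaced everywhere by $nx_{0,l}^{(n)}+x_{1,l}^{(n)}$, which the single matrix $E_n$ absorbs, so no change to the Schlesinger form is needed.
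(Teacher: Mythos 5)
Your proof is correct, and it is exactly the direct verification the paper has in mind: the paper states Proposition \ref{prop6} without written proof, as a rewriting of (\ref{ker118})--(\ref{ker124}) using the rank-one structure of $A_n^{(l)}$, which is precisely what you carry out (your two vector ODEs do reproduce the scalar system component by component, and the trace computation does recover \eqref{ker127}). Nothing further is needed.
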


Now we can consider the linear system of ordinary differential equations
for the function $\Psi_n(z;a_1\ldots,a_{2m})$
\beq
\frac{\partial\Psi_n}{\partial z}=
\left\{E_n+\frac{1}{z}\left(C_n-\sum_{j=1}^{2m}A_n^{(j)}\right)+
\sum_{j=1}^{2m}\frac{A_n^{(j)}}{z-a_j}\right\}\Psi_n\label{Lax31}
\eeq
and
for $1\leq j\leq 2m$
\beq
\frac{\partial\Psi_n}{\partial a_j}=-\frac{A_n^{(j)}}{z-a_j}\Psi_n, \label{Lax41}
\eeq
where the poles $a_1,\ldots,a_{2m}$ play the role of deformation parameters.

The compatibility conditions for the system (\ref{Lax31}-\ref{Lax41}) lead
to Schlesinger equations which exactly coincide with equations of motion
(\ref{Lax3}-\ref{Lax5}). Therefore, we derive the isomonodromic deformation
representation for finite $n$ similar to
Jimbo, Miwa, Mori, Sato theory \cite{JMMS80}.

Now let us consider the case $m=1$ in more details.
We choose  $J=(0,s)$, $a_1=0$, $a_2=s$ and
introduce variables
\beq
x_i(s)=x_{i,2}^{(n)}(s), \quad y_i(s)=y_{i,2}^{(n)}(s),\quad \xi_i(s)=\xi_i^{(n)}(s),\quad
\eta_i(s)=\eta_i^{(n)}(s). \label{Lax42}
\eeq
We have only one nontrivial Hamiltonian $H_n(s)$ \eqref{ker134} and
the equations (\ref{Lax3}-\ref{Lax5}) for $l=2$
can be rewritten as
\begin{align}
s\frac{\p}{\p s}A^{(2)}_n=[C_n+s E_n,A^{(2)}_n],\quad
\frac{\p}{\p s}C_n=[E_n,A^{(2)}_n].\label{Lax43}
\end{align}
The isomonodromic equations (\ref{Lax31}-\ref{Lax41}) for $\Psi_n(z,s)$
rewrite as
\beq
\frac{\partial\Psi_n}{\partial z}=
\left\{E_n+\frac{C_n-A_n^{(2)}}{z}+
\frac{A_n^{(2)}}{z-s}\right\}\Psi_n,\quad
\frac{\partial\Psi_n}{\partial s}=-\frac{A_n^{(2)}}{z-s}\Psi_n.\label{Lax44}
\eeq

Using representation \eqref{Lax43} it is not difficult to construct
additional $M+1$ conserved quantities.

\begin{prop}\label{prop7}
The eigenvalues ${\it Spec}(A_n^{(2)}-C_n)$ are integrals of the motion
 \eqref{Lax43}
\end{prop}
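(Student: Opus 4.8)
The plan is to show that the matrix $B_n := A_n^{(2)}-C_n$ evolves in $s$ by an isospectral (Lax-type) flow, so that every symmetric function of its eigenvalues is $s$-independent. First I would differentiate $B_n$ using the two evolution equations of \eqref{Lax43}:
\[
s\frac{\partial B_n}{\partial s} \;=\; s\frac{\partial A_n^{(2)}}{\partial s} - s\frac{\partial C_n}{\partial s}
\;=\; [\,C_n+sE_n,\;A_n^{(2)}\,] - s\,[\,E_n,\;A_n^{(2)}\,].
\]
The terms proportional to $sE_n$ cancel, leaving $s\,\partial_s B_n = [C_n,A_n^{(2)}]$. Substituting $A_n^{(2)}=B_n+C_n$ and using $[C_n,C_n]=0$, this becomes the clean Lax equation
\[
s\frac{\partial B_n}{\partial s} \;=\; [\,C_n,\;B_n\,].
\]

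From here the argument is standard. For every integer $k\ge 1$, cyclicity of the trace gives
\[
s\frac{d}{ds}\Tr\!\big(B_n^{k}\big) \;=\; k\,\Tr\!\big(B_n^{k-1}[C_n,B_n]\big)
\;=\; k\big(\Tr(B_n^{k}C_n)-\Tr(B_n^{k}C_n)\big) \;=\; 0,
\]
so all power sums $\Tr(B_n^{k})$, and hence all coefficients of the characteristic polynomial $\det(\mu I-B_n)$, are constant in $s$. Equivalently $\mathrm{Spec}(A_n^{(2)}-C_n)$ is $s$-independent; since for $m=1$ the point $a_1=0$ is frozen and $s=a_2$ is the only deformation parameter, this is precisely the assertion that these $M+1$ eigenvalues are integrals of the motion \eqref{Lax43}.

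I do not expect a real obstacle: the whole content is the observation that it is the particular combination $A_n^{(2)}-C_n$ — and not, say, $A_n^{(2)}+C_n$ — whose flow is isospectral, together with the trivial cancellation of the $sE_n$ contributions in \eqref{Lax43}. As a conceptual cross-check it is worth remarking that $C_n-A_n^{(2)}=-B_n$ is exactly the residue at $z=0$ of the linear system \eqref{Lax44} (with $A_n^{(2)}$ the residue at $z=s$ and $E_n$ the coefficient at $z=\infty$), so that invariance of the local formal monodromy data under isomonodromic deformation yields the same conclusion. One should not, however, expect these invariants to be functionally independent of the Hamiltonian $H_n(s)$ or of one another, and no such independence is claimed in the statement.
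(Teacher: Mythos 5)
Your proof is correct and follows essentially the same route as the paper: both derive the Lax equation $sB_n'=[C_n,B_n]$ by cancelling the $sE_n$ contributions in \eqref{Lax43}, and then conclude isospectrality from $\Tr\bigl(B_n^k[C_n,B_n]\bigr)=0$. The only cosmetic difference is that you track the power sums $\Tr(B_n^k)$ directly, while the paper packages the same traces into the logarithmic derivative of the characteristic polynomial.
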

\begin{proof}

Let us denote $B_n(s)=A_n^{(2)}(s)-C_n(s)$ and introduce the characteristic polynomial
\beq
p(z,s)=\det\left(z I-B(s)\right). \label{Lax46}
\eeq
We have
\begin{align}
\frac{\p}{\p s}p(z,s)=p(z,s)\frac{\p}{\p s}\Tr\log\left(z I-B_n(s)\right)
=-p(z,s)\sum_{k=0}^\infty\frac{1}{z^{k+1}}
\Tr\left(B^k_n(s)B'_n(s)\right),
\label{Lax47}
\end{align}
where the sum always converges for sufficiently large $z$.

Now
\beq
B'_n(s)=(A_n^{(2)}(s)-C_n(s))'=\frac{1}{s}[C_n,A_n^{(2)}]=
\frac{1}{s}[C_n,B_n]\label{Lax48}
\eeq
by \eqref{Lax43} and we have for any $k\geq0$
\beq
\Tr\left(B^k_n(s)B'_n(s)\right)=\frac{1}{s}\Tr\left(B^k_n[C_n,B_n]\right)=0. \label{Lax49}
\eeq
Therefore, the sum in \eqref{Lax47} is equal to zero and the characteristic polynomial
$p(z,s)$ does not depend on $s$.
We conclude that the eigenvalues of the matrix $B_n(s)$ do not depend on $s$.

\end{proof}
One can calculate the spectrum of the matrix $A_n^{(2)}-C_n$ using the small $s$
expansion.
Let us  assume that
\beq
\nu_{\mbox{\footnotesize min}}=\min(\nu_1,\ldots,\nu_M)>0. \label{Lax49d}
\eeq
We can use the integral representation
\eqref{ker14} for $Q_n(x)$ and calculate the integral by closing the contour to the left.
Using \eqref{ker114} we obtain
\beq
y_i(s)\sim s^{\nu_{\mbox{\footnotesize min}}}(\mbox{const}+\mbox{possible $\log$ terms})
\rightarrow 0 \quad \mbox{at}\quad
s\rightarrow 0. \label{Lax49a}
\eeq
Similarly, using the representation for $P_n(x)$ in terms of hypergeometric function
\eqref{A6} we obtain
\beq
x_0(0)\sim\mbox{const},\quad x_i(0)=0, \quad i>0.\label{Lax49b}
\eeq
Finally, $V_{i,j}^{(n)}(s)\rightarrow 0$ in \eqref{ker73} at $s\rightarrow 0$ and we obtain
from \eqref{ker117}
\beq
\xi_i(0)=(-1)^{M+i+1}
e_{M+1-i}(\boldsymbol{\nu}),\quad \eta_i(0)=0.\label{Lax49c}
\eeq

Therefore, the matrix $B_n(s)$ at $s=0$ becomes
\beq
B_n(0)=-
C_n(0)=\left(\begin{array}{lllll}
0 & 1 & 0 & \ldots & 0 \\
\vdots & \vdots & \ddots  & \, & \vdots \\
0 & 0 & 0 & \ldots & 1\\
0& d_1 & d_2 & \ldots & d_M
\end{array}\right),
\label{Lax50}
\eeq
where $d_i=(-1)^{M+i}e_{M+1-i}(\boldsymbol{\nu})$.

The matrix \eqref{Lax50} has the eigenvalue $0$ with the eigenvector $(1,0,\ldots,0)$
and $M$ eigenvalues $\nu_i$ with the eigenvectors $(1,\nu_i,\nu_i^2,\ldots,\nu_i^M)$
which follows from the identity
\beq
\sum_{k=1}^M (-1)^{M+k}e_{M+1-k}(\nu)\nu_i^k=(-\nu_i)(-\nu_i^M+\prod_{j=1}^M(\nu_i-\nu_j))=\nu_i^{M+1}.
\label{Lax51}
\eeq
 Therefore, we conclude that
 \beq
 {\rm Spec}(A_n^{(2)}-C_n)=(0,\nu_1,\ldots,\nu_M).\label{Lax52}
 \eeq
If $\nu_{\mbox{\footnotesize min}}=0$ in \eqref{Lax49d}, then $y_i(s)$ can have constant or growing logarithmic
asymptotics at $s\rightarrow0$ and the calculation becomes more involved.

A topic for future study is a possible relationship of this isomonodromic deformation with the theory of so-called
four dimensional Painlev\'e systems \cite{KNS12}, as speculated in \cite{WF17}.

\nsection{The case $M=1$}

We set $m=1$, choose $J=(0,s)$ and use the variables \eqref{Lax42}.
The system of partial differential equations (\ref{ker118}-\ref{ker124}) for $M=1$
reads
\begin{align}
sx_0'&=-\left(nx_0+x_1\right)\eta_0-x_1,\label{M2}\\
sx_1'&=-\left(nx_0+x_1\right)(\eta_1-s)+\xi_0x_0+\xi_1x_1,\label{M3}\\
sy_1'&=-\left(s+\xi_1\right)y_1+y_0+\eta_0y_0+\eta_1y_1,\label{M4}\\
sy'_0&=-\left(ns+\xi_0\right)y_1+n(\eta_0y_0+\eta_1y_1),\label{M5}\\
\xi_0'&=\phantom{-}(nx_0+x_1)y_0,\label{M6}\\
\xi_1'&=\phantom{-}(nx_0+x_1)y_1,\label{M7}\\
\eta_0'&=\phantom{-}x_0y_1,\label{M8}\\
\eta_1'&=\phantom{-}x_1y_1,\label{M9}
\end{align}
and the Hamiltonian \eqref{Lax6} for $l=2$ takes the form
\beq
H_2^{(n)}=(nx_0+x_1)(sy_1-\eta_0y_0-\eta_1y_1)-x_1y_0+y_1(\xi_0x_0+\xi_1x_1). \label{M10}
\eeq

Let us compare this with the results of  \cite{TW94c} for the Laguerre kernel.
From the Christoffel--Darboux formula (see e.g.~\cite[Prop.~5.1.3]{Forrester2010})
we define the Tracy-Widom kernel for the finite Laguerre ensemble of $n\times n$ matrices
by
\beq
K_{L}(x,y)=\frac{\varphi(x)\psi(y)-\psi(x)\varphi(y)}{x-y},\label{M12}
\eeq
with
\beq
\varphi(x)=\sqrt{\lambda}(n(n+\nu))^{1/4}\varphi_{n-1}(x),\quad
\psi(x)=\sqrt{\lambda}(n(n+\nu))^{1/4}\varphi_{n}(x),\label{M13}
\eeq
\beq
\varphi_n(x)=\sqrt{\frac{n!}{\Gamma(n+\nu+1)}}x^{\nu/2}e^{-\nu/2}L_n^{\nu}(x),\label{M14}
\eeq
as in the eqs. (1.2), (5.36) of \cite{TW94c}. $L_n^{\nu}(x)$ are generalized Laguerre
polynomials.

Taking into account the remark before \eqref{renorm}
the functions $P_n(x)$ and  $Q_n(x)$ in (\ref{ker3}-\ref{ker2}) for $M=1$ take the form
\beq
P_n(x)=(-1)^n n! L_n^\nu(x),\quad Q_n(x)=\la\frac{(-1)^n x^\nu e^{-x}}{\Gamma(n+\nu+1)}L_n^\nu(x),
\label{M15}
\eeq
where we used the expression of Laguerre polynomials in terms of the hypergeometric function
and (\ref{A5}) of the Appendix
\beq
L_n^\nu(x)=\frac{(\nu+1)_n}{n!}\>\ff
\left.\left(\begin{array}{c}-n \\1+\nu\end{array}\right|x\right).\label{M16}
\eeq
Now
we can rewrite
the formula (\ref{repr}) for the kernel $K_n^M(x,y)$  at $M=1$ as
\beq
K_n^1(x,y)=\frac{1}{x-y}\Big ( \delta_y-\delta_x+y-\nu\Big ) P_n(x)Q_n(y). \label{M17}
\eeq
Using the differentiation formula for the Laguerre polynomials
\beq
x\frac{d}{dx}L_n^\nu(x)=n L_n^\nu(x)-(n+\nu)L_{n-1}^\nu(x) \label{M18a}
\eeq
and substituting \eqref{M15} into \eqref{M17} we find after straightforward calculations
\beq
K_{L}(x,y) = \> h(x) K_n^1(x,y)  h^{-1}(y), \quad h(x)=x^{\nu/2}e^{-x/2}.\label{M18}
\eeq
We thus see that the kernel $K_n^1(x,y)$ is
symmetrizable, and after the diagonal similarity transformation coincides with
the symmetric kernel $K_{L}(x,y)$.

The functions $P_{L}(x)$  and $Q_{L}(x)$ of \cite[Eq.~(1.5)]{TW94c} match with our $Q_i^{(n)}(x;s)$ in
(\ref{ker71}) according to
\beq
P_{L}(x)= (1-K_{L})^{-1}\psi(x)=c_n \sqrt{\la}x^{\nu/2}e^{-x/2}Q_0^{(n)}(x;s), \label{M19}
\eeq
\beq
Q_{L}(x)=(1-K_{L})^{-1}\varphi(x)=\frac{c_n}{\sqrt{n(n+\nu)}}\sqrt{\la}x^{\nu/2}e^{-x/2}
(nQ_0^{(n)}(x;s)+Q_1^{(n)}(x;s)),\label{M20}
\eeq
\beq
c_n=\frac{(-1)^{n+1}(n(n+\nu))^{1/4}}{\sqrt{n!\Gamma(n+\nu+1)}}. \label{M21}
\eeq
Setting $x=s$ we find from (\ref{M19}-\ref{M20})
a connection of Tracy and Widom's variables $q(s)$ and $p(s)$ (1.6), \cite{TW94c}
with our variables $x_0(s)$ and $x_1(s)$
\beq
q(s)=-i\frac{c_n}{\sqrt{n}(n+\nu)}\sqrt{\la}s^{\nu/2}e^{-s/2}
(nx_0(s)+x_1(s)),\quad
p(s)=-ic_n \sqrt{\la}s^{\nu/2}e^{-s/2}x_0(s).\label{M22}
\eeq

We also obtain from \eqref{M15} and \eqref{M18} that
\beq
Q_n(x)=\frac{\la h^2(x)}{n!\Gamma(n+\nu+1)}P_n(x)\label{M23}
\eeq
and
\beq
K_n^1(y,x) = h^2(x) K_n^1(x,y)  h^{-2}(y). \label{M23a}
\eeq
It immediately follows that for $M=1$ we can express variables $y_i(s)$ in terms of $x_i(s)$, $i=0,1$
\beq
y_1(s)=\frac{\la s^{\nu}e^{-s}}{n! \Gamma(n+\nu+1)}x_0(s),\quad
y_0(s)=-\frac{\la s^{\nu}e^{-s}}{n! \Gamma(n+\nu+1)}x_1(s). \label{M11}
\eeq
We can now calculate correct initial conditions  for the variables \eqref{Lax42} for small $s$.
Assuming that the parameter $\nu>0$ is in generic position
 we obtain from (\ref{ker6}, \ref{ker71}-\ref{ker73}, \ref{ker114}-\ref{ker117})
 and explicit expressions (\ref{M15}-\ref{M16}) for $P_n(x)$, $Q_n(x)$
\beq
x_0(s)=i(-1)^{n+1}(\nu+1)_n\Big(1+O(s)\Big)
-\frac{i\la(-1)^n (\nu+1)_n^2s^{\nu+1}}{(\nu+1)\Gamma(n)\Gamma(\nu+2)}\big(1+O(s)
\big)+O(s^{2\nu+2}), \label{M11a}
\eeq
\beq
x_1(s)=i(-1)^{n+1}n(\nu+2)_{n-1}\big(s+O(s^2)\big)
-\frac{i\la(-1)^n n(\nu+1)_{n}^2s^{\nu+2}}{(\nu+1)\Gamma(n)\Gamma(\nu+3)}\big(1+O(s)\big)+O(s^{2\nu+3}),\label{M11b}
\eeq
\begin{align}
&\eta_0(s)=-\frac{(\nu+1)_n}{\Gamma(n+1)\Gamma(\nu+2)}\la s^{\nu+1}
\big(1+O(s)
\big)+  O(s^{2 \nu + 2}),\label{M11c}\\
&\eta_1(s)=-\frac{(\nu+1)_n}{\Gamma(n)\Gamma(\nu+3)}\la s^{\nu+2}
\big(1+O(s)\big)+  O(s^{2 \nu + 3}) ,\label{M11d}\\
&\xi_0(s)=\frac{ n(\nu+1)_n}{\Gamma(n)\Gamma(\nu+3)}\la s^{\nu+2}
\big(1+O(s)\big)+  O(s^{2 \nu + 3}),\label{M11e}\\
&\xi_1(s)=-\nu-\frac{(\nu+1)_n}{\Gamma(n)\Gamma(\nu+2)}\la s^{\nu+1}
\big(1+O(s)\big)+  O(s^{2 \nu + 2}) .\label{M11f}
\end{align}
So the recipe to restore a dependence on $\la$ in initial conditions
for primary variables is very simple --- we replace each power
$s^\nu$ by $\la s^\nu$.

With given initial conditions we can combine (\ref{M7}-\ref{M9}) to obtain the first
integral
\beq
\xi_1(s)-n\eta_0(s)-\eta_1(s)+\nu=0.\label{M32a}
\eeq
The second integral was derived in \eqref{ker137}
\beq
 H_2^{(n)}(s)-n\eta_0(s)-\eta_1(s)=0. \label{M32}
\eeq

Now let us see how integrals \eqref{Lax52} are expressed in terms of
basic variables. We have
\beq
A_n^{(2)}-C_n=\left(\begin{array}
{cc} x_0y_0+n\eta_0 & x_0y_1+\eta_0+1\\
x_1y_0+n\eta_1-\xi_0& x_1y_1+\eta_1-\xi_1
\end{array}\right)\label{M32b}
\eeq
The first integral $\Tr(A_n^{(2)}-C_n)=\nu$ is equivalent to \eqref{M32a}
due to the orthogonality condition
\beq
x_0(s)y_0(s)+x_1(s)y_1(s)=0 \label{M32c}
\eeq.
The second integral $\det(A_n^{(2)}-C_n)=0$ gives
\beq
(x_0y_0+n\eta_0)(x_1y_1+\nu-n\eta_0)-(x_0y_1+\eta_0+1)(x_1y_0+n\eta_1-\xi_0)=0.\label{M32d}
\eeq
Taking the sum of \eqref{M32} and \eqref{M32d} and using (\ref{M8}-\ref{M9}, \ref{M11})
we obtain the expression for $\xi_0(s)$ in terms of $\eta_0(s)$ and $\eta_1(s)$
\beq
\xi_0(s)=\frac{s(n\eta_0'(s)+\eta_1'(s))+(n\eta_0(s)-1)(n\eta_0(s)+\eta_1(s))+
n(\eta_1(s)-\nu\eta_0(s))}{1+\eta_0(s)}.\label{M32e}
\eeq
Using (\ref{M6}-\ref{M9}) to eliminate $x_0,x_1,y_0,y_1$ we can rewrite
the integral \eqref{M32d} as
\beq
n\eta_0(n\eta_0+\eta_1-\nu)+n\eta_1(1+\eta_0')+(n+\nu-n\eta_0)\eta_1'
-\xi_0(1+\eta_0+\eta_0')+(1+\eta_0)\xi_0'=0.
\label{M32f}
\eeq
Finally, differentiating (\ref{M8}-\ref{M9}) and using (\ref{M2}-\ref{M9})
it is easy to obtain the relations
\beq
s\eta_0''+2(1+\eta_0)\eta_1'+(2n\eta_0+s-\nu)\eta_0'=0,\label{M32g}
\eeq
\beq
s\eta_1''-(1+\eta_0)(n\eta_1'+\xi_0')+(n\eta_1-ns-\xi_0)\eta_0'=0.\label{M32h}
\eeq
Let us introduce the function
\beq
\sigma(s)=-n\eta_0(s)-\eta_1(s). \label{M32i}
\eeq
Using \eqref{M32e} to exclude $\xi_0$, $\xi_0'$ from (\ref{M32f}-\ref{M32h}) we
can express $\sigma$, $\sigma'$ and $\sigma''$ in terms of $\eta_0$, $\eta_0'$, $\eta_0''$
and obtain the 3rd order differential equation for $\eta_0(s)$.
The function $\sigma(s)$ satisfies
the
 $\sigma$-version of Painlev\'e V
\beq
(s\sigma'')^2=4s(\sigma')^3-4\sigma(\sigma')^2+\sigma^2+2(\nu-s+2n)\sigma\sigma'+
((\nu-s)^2-4sn)(\sigma')^2\label{M33}
\eeq
subject to the boundary condition
\beq
\sigma(s)=\frac{(\nu+1)_n}{\Gamma(n)\Gamma(\nu+2)}
\la s^{\nu+1}\bigg(1-\frac{2n+\nu}{\nu+2}s+\frac{\nu(\nu+1)^2+2n(n+\nu)(2\nu+3)}{2(\nu+1)_3}s^2
-\ldots\bigg)+  O(s^{2 \nu + 2}) .\nonumber\\
\eeq

Now let us find a correspondence with Tracy-Widom variables
 $u(s)$, $w(s)$ and $R(s)$ given by (5.41-5.46) in \cite{TW94c}
\begin{align}
sq'(s)&=\phantom{-}\left(\frac{s-\nu}{2}-n\right)q(s)+\left(\sqrt{n(n+\nu)}+u(s)\right)p(s),\label{M24}\\
sp'(s)&=-\left(\frac{s-\nu}{2}-n\right)p(s)-\left(\sqrt{n(n+\nu)}-w(s)\right)q(s),\label{M25}\\
sR(s)&=(s-\nu-2n)q(s)p(s)+\left(\sqrt{n(n+\nu)}+u(s)\right)p^2(s)\nonumber\\
&+\left(\sqrt{n(n+\nu)}-w(s)\right)q^2(s),\label{M26}\\
(sR(s)&)'=q(s)p(s),\label{M27}
\end{align}
and
\beq
u'(s)=q^2(s),\quad w'(s)=p^2(s). \label{M28}
\eeq
After straightforward calculations we obtain that the equations (\ref{M2}-\ref{M10})
are consistent with (\ref{M24}-\ref{M28}) under the choice
\begin{align}
u(s)&=-\frac{n(\xi_1(s)+\nu)-\xi_0(s)}{\sqrt{n(n+\nu)}},\label{M29}\\
w(s)&=-\sqrt{n(n+\nu)}\eta_0(s),\label{M30}\\
 sR(s)&=\sigma(s)=-H_2(s)=-n\eta_0(s)-\eta_1(s). \label{M31}
\end{align}

\nsection{The case $M=2$}
Again we consider the case $J=(0,s)$ with basic variables
$x_j$, $y_j$, $\eta_j$ and $\xi_j$, $j=0,1,2$ defined by \eqref{Lax42}.
The system (\ref{ker118}-\ref{ker124}) is
\begin{align}
sx_0'&=-\left(nx_0+x_1\right)\eta_0-x_1,\label{MM10}\\
sx_1'&=-\left(nx_0+x_1\right)\eta_1-x_2,\label{MM11}\\
sx_2'&=-\left(nx_0+x_1\right)(\eta_2+s)+\xi_0x_0+\xi_1x_1+\xi_2x_2,\label{MM12}\\
sy'_0&=-\left(\xi_0-ns\right)y_2+n(\eta_0y_0+\eta_1y_1+\eta_2y_2),\label{MM13}\\
sy_1'&=-\left(\xi_1-s\right)y_2+y_0+\eta_0y_0+\eta_1y_1+\eta_2y_2,\label{MM14}\\
sy_2'&=-\phantom{(}\xi_2y_2+y_1,\label{MM15}\\
\xi_0'&={-}(nx_0+x_1)y_0,\label{MM16}\\
\xi_1'&={-}(nx_0+x_1)y_1,\label{MM17}\\
\xi_2'&={-}(nx_0+x_1)y_2,\label{MM18}\\
\eta_0'&={-}x_0y_2,\label{MM19}\\
\eta_1'&={-}x_1y_2,\label{MM20}\\
\eta_2'&={-}x_2y_2,\label{MM21}
\end{align}

For simplicity we assume that $\nu_{1}, \nu_2>0$ are generic, i.e.
$\nu_1,\nu_2,\nu_1-\nu_2\not\in\mathbb{Z}$. Although from the random matrix application
the case of integer $\nu_1$, $\nu_2$ case is exactly the case of interest,
these restrictions can be lifted in principle by use of a limiting procedure.

We start with the initial conditions for basic variables at $s=0$.
First, biorthogonal functions  (\ref{ker3}-\ref{ker2})
can be expressed in terms of  generalized hypergeometric functions
\begin{align}
P_n(x)=(-1)^n(\nu_1+1)_n(\nu_2+1)_n\> \fm{1}{2}
\left.\left(\begin{array}{c}-n \\1+\nu_1,1+\nu_2\end{array}\right|x\right),\label{exp1}
\end{align}
\begin{align}
Q_n(x)=\frac{(-1)^n\la x^{\nu_1}\Gamma(\nu_2-\nu_1)}{n!\Gamma(\nu_1+1)\Gamma(\nu_2+n+1)}
\fm{1}{2}
\left.\left(\begin{array}{c}\nu_1+n+1 \\1+\nu_1,1+\nu_1-\nu_2\end{array}\right|x\right)
+
(\nu_1\leftrightarrow\nu_2),\label{exp2}
\end{align}
where \eqref{exp1} follows from \eqref{A6} and
\eqref{exp2} is obtained by closing the contour in \eqref{ker14} to the left
and summing over two series of poles.
Similar to the previous section a dependence on $\lambda$ is recovered by
replacing $s^{\nu_1}\to\la s^{\nu_1}$ and $s^{\nu_2}\to\la s^{\nu_2}$.

After straightforward calculations we obtain
\begin{align}\label{MM22}
x_0(s)=&i(-1)^{n+1}\la s^{\nu1+1}\frac{(\nu_1+2)_{n-1}^2(\nu_2+1)_n\Gamma(\nu_2-\nu_1)}
{\Gamma(n)\Gamma(\nu_1+1)\Gamma(\nu_2+1)}(1+O(s))+(\nu_1\leftrightarrow\nu_2)\nonumber\\
&+i(-1)^{n+1}(\nu_1+1)_n(\nu_2+1)_n
(1+O(s))
+O_{\nu_1+1,\nu_2+1},
\end{align}
\begin{align}\label{MM23}
x_1(s)&=i(-1)^{n+1}\la s^{\nu1+2}\frac{n(\nu_1+1)_{n}^2(\nu_2+1)_n\Gamma(\nu_2-\nu_1)}
{(\nu_1+1)\Gamma(n)\Gamma(\nu_1+3)\Gamma(\nu_2+2)}(1+O(s))+(\nu_1\leftrightarrow\nu_2)\nonumber\\
&+i(-1)^{n+1}n(\nu_1+2)_{n-1}(\nu_2+2)_{n-1}
(s+O(s^2))
+O_{\nu_1+1,\nu_2+1},
\end{align}
\begin{align}\label{MM24}
x_2(s)=&i(-1)^n\la s^{\nu1+2}\frac{n(\nu_1+1)_{n}^2(\nu_2+1)_n\Gamma(\nu_2-\nu_1)}
{(\nu_1+1)\Gamma(n)\Gamma(\nu_1+3)\Gamma(\nu_2+2)}(1+O(s))+(\nu_1\leftrightarrow\nu_2)\nonumber\\
+&i(-1)^nn(\nu_1+2)_{n-1}(\nu_2+2)_{n-1}
(s+O(s^2))
+O_{\nu_1+1,\nu_2+1},
\end{align}
\begin{align}
&y_0(s)=i\la s^{\nu_1+1}\frac{(-1)^n\Gamma(\nu_2-\nu_1)}
{\Gamma(n)\Gamma(\nu_1+2)\Gamma(\nu_2+n+1)}(1+O(s))+(\nu_1\leftrightarrow\nu_2)
+O_{\nu_1+1,\nu_2+1},\label{MM25}\\
&y_1(s)=-i\la s^{\nu_1}\frac{(-1)^n\nu_2\Gamma(\nu_2-\nu_1)}
{\Gamma(n+1)\Gamma(\nu_1+1)\Gamma(\nu_2+n+1)}(1+O(s))+(\nu_1\leftrightarrow\nu_2)
+O_{\nu_1,\nu_2},\label{MM26}\\
&y_2(s)=\phantom{,,}i\la s^{\nu_1}\frac{(-1)^n\Gamma(\nu_2-\nu_1)}
{\Gamma(n+1)\Gamma(\nu_1+1)\Gamma(\nu_2+n+1)}(1+O(s))+(\nu_1\leftrightarrow\nu_2)
+O_{\nu_1,\nu_2},\label{MM27}
\end{align}
\begin{align}
&\eta_0(s)=-\la s^{\nu_1+1}\frac{(\nu_1+2)_{n-1}\Gamma(\nu_2-\nu_1)}
{\Gamma(n+1)\Gamma(\nu_1+1)\Gamma(\nu_2+1)}(1+O(s))+(\nu_1\leftrightarrow\nu_2)
+O_{\nu_1+1,\nu_2+1},\label{MM28}\\
&\eta_1(s)=-\la s^{\nu_1+1}\frac{(\nu_1+1)_n\Gamma(\nu_2-\nu_1)}
{\Gamma(n)\Gamma(\nu_1+3)\Gamma(\nu_2+2)}(s+O(s^2))+(\nu_1\leftrightarrow\nu_2)
+O_{\nu_1+1,\nu_2+1},\label{MM29}\\
&\eta_2(s)=\phantom{,,}\la s^{\nu_1+1}\frac{(\nu_1+1)_n\Gamma(\nu_2-\nu_1)}
{\Gamma(n)\Gamma(\nu_1+3)\Gamma(\nu_2+2)}(s+O(s^2))+(\nu_1\leftrightarrow\nu_2)
+O_{\nu_1+1,\nu_2+1}\label{MM30}
\end{align}
\begin{align}
&\xi_0(s)=-\la s^{\nu_1+1}\frac{(\nu_1+1)_n\,n\Gamma(\nu_2-\nu_1)}
{\Gamma(n)\Gamma(\nu_1+3)\Gamma(\nu_2+1)}(s+O(s^2))+(\nu_1\leftrightarrow\nu_2)
+O_{\nu_1+1,\nu_2+1},\label{MM31}\\
&\xi_1(s)=e_2+\la s^{\nu_1+1}\frac{(\nu_1+1)_n\Gamma(\nu_2-\nu_1)}
{\Gamma(n)\Gamma(\nu_1+2)\Gamma(\nu_2)}(1+O(s))+(\nu_1\leftrightarrow\nu_2)
+O_{\nu_1+1,\nu_2+1},\label{MM32}\\
&\xi_2(s)=-e_1{-}\la s^{\nu_1+1}\frac{(\nu_1+1)_n\Gamma(\nu_2-\nu_1)}
{\Gamma(n)\Gamma(\nu_1+2)\Gamma(\nu_2+1)}(1+O(s))+(\nu_1\leftrightarrow\nu_2)
+O_{\nu_1+1,\nu_2+1},\label{MM33}
\end{align}
where we used a notation for higher order terms
\beq
O_{\al,\beta}=O(s^{2\al},s^{\al+\be},s^{2\be}) \label{MM33a}
\eeq
and
as before $e_1=\nu_1+\nu_2$, $e_2=\nu_1\nu_2$.

For later convenience let us introduce new variables
\beq
\chi_{0}=n\,\eta_{0}+\eta_{1},\quad \chi_{1}=n\,\eta_{1}+\eta_{2}.\label{MM34}
\eeq
The Hamiltonian \eqref{ker134}
\beq
H_2^{(n)}=-(nx_0+x_1)(sy_2+\eta_0y_0+\eta_1y_1+\eta_2y_2)-x_1y_0-x_2y_1+
y_2(\xi_0x_0+\xi_1x_1+\xi_2x_2) \label{M10a}
\eeq
leads to the first integral
\beq
H_2^{(n)}-\chi_0=0.\label{MM35}
\eeq

The {\bf Proposition \ref{prop7}} together with
the equation \eqref{Lax52} gives three additional integrals. Similar
to the case $M=1$ we can combine them with \eqref{MM35}
and the orthogonality condition
\beq
x_0(s)y_0(s)+x_1(s)y_1(s)+x_2(s)y_2(s)=0 \label{MM36}
\eeq
to derive the expressions for $\xi_i$'s in terms of
$\eta_i$'s. To do that  we first express the variables
$y_0$, $y_1$, $x_0$, $x_1$, $x_2$ from the equations
(\ref{MM16}-\ref{MM21}) in terms of $\xi_i$'s, $\eta_i$'s and $y_2$
and substitute into \eqref{MM35} and \eqref{Lax52}.
The dependence on the variable $y_2$ drops out
and after some algebra we obtain
\begin{align}
\xi_2&=\chi_0-e_1,\label{MM37}\\
\xi_1&=e_2-(1+e_1)\chi_0+\chi_0^2+s\chi_0'+\chi_1,\label{MM38}\\
\xi_0&=\frac{1}{n(1+\eta_0)}\Big\{n\chi_0(\chi_0-1-\nu_1)(\chi_0-1-\nu_2)+
(\eta_2-\chi_1)\big(\xi_1+n(n+e_1-\chi_0)\big)\nonumber\\
&+
n\chi_1(\chi_0+n-2)-ns\chi_0'(1+e_1-3\chi_0)+ns(s\chi_0''+2\chi_1')\Big\},\label{MM39}
\end{align}
where we also used the variables \eqref{MM34} to simplify final expressions.

Now the integral \eqref{MM35} will give a complicated differential equation
for $\eta_0$, $\eta_1$ and $\eta_2$. Similar to the case $M=1$ we would like
to derive a closed differential equation for the $\tau$-function \eqref{ker138}
which is expressed in terms of $\chi_0(s)$. To do that we need another
integral of the system (\ref{MM10}-\ref{MM21}). We were able to find such an additional
integral and combining it with \eqref{MM35} and initial conditions (\ref{MM22}-\ref{MM33})
to derive after tedious calculations
a coupled system of differential equations for $\chi_0(s)$ and $\chi_1(s)$
\begin{align}
\chi_1'&\left[3\chi_1'+3s\chi_0''+2\chi_0'(3\chi_0-e_1)\right]+
\chi_0'(s^2\chi_0'''+(1-e_1)s\chi_0'')+\nonumber\\
+\chi_0\chi_0'&\left[3s\chi_0''+\chi_0'(3\chi_0-2e_1-1)-1\right]+
(e_2-s){\chi_0'}^2+3s{\chi_0'}^3=0,\label{MM40}
\end{align}
\begin{align}
&(n-1){\chi_0'}^2(\chi_0-s\chi_0')+{\chi_0'}^3\left[(1+e_1+e_2-s-
(2+e1)\chi_0+\chi_0^2)\chi_0+s^2{\chi_0''}\right]
\nonumber\\
&+{\chi_0'}^4\left[3s\chi_0-s(1+e_1)\right]
+2\chi_1(1-{\chi_0'}){\chi_0'}^2+{\chi_1'}^2(\chi_1'+3\chi_0\chi_0'-e_1\chi_0')+
s^2\chi_0'\chi_0''\chi_1''\nonumber\\
&+\chi_1'\left[{\chi_0'}^2(e_2-s+3\chi_0^2+3s\chi_0')-
\chi_0\chi_0'(1+(1+2e_1)\chi_0')-s^2{\chi_0''}^2\right]=0.\label{MM41}
\end{align}
Currently
we don't know how to derive this additional integral algebraically
in terms of isomonodromic formulation. The next natural step is to eliminate
the function $\chi_1$ from the system (\ref{MM40}-\ref{MM41}) and to obtain
the  differential equation for $\chi_0$ which coincides
with the logarithmic derivative of the gap probaiblity.
However, this 4th order differential equation
is enormous and we can not give it here. In the hard edge scaling limit its
simplified version was derived in \cite{WF17}.

Let us notice that for the case $M=1$ both functions $\eta_0(s)$ and $\eta_1(s)$
satisfy the third order differential equations but the equation for their
linear combination $\sigma(s)$  can be integrated once
and gives the second order equation \eqref{M33}. It is not clear whether the system
(\ref{MM40}-\ref{MM41}) can be integrated further to produce a simpler
third order differential equation for some combination of $\chi_0$ and $\chi_1$.
In the hard edge scaling limit with $\nu_1=-1/2$, $\nu_2=0$
such third order differential equation was found in \cite{WF17} ,
but it may be the case only at this special point.

With the given asymptotics at $s\to0$ which follow from (\ref{MM28}-\ref{MM30})
\begin{align}
&\chi_0(s)=\al_0 s^{\nu_1+1}(1+O(s))+\beta_0 s^{\nu_2+1}(1+O(s))+O_{\nu_1+1,\nu_2+1},\nonumber\\
&\chi_1(s)=\al_1 s^{\nu_1+1}(s+O(s^2))+\beta_1 s^{\nu_2+1}(s+O(s^2))+O_{\nu_1+1,\nu_2+1}\label{MM42},
\end{align}
the system (\ref{MM40}-\ref{MM41}) uniquely
determines power series expansions for $\chi_0$, $\chi_1$
in terms of two free parameters $\al_0$ and $\beta_0$. Parameters $\al_1$ and $\beta_1$ are fixed
by
\beq
\al_1=\frac{(n-1)\alpha_0}{(\nu_1+2)(\nu_2+1)},\quad \beta_1=\frac{(n-1)\beta_0}{(\nu_1+1)(\nu_2+2)}\label{MM43}
\eeq
and the power series for $\chi_0(s)$ and $\chi_1(s)$ have the form
\begin{align}
\chi_0(s)&=\al_0 s^{\nu_1+1}\left(1+\frac{2+2\nu_1+\nu_1\nu_2+n(2\nu_2-\nu_1)}
{(\nu_1+2)(\nu_2+1)(1+\nu_1-\nu_2)}s+O(s^2)\right)\nonumber\\
&+\beta_0 s^{\nu_2+1}\left(1+\frac{2+2\nu_2+\nu_1\nu_2+n(2\nu_1-\nu_2)}
{(\nu_1+1)(\nu_2+2)(1+\nu_2-\nu_1)}s+O(s^2)\right)\nonumber\\
&-\al_0\beta_0 s^{\nu_1+\nu_2+2}\frac{\nu_1+\nu_2+2}{(\nu_1+1)(\nu_2+1)}(1+O(s))+
O_{\nu_1+1,\nu_2+1}, \label{MM44}
\end{align}
\begin{align}
\chi_1(s)&= \alpha_0\frac{(n-1)s^{\nu_1+2}}{(\nu_1+2)(\nu_2+1)}
\left(1+O(s)\right)+\beta_0\frac{(n-1)s^{\nu_2+2}}{(\nu_1+1)(\nu_2+2)}
\left(1+O(s)\right)\nonumber\\
&-\al_0\beta_0s^{\nu_1+\nu_2+3}\frac{(n-1)(\nu_1+\nu_2+4)}{(\nu_1+1)(\nu_1+2)(\nu_2+1)(\nu_2+2)}
+O_{\nu_1+2,\nu_2+2}.\label{MM45}
\end{align}
The coefficients $\al_0$, $\beta_0$ are fixed by the asymptotics of $\eta_0(s)$ \eqref{MM28}
\beq
\al_0=-\frac{\la\,(\nu_1+2)_{n-1}\Gamma(\nu_2-\nu_1)}{\Gamma(n)\Gamma(\nu_1+1)\Gamma(\nu_2+1)},\quad
\beta_0=-\frac{\la\,(\nu_2+2)_{n-1}\Gamma(\nu_1-\nu_2)}{\Gamma(n)\Gamma(\nu_1+1)\Gamma(\nu_2+1)}.\label{MM46}
\eeq
The gap probability \eqref{ker133}
is given by
\beq
E_n^M(0;J)=\exp\left\{\int_{0}^s\chi_0(t)\frac{dt}{t}\right\} \label{MM47}
\eeq
and its expansion at $s=0$ has the form
\begin{align}
&E_n^M(0;J)=1+\al_0{s^{\nu_1+1}}\left(\frac{1}{\nu_1+1}+\frac{2+2\nu_1+\nu_1\nu_2+n(2\nu_2-\nu_1)}
{(\nu_1+2)^2(\nu_2+1)(1+\nu_1-\nu_2)}s+O(s^2)\right)\nonumber\\
&+\beta_0 s^{\nu_2+1}\left(\frac{1}{\nu_2+1}+\frac{2+2\nu_2+\nu_1\nu_2+n(2\nu_1-\nu_2)}
{(\nu_1+1)(\nu_2+2)^2(1+\nu_2-\nu_1)}s+O(s^2)\right)\nonumber\\
&-\al_0\beta_0 s^{\nu_1+\nu_2+3}\frac{(n-1)(\nu_1-\nu_2)^2}
{(\nu_1+1)^2(\nu_2+1)^2(\nu_1+2)^2(\nu_2+2)^2}(1+O(s))+
O_{\nu_1+1,\nu_2+1}. \label{MM48}
\end{align}
\section{
Acknowledgments}
We would like to thank V. Bazhanov and  J.R. Ipsen for useful discussions
and
N. Witte for careful reading of the manuscript and his comments.
We acknowledge support by the Australian Research Council through
grant DP140102613 (PJF, VVM) and the ARC Centre of Excellence for
Mathematical and Statistical Frontiers (PJF).

\appendixtitleon
\begin{appendices}
\numberwithin{equation}{section}

\renewcommand{\thesection}{} 
\section{}
\renewcommand{\thesection}{\Alph{section}}

In this appendix we give definitions for the generalized hypergeometric function
and for the Meijer G-function and discuss some of their properties. We follow
notations of \cite{Luke69}.

The generalized hypergeometric function is defined by a power series
\beq
\hypergeometric{p}{q}{a_1,\ldots,a_p}{b_1,\ldots,b_q}{z}=
{{\mathlarger{\ds\sum}_{n=0}^\infty}}\>
\frac{\ds\prod_{i=1}^p(a_i)_n}{\ds\prod_{j=1}^q(b_j)_n}\>\frac{z^n}{n!}.\label{A1}
\eeq
where the Pochhammer symbol is defined by
\beq
(a)_n=\frac{\Gamma(a+n)}{\Gamma(a)}=\prod_{k=0}^{n-1}(a+k), \quad n\geq0.\label{A2}
\eeq
We assume that $z$ is chosen in the region of convergence of \eqref{A1}. This region can
be extended by a contour integral representation like for the Meijer G-function below.

The Meijer G-function is given by a contour integral
\beq
\MeijerG{m}{n}{p}{q}{a_1,\ldots,a_p}{b_1,\ldots,b_q}{z}=
\frac{1}{2\pi i}\int_{L}
\frac{\ds \prod_{i=1}^m \Gamma(b_i-u)\prod_{i=1}^n\Gamma(1-a_i+u)}
{\ds\prod_{i=n+1}^p\Gamma(a_i-u)\prod_{i=m+1}^q\Gamma(1-b_i+u)}
\>z^u du, \label{A3}
\eeq
where $m,n,p,q$ are integers such that
$0\leq m \leq q$, $0\leq n\leq p$
and no pole of $\Gamma(b_j-u)$, $j=1,\ldots,m$ coincides with any pole of $\Gamma(1-a_k+u)$,
$k=1,\ldots,n$.

The contour $L$  runs
from $-i\infty$ to $+i\infty$ separating the poles of
$\Gamma(b_j-u)$, $j=1,\ldots,m$ on the right and $\Gamma(1-a_k+u)$,
$k=1,\ldots,n$ on the left. It can also be a loop starting and ending at $+\infty$ and encircling
poles of $\Gamma(b_j-u)$ for $p<q$ or a loop starting and ending at $-\infty$ and encircling
poles of $\Gamma(1-a_k+u)$ for $p>q$.

The Meijer G-function satisfies the differential equation
\beq
\left[(-1)^{p-m-n}\>z\prod_{j=1}^p(\delta_z-a_j+1)-\prod_{j=1}^q(\delta_z-b_j)\right]
\MeijerG{m}{n}{p}{q}{a_1,\ldots,a_p}{b_1,\ldots,b_q}{z}=0. \label{A4}
\eeq
Let us set $m=1$, $n=0$, $b_1=0$ and assume that $b_j\notin\mathbb{Z}$, $j=2,\ldots,q+1$  and $p\leq q+1$.
Then we can evaluate the integral in \eqref{A3} over the loop starting and ending at $\infty$
and encircling poles $\Gamma(-u)$. The sum of the residues gives the generalized hypergeometric function
and we get the relation
\begin{align}
\hypergeometric{p}{q}{a_1,\ldots,a_p}{b_1,\ldots,b_q}{z}=
\prod_{i=1}^p\Gamma(1-a_i)\prod_{j=1}^q\Gamma(b_j)\>
\MeijerG{1}{0}{p}{q+1}{1-a_1,\ldots,1-a_p}{0,1-b_1,\ldots,1-b_q}{(-1)^{p+1}z}.\label{A5}
\end{align}
If any of $a_i$, $i=1,\ldots,p$ is equal to a negative integer $-n$, the hypergeometric
series truncates and we get a polynomial of the degree $n$. In particular,
setting $p=1$, $q=M$ in \eqref{A5} and comparing with \eqref{ker2} we obtain
a representation of polynomials $P_n(x)$ in terms of the generalized hypergeometric function $
\fm{1}{M}$
\begin{align}
P_n(x)=(-1)^n\prod_{j=1}^M(\nu_j+1)_n\>\> \fm{1}{M}
\left.\left(\begin{array}{c}-n \\1+\nu_1,\ldots,1+\nu_M\end{array}\right|x\right).\label{A6}
\end{align}

\end{appendices}

\bibliographystyle{plain}

\begin{thebibliography}{10}

\bibitem{St14}
E.~Strahov, ``Differential equations for singular values of products of
  {G}inibre random matrices,'' {\em J. Phys. A} {\bfseries 47} no.~32, (2014)
  325203, 27.

\bibitem{Forrester2010}
P.~J. Forrester, {\em Log-gases and random matrices}, vol.~34 of {\em London
  Mathematical Society Monographs Series}.
\newblock Princeton University Press, Princeton, NJ, 2010.

\bibitem{Ka15}
M.~Katori, {\em Bessel processes, {S}chramm-{L}oewner evolution, and the
  {D}yson model}, vol.~11 of {\em SpringerBriefs in Mathematical Physics}.
\newblock Springer, [Singapore], 2015.

\bibitem{IIKS90}
A.~R. Its, A.~G. Izergin, V.~E. Korepin, and N.~A. Slavnov, ``Differential
  equations for quantum correlation functions,''
  \href{http://dx.doi.org/10.1142/S0217979290000504}{{\em Internat. J. Modern
  Phys. B} {\bfseries 4} no.~5, (1990) 1003--1037}.

\bibitem{TW94c}
C.~A. Tracy and H.~Widom, ``Fredholm determinants, differential equations and
  matrix models,'' {\em Comm. Math. Phys.} {\bfseries 163} no.~1, (1994)
  33--72.

\bibitem{AvM07}
M.~Adler and P.~van Moerbeke, ``PDEs for the Gaussian ensemble with external source and
the Pearcey distribution", Comm. Pure Appl. Math. {\bfseries 60}, (2007) 1261--1292

\bibitem{Bo09}
F.~Bornemann, \emph{On the numerical evaluation of {F}redholm determinants}, Math.
  Comp. \textbf{79}, (2010) 871--915.

\bibitem{BC13}
M.~Bertola and M.~Cafasso, \emph{The gap probabilities of the tacnode, Pearcey and
Airy processes, their mutual relationship and evaluation},
Random Matrices: Theory Appl. {\bfseries 02}, (2013) 1350003  [18 pages].


\bibitem{JMMS80}
M.~Jimbo, T.~Miwa, Y.~M\^ori, and M.~Sato, ``Density matrix of an impenetrable
  {B}ose gas and the fifth {P}ainlev\'e transcendent,''
  \href{http://dx.doi.org/10.1016/0167-2789(80)90006-8}{{\em Phys. D}
  {\bfseries 1} no.~1, (1980) 80--158}.

\bibitem{McCoy76}
T.~T. Wu, B.~M. McCoy, C.~A. Tracy, and E.~Barouch, ``Spin-spin correlation
  functions for the two-dimensional {I}sing model: exact theory in the scaling
  region,'' {\em Phys. Rev. B} {\bfseries 13} (1976) 316--374.

\bibitem{Zam94}
{\relax Al}.~B. Zamolodchikov, ``Painleve III and $2$d polymers,'' {\em Nuclear
  Phys. B} {\bfseries 432} no.~3, (1994) 427--456.

\bibitem{BM06}
V.~V. Bazhanov and V.~V. Mangazeev, ``The eight-vertex model and {P}ainlev\'e
  {VI},'' {\em J. Phys. A} {\bfseries 39} no.~39, (2006) 12235--12243.

\bibitem{BH98}
E.~Br\'ezin and S.~Hikami, ``Level spacing of random matrices in an external
  source,'' \href{http://dx.doi.org/10.1103/PhysRevE.58.7176}{{\em Phys. Rev. E
  (3)} {\bfseries 58} no.~6, part A, (1998) 7176--7185}.

\bibitem{TW06}
C.~A. Tracy and H.~Widom, ``The {P}earcey process,''
  \href{http://dx.doi.org/10.1007/s00220-005-1506-3}{{\em Comm. Math. Phys.}
  {\bfseries 263} no.~2, (2006) 381--400}.

\bibitem{BK06}
P.~M. Bleher and A.~B.~J. Kuijlaars, ``Large {$n$} limit of {G}aussian random
  matrices with external source. {III}. {D}ouble scaling limit,''
  \href{http://dx.doi.org/10.1007/s00220-006-0159-1}{{\em Comm. Math. Phys.}
  {\bfseries 270} no.~2, (2007) 481--517}.

\bibitem{KZ14}
A.~B.~J. Kuijlaars and L.~Zhang, ``Singular values of products of {G}inibre
  random matrices, multiple orthogonal polynomials and hard edge scaling
  limits,'' \href{http://dx.doi.org/10.1007/s00220-014-2064-3}{{\em Comm. Math.
  Phys.} {\bfseries 332} no.~2, (2014) 759--781}.



\bibitem{WF17}
N.~Witte and P.~Forrester, ``Singular values of products of Ginibre random
  matrices,'' \href{http://dx.doi.org/10.1111/sapm.12147}{{\em Studies in
  Applied Mathematics} {\bfseries 138} no.~2, (2017) 135--184}.

\bibitem{CGS16}
T.~Claeys, M.~Girotti and D.~Stivigny,
``Large gap asymptotics at the hard edge for product random matrices and Muttalib-Borodin ensembles",
arXiv:1612.01916

\bibitem{TW94b}
C.~A. Tracy and H.~Widom, ``Level spacing distributions and the {B}essel
  kernel,'' {\em Comm. Math. Phys.} {\bfseries 161} no.~2, (1994) 289--309.

\bibitem{Fo93c}
P.~J. Forrester, ``The spectrum edge of random matrix ensembles,''
  \href{http://dx.doi.org/10.1016/0550-3213(93)90126-A}{{\em Nuclear Phys. B}
  {\bfseries 402} no.~3, (1993) 709--728}.

  \bibitem{AvM95}
M.~Adler and P.~van Moerbeke, ``Matrix integrals, {Toda} symmetries,
  {Virasora} constraints and orthogonal polynomials", Duke Math. Journal
  \textbf{80}, (1995)  863--911.

  \bibitem{FW01a}
P.~J.~Forrester and N.S.~Witte, ``Application of the $\tau$-function theory of {Painlev\'e}
  equations to random matrices: {PV}, {PIII}, the {LUE}, {JUE} and {CUE}",
  Commun. Pure Appl. Math. \textbf{55}, (2002) 679--727.

 \bibitem{FO10}
  P.J. Forrester and C.M. Ormerod, ``Differential equations for deformed {L}aguerre polynomials",
J. Approx. Th. {\bf 162}, (2010) 653--677.

\bibitem{BC09}
E.~Basor and Y.~Chen, ``{Painlev\'e} V and the distribution of a discontinuous linear statistic in the
{L}aguerre unitary ensembles" {\em J. Phys. A} {\bf 42}, (2009) 035203.

\bibitem{AIK13}
G.~{Akemann}, J.~R. {Ipsen}, and M.~{Kieburg}, ``{Products of rectangular
  random matrices: Singular values and progressive scattering},'' {\em
  Phys.Rev. E} {\bfseries 88} no.~5, (2013) 052118.

\bibitem{AKW13}
G.~Akemann, M.~Kieburg, and L.~Wei, ``Singular value correlation functions for
  products of {W}ishart random matrices,''
  \href{http://dx.doi.org/10.1088/1751-8113/46/27/275205}{{\em J. Phys. A}
  {\bfseries 46} no.~27, (2013) 275205, 22}.

   \bibitem{AI15}  G.~Akemann and J.R. Ipsen, ``Recent exact and asymptotic results for products
of independent random matrices" Acta Physica Polonica B {\bf 46},  (2015) 1747--1784.

\bibitem{BS13}
R.~Beals and J.~Szmigielski, ``Meijer {$G$}-functions: a gentle introduction,''
  \href{http://dx.doi.org/10.1090/noti1016}{{\em Notices Amer. Math. Soc.}
  {\bfseries 60} no.~7, (2013) 866--872}.

\bibitem{Luke69}
Y.~L. Luke, {\em The special functions and their approximations, {V}ol. {I}}.
\newblock Mathematics in Science and Engineering, Vol. 53. Academic Press, New
  York-London, 1969.

\bibitem{BGS14}
M.~Bertola,   M.~Gekhtman and  J.~Szmigielski, ``Cauchy-Laguerre two-matrix
model and the Meijer-G random point field", Commun. Math. Phys. \textbf{326}, (2014) 111--144.

\bibitem{FK16}
P.J.~Forrester and M.~Kieburg, ``Relating the Bures measure to the Cauchy two-matrix model",
Commun. Math. Phys. \textbf{342}, (2016) 151--187.

\bibitem{Fo14}
P.J.~Forrester, ``Eigenvalue statistics for product complex Wishart matrices", 
J. Phys. A \textbf{47}, (2014) 345202.

\bibitem{KKS16}
M.~Kieburg, A.~B.~J.~Kuijlaars and D.~Stivigny, ``Singular value statistics of 
matrix products with truncated unitary matrices",
Int. Math. Research Notices \textbf{2016}, (2016)  3392-3424.


\bibitem{Bor99}
A.~Borodin, ``Biorthogonal ensembles,'' {\em Nuclear Phys. B} {\bfseries 536}
  no.~3, (1999) 704--732.

\bibitem{Zh17} L.~Zhang, ``On Wright's generalized Bessel kernel", Physica D \textbf{340}, (2017) 27--39.

\bibitem{Mu95}
K.~A.~Muttalib, ``Random matrix models with additional interactions", J.
  Phys. A \textbf{28}, (1995) L159--L164.

\bibitem{FW15}
P.~J.~Forrester and D.~Wang, ``Muttalib--Borodin ensembles in random matrix
  theory --- realisations and correlation functions", arXiv:1502.07147.


  \bibitem{KNS12}  H. Kawakami, A. Nakamura, and H. Sakai.
  ``Toward a classification of four-dimensional Painlevé-type equations". In
A Dzhamay, K Maruno, and VU Pierce, editors, Algebraic and geometric aspects
of integrable systems and random matrices, volume 593 of Contemporary Mathematics,
pages 143--161. Amer. Math. Soc., Providence, RI, 2013.


\end{thebibliography}

\providecommand{\bysame}{\leavevmode\hbox to3em{\hrulefill}\thinspace}
\providecommand{\MR}{\relax\ifhmode\unskip\space\fi MR }
\providecommand{\MRhref}[2]{%
  \href{http://www.ams.org/mathscinet-getitem?mr=#1}{#2}
}
\providecommand{\href}[2]{#2}

\end{document}